\def\eqref#1{equation~\ref{#1}}
\def\1{\bm{1}}
\def\vo{{\bm{o}}}
\DeclareMathAlphabet{\mathsfit}{\encodingdefault}{\sfdefault}{m}{sl}
\SetMathAlphabet{\mathsfit}{bold}{\encodingdefault}{\sfdefault}{bx}{n}
\DeclareMathOperator*{\argmax}{arg\,max}
\DeclareMathOperator*{\argmin}{arg\,min}
\newtheorem{definition}{Definition}
\newtheorem{proposition}{Proposition}
\newcommand{\defeq}{\vcentcolon=}
\renewcommand{\defeq}{\coloneqq}
\def\algcomment#1{\textcolor[rgb]{0,0,0.6}{\# #1}}
\def\down#1{\textcolor[rgb]{0,0.6,0}{$\downarrow #1 \%$}}
\newlength{\oldtextfloatsep}
\def\Y{{\mathcal{Y}}}
\def\ovo{{\overline{V}}}
\def\vo{{V}}
\definecolor{lightred}{RGB}{255,153,51}
\definecolor{lightblue}{RGB}{51,153,255}
\definecolor{lightgreen}{RGB}{144,238,144}
\def\checkmark{\tikz\fill[scale=0.4](0,.35) -- (.25,0) -- (1,.7) -- (.25,.15) -- cycle;}
\theoremstyle{plain}
\begin{document}

\twocolumn[
\icmltitle{On Collective Robustness of Bagging Against Data Poisoning}



\icmlsetsymbol{equal}{*}

\begin{icmlauthorlist}
\icmlauthor{Ruoxin Chen}{sjtu}
\icmlauthor{Zenan Li}{sjtu}
\icmlauthor{Jie Li}{sjtu}
\icmlauthor{Chentao Wu}{sjtu}
\icmlauthor{Junchi Yan}{sjtu}
\end{icmlauthorlist}

\icmlaffiliation{sjtu}{Department of Computer Science and Engineering and MoE Key Lab of Artificial Intelligence, Shanghai Jiao Tong University, Shanghai, China. Jie Li and Junchi Yan are also with Shanghai AI Laboratory, Shanghai, China}
\icmlcorrespondingauthor{Jie Li}{\quad lijiecs@sjtu.edu.cn}

\icmlkeywords{Machine Learning, ICML}

\vskip 0.3in
]
\printAffiliationsAndNotice{} 



\begin{abstract}
Bootstrap aggregating (bagging) is an effective ensemble  protocol, which is believed can enhance robustness by its majority voting mechanism. Recent works further prove the sample-wise robustness certificates for certain forms of bagging (e.g. partition aggregation). Beyond these particular forms, in this paper, \emph{we propose the first collective certification for general bagging to compute the tight robustness against the global poisoning attack}. Specifically, we compute the maximum number of simultaneously changed predictions via solving a binary integer linear programming (BILP) problem. Then we analyze the robustness of vanilla bagging and give the upper bound of the tolerable poison budget. Based on this analysis, \emph{we propose hash bagging} to improve the robustness of vanilla bagging almost for free. This is achieved by modifying the random subsampling in vanilla bagging to a hash-based deterministic subsampling, as a way of controlling the influence scope for each poisoning sample universally. Our extensive experiments show the notable advantage in terms of applicability and robustness. Our code is available at \url{https://github.com/Emiyalzn/ICML22-CRB}.
\end{abstract}

\section{Introduction}
Bagging~\cite{breiman1996bagging}, refers to an ensemble learning protocol that \emph{trains sub-classifiers on the subsampled sub-trainsets and makes predictions by majority voting}, which is a commonly used method to avoid overfitting. Recent works~\cite{biggio2011bagging,levine2021deep, jia2021intrinsic} show its superior certified robustness in defending data poisoning attacks. Moreover, compared to other certified defenses, bagging is a natural plug-and-play method with a high compatibility with various model architectures and training algorithms, which suggests its great potential.

Some works~\cite{levine2021deep, jia2021intrinsic,wang2022improved} have proved the sample-wise robustness certificates against the sample-wise attack (the attacker aims to corrupt the prediction for the target data) for certain forms of bagging. However, we notice that, \emph{there is a white space in the collective robustness certificates against the global poisoning attack} (the attacker attempts to maximize the number of simultaneously changed predictions when predicting the testset), although the global attack is more general and critical than the sample-wise attack for: I) the sample-wise attack is only a variant of the global poisoning attack when the testset size is one; II) unlike adversarial examples~\cite{RN6} which is sample-wise, data poisoning attacks are naturally global, where the poisoned trainset has a global influence on all the predictions; III) the global attack is believed more harmful than the sample-wise attack. Current works~\cite{levine2021deep, jia2021intrinsic} simply \emph{count the number of robust predictions guaranteed by the sample-wise certification}, as a lower bound of the collective robustness. However, this lower bound often overly under-estimates the actual value. We aim to provide a formal collective certification for general bagging, to fill the gap in analyzing the certified robustness of bagging.

In this paper, \emph{we take the first step towards the collective certification for general bagging.} Our idea is to formulate a binary integer linear programming (BILP) problem, of which objective function is to maximize the number of simultaneously changed predictions w.r.t. the given poison budget. The certified collective robustness equals the testset size minus the computed objective value. To reduce the cost of solving the BILP problem, a decomposition strategy is devised, which allows us to compute a collective robustness lower bound within a linear time of testset size.

Moreover, we analyze the certified robustness of vanilla bagging, demonstrating that it is not an ideal certified defense by deriving the upper bound of its tolerable poison budget. To address this issue, \emph{we propose hash bagging to improve the robustness of vanilla bagging almost for free.} Specifically, we modify the random subsampling in vanilla bagging to hash-based subsampling, to restrict the influence scope of each training sample within a bounded number of sub-trainsets deterministically. We compare hash bagging to vanilla bagging to show its superior certified robustness and the comparable accuracy. Furthermore, compared to prior elaborately designed bagging-based defenses~\cite{levine2021deep,jia2021intrinsic}, hash bagging is a more general and practical defense method, which covers almost all forms of bagging. 
\textbf{The main contributions are:}

1) For the first time to our best knowledge, we derive the collective certification for general bagging. We accelerate the solving process by decomposition. Remarkably, our computed certified collective robustness is theoretically better than that of the sample-wise certifications.

2) We derive an upper bound of tolerable poison budget for bagging. Our derived bound is tight if we only have access to the sub-trainsets and sub-classifier predictions.

3) We propose \emph{hash bagging} as a defense technique to improve the robustness for vanilla bagging almost for free, in the sense of neither introducing additional constraints on the hyper-parameters nor restricting the forms of bagging.

4) We evaluate our two techniques empirically and quantitatively on four datasets: collective certification and hash bagging. Results show: i) collective certification can yield a much stronger robustness certificate. ii) Hash bagging effectively improves vanilla bagging on the certified robustness.

\vspace{-10pt}
\section{Related Works}
\vspace{-5pt}

Both machine-learning classifiers (e.g. Bayes and SVM) and neural-network classifiers are vulnerable to data poisoning ~\cite{li2020backdoor,li2022few,nelson2008exploiting,biggio2012poisoning,xiao2015is,yao2019latent,zhang2020black,liu2019abs}. Since most heuristic defenses~\cite{chen2019deepinspect,gao2019strip,tran2018spectral,liu2019abs,qiao2019defending} have been broken by the new attacks~\cite{koh2018stronger,tramer2020on}, developing certified defenses is critical.

\vspace{-2pt}

\textbf{Certified defenses against data poisoning.}
Certified defenses~\cite{steinhardt2017certified,Wang2020OnCR} include random flipping~\cite{Rosenfeld2019CertifiedRT}, randomized smoothing~\cite{weber2020rab}, differential privacy~\cite{ma2019data} and bagging-based defenses~\cite{levine2021deep,jia2021intrinsic}. Currently, only the defenses~\cite{ma2019data,jia2022rnn,jia2021intrinsic,levine2021deep} are designed for the general data poisoning attack (the attacker can arbitrarily insert/delete/modify a bounded number of samples). However, their practicalities suffer from various limitations. \cite{ma2019data} is limited to the training algorithms with the differential privacy guarantee. ~\cite{jia2022rnn} certify the robustness for the machine-learning classifiers kNN/rNN (Nearest Neighbors), which might be unable to scale to the large tasks. Currently, only two bagging variants~\cite{jia2021intrinsic,levine2021deep} have demonstrated the high compatibility w.r.t. the model architecture and the training algorithm, with the state-of-the-art certified robustness. Their success highlights the potential of bagging, which motivates us to study the robustness for general bagging.

\vspace{-2pt}

\textbf{Robustness certifications against data poisoning.}
Current robustness certifications~\cite{Wang2020OnCR,ma2019data,jia2021intrinsic,jia2022rnn,levine2021deep} against data poisoning are mainly focusing on the sample-wise robustness, which evaluates the robustness against the sample-wise attack. However, the collective robustness certificates are rarely studied, which might be a more practical metric because the poisoning attack naturally is a kind of global attack that can affect all the predictions. To our best knowledge, only~\cite{jia2022rnn} considers the collective robustness against global poisoning attack. Specifically, it gives the collective certification for a machine-learning classifier rNN, but the certification is based on the unique geometric property of rNN.\footnote{~\cite{schuchardt2021collective} derive the collective certificates for GNN. Their collective certificates are focusing on the adversarial examples, instead of data poisoning.}

\begin{table}[tb!]
\centering
\caption{Notations.}
\label{tab:notation}
    \begin{adjustbox}{width=0.98\linewidth}
    \begin{tabular}{ c | c }
    \toprule
     Notation & Description \\
     \hline
     $K$ & The sub-trainset size.\\
     \hline
     $G$ & The number of sub-trainsets.\\
     \hline
     $N$ & The trainset size.\\
     \hline
     $\mathcal{D}_{train}=\{s_i\}_{i=0}^{N-1}$ & The trainset consisting of $N$ training samples $\{s_i\}_{i=0}^{N-1}$.\\
     \hline
     $\mathcal{D}_{test}=\{x_j\}_{j=0}^{M-1}$ & The trainset consisting of $M$ testing samples $\{x_j\}_{j=0}^{M-1}$.\\
     \hline
     $y \in \Y$ & $y$ and $\Y$  denote the class and the output space respectively.\\
     \hline
     $\mathcal{D}_{g}$ & The $g$-th sub-trainset.\\
     \hline
     $f_{g}(\cdot)$ & The $g$-th sub-classifier in bagging.\\
     \hline
     $g(\cdot)$ & The ensemble classifier consisting of all the sub-classifiers.\\
     \hline
     $\vo_{x}(y)$ & The number of votes for the class $y \in \Y$ when predicting $x$.\\
     \hline
     $\rm{Hash}(\alpha)$ & The hash value of $\alpha$.\\
    \bottomrule
    \end{tabular}   
    \end{adjustbox}
\end{table}

\vspace{-10pt}
\section{Collective Certification to Bagging}
\vspace{-5pt}
In this section, first we formally define vanilla bagging and the threat model, as the basement of the collective certification. Then we propose the collective certification, and analyze the upper bound of the tolerable poison budget. All our
notations are summarized in Table ~\ref{tab:notation}.

\begin{definition}[Vanilla bagging]
Given a trainset $\mathcal{D}_{train}=\{s_i\}_{i=0}^{N-1}$ where $s_i$ refers to the $i$-th training sample, following ~\cite{breiman1996bagging,jia2021intrinsic, levine2021deep}, vanilla bagging can be summarized into three steps: \\
i) Subsampling: construct $G$ sub-trainsets $\mathcal{D}_g$ (of size $K$) ($g=0,\ldots,G-1$), by subsampling $K$ training samples from $\mathcal{D}_{train}$ $G$ times;\\
ii) Training: train the $g$-th sub-classifier $f_g(\cdot)$ on the sub-trainset $\mathcal{D}_g$  ($g=0,\ldots,G-1$);\\ 
iii) Prediction: the ensemble classifier (denoted by $g(x)$) makes the predictions, as follow: 
\begin{equation}
\footnotesize
    g(x) = \argmin\limits_{y} \argmax\limits_{y \in \Y} \vo_x(y) 
\end{equation}
where $\vo_x(y) \defeq \sum_{g=0}^{G-1} \mathbb{I} \{f_g(x)=y\}.$ ($\mathbb{I}\{\}$ is the indicator function) is the number of sub-classifiers that predict class $y$. $\argmin_{y}$ means that, $g(x)$ predicts \textbf{the majority class of the smallest index} if there exist multiple majority classes.
\end{definition}

\vspace{-10pt}
\subsection{Threat Model}\label{sec:threat} 
\vspace{-5pt}
We assume that \emph{the sub-classifiers are extremely vulnerable to the changes in their sub-trainsets}, since our certification is agnostic towards the sub-classifier architecture. In another word, the attacker is considered to fully control the sub-classifier $f_g$ once the sub-trainset $\mathcal{D}_g$ is changed.\\
\textbf{Attacker capability:} the attacker is allowed to insert $r_{\rm{ins}}$ samples, delete  $r_{\rm{del}}$ samples, and modify $r_{\rm{mod}}$ samples. \\
\textbf{Attacker objective:} for the \emph{sample-wise attack} (corresponding to the sample-wise certification), the attacker aims to change the prediction for the target data. For the \emph{global poisoning attack} (corresponding to the collective certification), the attacker aims to maximize the number of simultaneously changed predictions when predicting the testset.


\vspace{-10pt}
\subsection{$\mathbf{(P1)}$: Collective Certification of Vanilla Bagging}
Given the sub-trainsets and class distribution of each testing sample, we can compute the collective robustness for vanilla bagging, as shown in Prop.~\ref{pro:bagging}.
\begin{proposition}[Certified collective robustness of vanilla bagging]
\label{pro:bagging}
For testset $\mathcal{D}_{test}=\{x_j\}_{j=0}^{M-1}$, we denote $\hat{y}_j= g(x_j)$ ($j=0,\ldots,M-1$) the original ensemble prediction, and $\mathcal{S}_i=\{g \mid s_i \in \mathcal{D}_g\}$ the set of the indices of the sub-trainsets that contain $s_i$ (the $i$-th training sample). Then, the maximum number of simultaneously changed predictions (denoted by $M_{\rm{ATK}}$) under $r_{\rm{mod}}$ adversarial modifications, is computed by $\mathbf{(P1)}$:
{
\small
\begin{align} 
    &\mathbf{(P1):} \quad M_{\rm{ATK}}= \max_{P_0,\ldots,P_{N-1}} \; \sum_{x_j \in \mathcal{D}_{test}} \mathbb{I} \left\{ \ovo_{x_j}(\hat{y}_j) < \right. \nonumber\\ 
    & \left. \qquad \qquad \qquad \qquad \max_{y \neq \hat{y}_j} \left[\ovo_{x_j}(y)+ \frac{1}{2}\mathbb{I}\{y <\hat{y}_j\}\right] \right\} \label{eq:obj}\\
     & s.t. \quad [P_0, P_1, \ldots, P_{N-1}] \in \{0,1\}^N \label{eq:c0}\\
     & \qquad \sum_{i=0}^{N-1} P_i \leq r_{\rm{mod}} \label{eq:c1}\\
     & \ovo_{x_j}(\hat{y}_j) = \underbrace{\vo_{x_j}(\hat{y}_j)}_{\textbf{Original votes}}- \underbrace{\sum_{g=0}^{G-1} \mathbb{I}\{g \in \bigcup_{\forall i,P_i=1}\mathcal{S}_i\} \mathbb{I}\{ f_g(x_j)=\hat{y}_j \}  }_{\textbf{Influenced votes}} \nonumber \\
     & \qquad  \qquad \forall x_j \in \mathcal{D}_{test}, \; \hat{y}_j=g(x_j) \; \label{eq:c2}\\
     & \ovo_{x_j}(y)= \underbrace{\vo_{x_j}(y)}_{\textbf{Original votes}}+ \underbrace{\sum_{g=0}^{G-1} \mathbb{I}\{ g \in \bigcup_{\forall i, P_i=1}\mathcal{S}_i\} \mathbb{I}\{ f_g(x_j) \neq y \} }_{\textbf{Influenced votes}} \nonumber \\
     & \qquad \qquad \forall x_j \in \mathcal{D}_{test},\; \forall y \in \Y, y\neq \hat{y}_j \label{eq:c3} 
\end{align}
}
The certified collective robustness is $M-M_{\rm{ATK}}$.
\end{proposition}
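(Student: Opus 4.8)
The plan is to show that $(P1)$ computes exactly, under the stated threat model, the largest number of test predictions that an adversary making at most $r_{\rm mod}$ modifications can simultaneously flip; the certificate $M-M_{\rm ATK}$ then follows at once. First I would translate an adversarial strategy into the binary variables: modifying a set of training samples is encoded by setting $P_i=1$ precisely when $s_i$ is modified, so feasibility of the attack is exactly $[P_0,\dots,P_{N-1}]\in\{0,1\}^N$ together with $\sum_i P_i\le r_{\rm mod}$. Because only \emph{modifications} are considered (not insertions or deletions), the membership structure $\mathcal{S}_i$ is unchanged, and by the threat-model assumption that a corrupted sub-trainset yields full control of its sub-classifier, the set of sub-classifiers the adversary may rewrite arbitrarily is exactly $C(P)\defeq\bigcup_{i:P_i=1}\mathcal{S}_i$, whereas every $f_g$ with $g\notin C(P)$ keeps its original predictions on the entire testset.

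The key structural fact I would isolate is a \emph{decoupling}. Once $P$ (hence $C(P)$) is fixed, each corrupted $f_g$ is an arbitrary function, so its outputs $f_g(x_0),\dots,f_g(x_{M-1})$ can be chosen independently across test points; consequently the adversary optimizes each test point separately, and the number of predictions flipped for a given $P$ equals $\sum_j \mathbb{I}\{x_j \text{ is flippable under } C(P)\}$. It then remains to characterize flippability of a single $x_j$. For a candidate target class $y\neq\hat{y}_j$, the strongest move is to set every classifier in $C(P)$ to output $y$: this simultaneously removes all corrupted votes for $\hat{y}_j$ and redirects all corrupted non-$y$ votes to $y$, producing exactly the minimized count $\ovo_{x_j}(\hat{y}_j)$ and the maximized count $\ovo_{x_j}(y)$ appearing in the constraints of $(P1)$. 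I would emphasize that both extremal counts are realized by the \emph{same} assignment, which is why $\ovo_{x_j}(\hat{y}_j)$ can appear once (independent of $y$) while the comparison is taken against each $\ovo_{x_j}(y)$.

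Next I would handle the tie-break. Since $g(\cdot)$ returns the smallest-index majority class, a challenger with $y<\hat{y}_j$ overtakes $\hat{y}_j$ as soon as $\ovo_{x_j}(y)\ge\ovo_{x_j}(\hat{y}_j)$, whereas a challenger with $y>\hat{y}_j$ must strictly exceed it; as vote counts are integers, both cases are captured uniformly by $\ovo_{x_j}(\hat{y}_j)<\ovo_{x_j}(y)+\tfrac12\mathbb{I}\{y<\hat{y}_j\}$, which is the indicator in the objective. I would then pin down flippability with two inequalities. Soundness: under \emph{any} adversarial output assignment the realized counts satisfy $V'_{x_j}(\hat{y}_j)\ge\ovo_{x_j}(\hat{y}_j)$ and $V'_{x_j}(y)\le\ovo_{x_j}(y)$, so if some $y$ actually wins then the displayed strict inequality holds and the indicator is $1$. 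Achievability: if the indicator is $1$ with witness $y$, setting all of $C(P)$ to vote $y$ realizes the flip, and by the decoupling this can be done for every flippable test point at once. Hence for fixed $P$ the number of simultaneous flips equals the objective sum, and maximizing over feasible $P$ yields $M_{\rm ATK}$, so the certified collective robustness is the complement $M-M_{\rm ATK}$.

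The main obstacle I anticipate is making the decoupling airtight: one must cleanly separate the single, testset-wide choice of which samples to poison (the genuine coupling, captured by the outer maximization over $P$) from the per-test-point, per-classifier output choices (which are unconstrained and therefore decouple), and argue jointly that within one test point the minimization of $\ovo_{x_j}(\hat{y}_j)$ and the maximization of $\ovo_{x_j}(y)$ are attainable by one assignment. The tie-break bookkeeping via the $\tfrac12$ term is a secondary but easy-to-misstate detail that I would verify case-by-case on the sign of $y-\hat{y}_j$.
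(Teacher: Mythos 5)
Your proposal is correct and follows essentially the same route as the paper's own proof: encoding the attack as the binary vector $P$, invoking the full-control threat model to identify the influenced sub-classifiers $\bigcup_{i:P_i=1}\mathcal{S}_i$, characterizing per-test-point flippability via the extremal vote counts with the $\tfrac12$ tie-break, and establishing tightness by the soundness/achievability pair (which the paper phrases as a two-sided contradiction). Your explicit treatment of the cross-test-point decoupling and of the fact that one assignment (all corrupted classifiers voting the witness class $y$) simultaneously realizes both extremal counts makes rigorous a step the paper leaves implicit, but it is a refinement of the same argument rather than a different one.
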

We explain each equation. \textbf{Eq. (\ref{eq:obj}):} the objective is to maximize the number of simultaneously changed predictions. Note that a prediction is changed if there exists another class with more votes (or with the same number of votes but of the smaller index). \textbf{Eq. (\ref{eq:c0}):} $[P_0, \ldots, P_{N-1}]$ are the binary variables that represent the poisoning attack, where $P_i=1$ means that the attacker modifies $s_i$. \textbf{Eq. (\ref{eq:c1}):} the number of modifications is bounded within $r_{\rm{mod}}$. \textbf{Eq. (\ref{eq:c2}):} $\ovo_{x_j}(\hat{y}_j)$, the minimum number of votes for class $\hat{y}_j$ (after being attacked), equals to the original value minus the number of the influenced sub-classifiers whose original predictions are $\hat{y}_j$. \textbf{Eq. (\ref{eq:c3}):} $\ovo_{x_i}(y)$ ($y \neq y_i$), the maximum number of votes for class $y: y \neq \hat{y}_j$ (after being attacked), equals to the original value plus the number of influenced sub-classifiers whose original predictions are not $y$, because that, under our threat model, the attacker is allowed to arbitrarily manipulate the predictions of those influenced sub-classifiers.

\subsection{Remarks on Proposition~\ref{pro:bagging}}
We give our discussion and the remark marked with $*$ mean that the property is undesirable needing improvement.

    \textbf{1)} \textbf{Tightness.} The collective robustness certificates computed from $\mathbf{(P1)}$ is tight. \vspace{5pt}\\ 
    \textbf{2)}  \textbf{Sample-wise certificate.} We can compute the tight sample-wise certificate for the prediction on the target data $x_{\rm{target}}$, by simply setting $\mathcal{D}_{test}= \{x_{\rm{target}} \}$. \vspace{5pt}\\
    \textbf{3)}  \textbf{Certified accuracy.} We can compute \emph{certified accuracy} (the minimum number of correct predictions after being attacked) if given the oracle labels. Specifically, we compute the certified accuracy over the testset $\mathcal{D}_{test}$, simply by modifying $\sum_{x_j \in \mathcal{D}_{test}}$ in Eq. (\ref{eq:obj}) to $\sum_{x_j \in \Omega}$, where $\Omega$ is
  $\Omega=\{ x_j \in  \mathcal{D}_{test}: g(x_j) \text{ predicts correctly}\}$. The certified accuracy is $(|\Omega|- M_{\rm{ATK}})/M$ where $|\Omega|$ refers to the cardinality of the set $\Omega$. Actually, certified accuracy measures the worst accuracy under all the possible accuracy degradation attacks within the poison budget. Our computed certified accuracy is also tight. \vspace{5pt}\\
    \textbf{4)}  \textbf{Reproducibility requirement*.}  Both subsampling and training are required to be reproducible, because certified robustness is only meaningful for deterministic predictions. Otherwise, without the reproducibility, given the same trainset and testset, the predictions might be discrete random variables for the random operations in subsampling/training, such that we may observe two different predictions for the same input if we run the whole process (bagging and prediction) twice, even without being attacked. \vspace{5pt}\\
    \textbf{5)}  \textbf{NP-hardness*.} $\mathbf{(P1)}$ is NP-hard as it can be formulated as a BILP problem. We present more details in Appendix (Section~\ref{sec:NPhard}).

\subsection{Addressing NP-hardness by Decomposition}
\label{sec:decomposition}
Decomposition~\cite{quantumDecom,RAO20082768} allows us to compute a certified collective robustness \emph{lower bound} instead of the exact value. Specifically, we first split $\mathcal{D}_{test}$ into $\Delta$-size sub-testsets (denoted by $\mathcal{D}^\mu: \mu=0,\ldots,\lceil M/\Delta \rceil-1$). Here we require the size of the last sub-testset is allowed to be less than $\Delta$. Then we compute the maximum number of simultaneously changed predictions (denoted by $M^{\mu}_{\rm{ATK}}$) for each sub-testset $\mathcal{D}^\mu$ under the given poison budget. \textbf{We output $M-\sum_\mu M^{\mu}_{\rm{ATK}}$ as a collective robustness lower bound.} Remarkably, by decomposition, the time complexity is significantly reduced from an exponential time (w.r.t. $M$) to a linear time (w.r.t. $M$), as the time complexity of solving the $\Delta$-scale sub-problem can be regarded as a constant. Generally, $\Delta$ controls a trade-off between the certified collective robustness and the computation cost: \emph{as we consider the influence of the poisoning attack more holistically (larger $\Delta$), we can obtain a tighter lower bound at a cost of much larger computation.} In particular, our collective certification is degraded to be the sample-wise certification when $\Delta=1$.

\subsection{Upper Bound of Tolerable Poison Budget}
Based on Eq. (\ref{eq:c2}), Eq. (\ref{eq:c3}) in $\mathbf{(P1)}$, we can compute the upper bound of tolerable poison budget for vanilla bagging. 
\begin{proposition}[Upper bound of tolerable poison budget] \label{pro:upper}
Given $\mathcal{S}_i= \{g \mid s_i \in \mathcal{D}_g\}$ ($i=0,\ldots,N-1$), \emph{the upper bound of the tolerable poisoned samples} (denoted by $\overline{r}$) is 
\begin{equation} \vspace{-5pt}
    \overline{r}=\min |\Pi| \; s.t.\; |\bigcup_{i \in \Pi} \mathcal{S}_i| > G/2
\end{equation}
where $\Pi$ denotes a set of indices. The upper bound of the tolerable poisoned samples equals the minimum number of training samples that can influence more than a half of sub-classifiers.
\end{proposition}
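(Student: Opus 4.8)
The plan is to prove the bound by constructing an explicit worst-case attack that uses exactly $\overline{r}$ modifications and succeeds against \emph{every} possible vote distribution, and then to argue that no smaller budget can be guaranteed to defeat bagging given only the subsampling structure $\{\mathcal{S}_i\}$. First I would let $\Pi^\ast$ attain the minimum in the definition of $\overline{r}$, so that $|\Pi^\ast| = \overline{r}$ and the set $C \defeq \bigcup_{i \in \Pi^\ast} \mathcal{S}_i$ satisfies $|C| > G/2$. By the threat model of Section~\ref{sec:threat}, modifying the $\overline{r}$ samples $\{s_i : i \in \Pi^\ast\}$ alters every sub-trainset $\mathcal{D}_g$ with $g \in C$, which hands the attacker full control of the corresponding sub-classifiers $\{f_g : g \in C\}$; this is exactly the set of influenced votes appearing in Eq.~(\ref{eq:c2}) and Eq.~(\ref{eq:c3}) when the binary variables satisfy $\bigcup_{P_i=1}\mathcal{S}_i = C$.

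Next I would show that controlling more than half of the sub-classifiers flips every prediction, independently of the original votes. Fix any test point $x_j$ with original prediction $\hat{y}_j$ and pick any target class $y \neq \hat{y}_j$ (one exists since $|\Y| \geq 2$); the attacker forces every sub-classifier in $C$ to output $y$. By Eq.~(\ref{eq:c3}), $\ovo_{x_j}(y) = \vo_{x_j}(y) + |\{g \in C : f_g(x_j) \neq y\}|$, which equals the number of classifiers outside $C$ that already voted $y$ plus $|C|$, hence $\ovo_{x_j}(y) \geq |C| > G/2$; meanwhile Eq.~(\ref{eq:c2}) gives $\ovo_{x_j}(\hat{y}_j) = \vo_{x_j}(\hat{y}_j) - |\{g \in C : f_g(x_j) = \hat{y}_j\}| \leq G - |C| < G/2$. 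Thus $\ovo_{x_j}(\hat{y}_j) < G/2 < \ovo_{x_j}(y) \leq \ovo_{x_j}(y) + \tfrac{1}{2}\mathbb{I}\{y < \hat{y}_j\}$, so the indicator in the objective Eq.~(\ref{eq:obj}) is triggered for $x_j$. Because the forced count strictly exceeds $G/2$, class $y$ is the unique majority and wins irrespective of the smallest-index tie-break rule. Since $x_j$ was arbitrary, $M_{\rm{ATK}} = M$ at budget $r_{\rm{mod}} = \overline{r}$, so no prediction is certifiably robust once the budget reaches $\overline{r}$; this establishes that $\overline{r}$ upper-bounds the tolerable poison budget.

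Finally, to see the threshold cannot be lowered using only $\{\mathcal{S}_i\}$, I would invoke the minimality of $\overline{r}$: any index set $\Pi$ with $|\Pi| < \overline{r}$ has $|\bigcup_{i \in \Pi} \mathcal{S}_i| \leq G/2$, so the attacker influences at most half of the sub-classifiers. Taking the instance in which all $G$ sub-classifiers output class $0$ (the smallest index) on every test point, any such attack leaves class $0$ with at least $G/2$ votes; being the smallest index it wins every tie, so the prediction is unchanged and $\overline{r}-1$ modifications are indeed tolerated. The main obstacle I anticipate is the vote-independence of the attack, i.e.\ verifying that majority control suffices for \emph{every} configuration of original votes and every tie-breaking situation; this is precisely why the argument is pinned to the strict inequality $|C| > G/2$ rather than $|C| \geq G/2$, since the boundary case $|C| = G/2$ can fail (the attacker could win the tie by choosing a target with smaller index than $\hat{y}_j$), and the smallest-index instance above is chosen exactly to neutralize that tie-break in the tightness direction.
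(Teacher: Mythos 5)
Your proof is correct, and its core coincides with the paper's own argument: the paper likewise poisons the minimizing index set $\Pi$, observes that the influenced set $C=\bigcup_{i\in\Pi}\mathcal{S}_i$ has $|C|>G/2$, and then bounds $\ovo_{x_j}(\hat{y}_j)\leq G-|C|<G/2$ and $\ovo_{x_j}(y)\geq |C|>G/2$ via Eq.~(\ref{eq:c2}) and Eq.~(\ref{eq:c3}), concluding that every prediction is simultaneously corrupted, hence the collective robustness is zero once $r_{\rm{mod}}\geq\overline{r}$. Your rendering of this step is in fact slightly cleaner than the paper's (which compresses the final comparison into the somewhat opaque chain $\leq G/2-G/2+1-\tfrac{1}{2}<0$), and your explicit identity $\ovo_{x_j}(y)=|C|+|\{g\notin C: f_g(x_j)=y\}|$ makes the independence from the original vote distribution transparent. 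Where you go beyond the paper is the final paragraph: the paper's appendix proof establishes only the attack direction, while the claim that the bound is tight ``if we only have access to the sub-trainsets and sub-classifier predictions'' is asserted in the contributions but never proved. Your all-class-$0$ instance supplies exactly that missing converse --- any budget below $\overline{r}$ influences at most $G/2$ sub-classifiers, and with $\hat{y}_j=0$ the smallest-index tie-break guarantees every prediction survives --- and your observation that the strict inequality $|C|>G/2$ (rather than $\geq$) is what makes the boundary case work is the right diagnosis of why this particular instance is needed.
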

The collective robustness must be zero when the poison budget $\geq \overline{r}$. We emphasize that computing $\overline{r}$ is an NP-hard max covering problem~\cite{fujishige2005submodular}. A simple way of enlarging $\overline{r}$ is to \emph{bound the influence scope for each sample} $|S_i|: i=0,\ldots, N-1$. In particular, if we bound the influence scope of each sample to be less than a constant $|S_i| \leq \Gamma: i=0,\ldots, N-1$ ($\Gamma$ is a constant), we have $\overline{r} \geq N/(2 \Gamma)$. This is the insight behind hash bagging.

\begin{figure}[t!]
  \centering
    \includegraphics[width=0.96\linewidth]{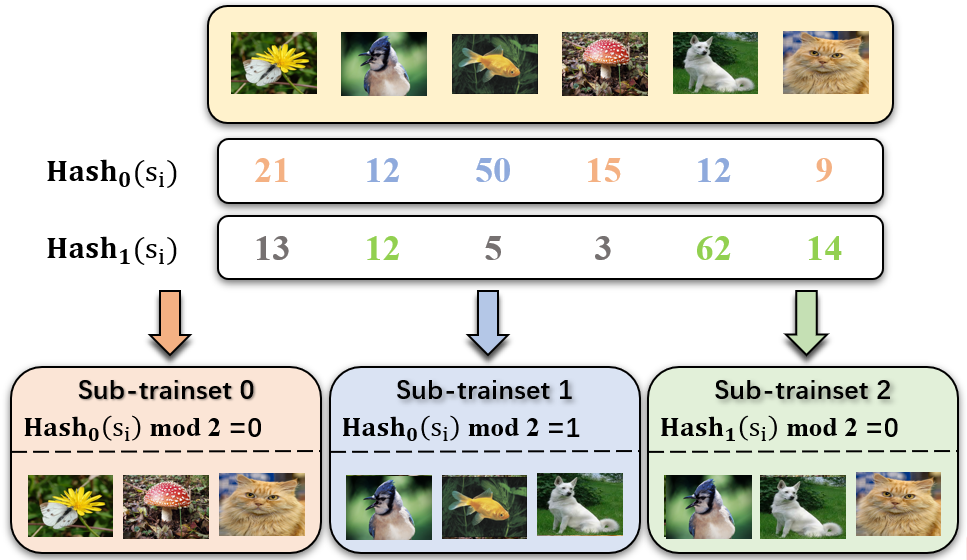}
    \caption{Hash bagging when $N=6$ (trainset size), $K=3$ (sub-trainset size), $G=3$ (number of sub-trainsets). $\hat{G}= \lfloor N/K \rfloor= 2$. By Eq. (\ref{eq:sub-trainset2}), the $0$-th sub-trainset ( $\hat{h}=0, \hat{g}=0$) is constructed based on $\rm{Hash}_{0}(s_i) \mod 2  = 0$ (the samples whose hash values are colored by \textcolor{lightred}{red}). The $1$-st sub-trainset ($\hat{h}=0, \hat{g}=1$) is constructed by $\rm{Hash}_{0}(s_i) \mod 2  = 1$ (the samples whose hash values are colored by \textcolor{lightblue}{blue}). The $2$-nd sub-trainset ($\hat{h}=1, \hat{g}=0$) is constructed by $\rm{Hash}_{1}(s_i) \mod 2  = 0$ (the samples whose hash values are colored by \textcolor{lightgreen}{green}).}
  \label{fig:subtrainset}
\end{figure}

\section{Proposed Approach: Hash Bagging}

\textbf{Objective of hash bagging.} We aim to improve vanilla bagging by designing a new subsampling algorithm. According to the remarks on Prop.~\ref{pro:bagging}, Prop.~\ref{pro:upper}, the new subsampling is expected to own the properties: \textbf{i) Determinism:} subsampling should be reproducible. \textbf{ii) Bounded influence scope:} inserting/deleting/modifying an arbitrary sample can only influence a limited number of sub-trainsets. \textbf{iii) Solvability:} the robustness can be computed within the given time. \textbf{iv) Generality:} the subsampling applies to arbitrary $K$ (the sub-trainset size) and $G$ (the number of sub-trainsets). 

The realization of hash bagging is based on the hash values. First let's see a simple case when $GK=N$.

\begin{algorithm2e}[tb!]
  \caption{Certify the collective robustness for our proposed hash bagging.}
  \label{alg:certify}
  \KwIn{testset $\mathcal{D}_{test}=\{x_j\}_{i=0}^{M-1}$, sub-classifiers $\{f_g\}_{g=1}^G$, the poison budget $r_{\rm{ins}},r_{\rm{del}},r_{\rm{mod}}$, sub-problem scale $\Delta$.}
  \For{$x_j: j=0,1,...,M-1$}{
  Compute predictions $\hat{y}_j= f_g(x_j):\, g=1,\ldots,G$;
  }
  \algcomment{See the simplification for $\mathbf{(P2)}$ (Eq.~\ref{eq:p2omega})}\\
  Compute the set of breakable predictions $\Omega$ \;
  \algcomment{Decompose the original problem to $\Delta$-scale sub-problems.}\\
  Decompose $\Omega= \bigcup_{\mu=0}^{\lceil M/\Delta \rceil-1} \mathcal{D}^{\mu}$, where $|\mathcal{D}^{\mu}|=\Delta$ ($\mu=0,\ldots, \lceil M/\Delta \rceil-2$) \; 
  \For{$\mathcal{D}^{\mu}: \mu=0,1,...\lceil M/\Delta \rceil-1$}{
  \algcomment{Solve the $\Delta$-scale sub-problems.}\\
  Compute the maximum number of simultaneously changed predictions $M^{\mu}_{\rm{ATK}}$ by solving $\mathbf{(P2)}$  over $\mathcal{D}^{\mu}$ w.r.t. the poison budget $r_{\rm{ins}},r_{\rm{del}},r_{\rm{mod}}$;
  }
  Compute the lower bound of the certified collective robustness: $M-\sum_{\mu} M_{\rm{ATK}}$\;
  \KwOut{$M-\sum_{\mu} M_{\rm{ATK}}$}
\end{algorithm2e}

\textbf{Hash bagging when $GK=N$.}
\label{sec:specific_hashsubsample}
Given $\mathcal{D}_{train}$, the $g$-th sub-trainset $\mathcal{D}_g$ ($g=0,1,\ldots, G-1$) is as follow:
\begin{equation}
    \label{eq:sub-trainset1}
\small
   \mathcal{D}_g= \{s_i \in \mathcal{D}_{train} \mid  \rm{Hash}(s_i) \mod G  =g \}
\end{equation}
 where $\rm{Hash}(\cdot)$ is the pre-specified hash function. Such that the number of sub-trainsets exactly equals $G$ and the sub-trainset size approximates $N/G=GK/G=K$, because the hash function will (approximately) uniformly allocate each sample to different hash values. Such hash-based subsampling satisfies the following properties: \textbf{i) Determinism}: fixing $G,K$, all $G$ sub-trainsets are uniquely determined by $\mathcal{D}_{train}$ and $\textrm{Hash}(\cdot)$, which we denoted as the trainset-hash pair $(\mathcal{D}_{train}, \textrm{Hash}(\cdot))$ for brevity. \textbf{ii) Bounded influence scope}: $r_{\rm{ins}}$ insertions, $r_{\rm{del}}$ deletions and $r_{\rm{mod}}$ modifications can influence at most $r_{\rm{ins}}+r_{\rm{del}}+2r_{\rm{mod}}$ sub-trainsets.

\textbf{Hash bagging for general cases.}\label{sec:hashbag}
Given $\mathcal{D}_{train}$ and a series of hash functions $\textrm{Hash}_{h}(\cdot)$  ($h=0, \ldots$), the $g$-th sub-trainset $\mathcal{D}_g$ ($g=0,1,\ldots, G-1$) is as follow:
\begin{equation}
\label{eq:sub-trainset2}
\small
    \mathcal{D}_g=\{s_i \in \mathcal{D}_{train} \mid \rm{Hash}_{\hat{h}}(s_i) \mod \hat{G}  =\hat{g} \}
\end{equation}
where $\hat{G}=\lfloor N/K \rfloor, \hat{h}= \lfloor g / \hat{G} \rfloor, \hat{g}= g \mod \hat{G}$. Specifically, we set $\hat{G}=\lfloor N/K \rfloor$, so that the size of each sub-trainset approximates $N/ \hat{G} \rightarrow K$. We specify a series of hash functions because that a trainset-hash pair can generate at most $\hat{G}$ sub-trainsets, thus we construct $\lceil G/ \hat{G} \rceil$ trainset-hash pairs, which is enough to generate $G$ sub-trainsets. Then the $g$-th sub-trainset is the $\hat{g}$-th sub-trainset within the sub-trainsets from the $\hat{h}$-th trainset-hash pair. Fig.~\ref{fig:subtrainset} illustratively shows an example of hash bagging. Remarkably, hash bagging satisfies: \textbf{i) Determinism}: the subsampling results only depends on the trainset-hash pairs $\{(\mathcal{D}_{train}, \textrm{Hash}_{h}(\cdot)): h=0, 1, \dots, \lceil G/ \hat{G} \rceil-1 \}$ if fixing $G,K$. \textbf{ii) Bounded influence scope}:  $r_{\rm{ins}}$ insertions, $r_{\rm{del}}$ deletions and $r_{\rm{mod}}$ modifications can influence at most $r_{\rm{ins}}+r_{\rm{del}}+2r_{\rm{mod}}$ sub-trainsets, within the $\hat{G}$ sub-trainsets from each trainset-hash pair. \textbf{iii) Generality}: hash bagging can be applied to all the combinations of $G, K$.

\textbf{Reproducible training of hash bagging.}
After constructing $G$ sub-trainsets based on Eq. (\ref{eq:sub-trainset2}), we train the sub-classifiers in a \emph{reproducible} manner. In our experiments, we have readily realized reproducibility by specifying the random seed for all the random operations.

\subsection{$\mathbf{(P2)}$: Collective Certification of Hash Bagging} 
\begin{proposition}[Simplified collective certification of hash bagging] \label{pro:hash}
For testset $\mathcal{D}_{test}=\{x_j\}_{j=0}^{M-1}$, we denote $\hat{y}_j= g(x_j)$ ($j=0,\ldots,M-1$) the ensemble prediction. The maximum number of simultaneously changed predictions (denoted by $M_{\rm{ATK}}$) under $r_{\rm{ins}}$ insertions, $r_{\rm{del}}$ deletions and $r_{\rm{mod}}$ modifications, is computed by $\mathbf{(P2)}$:
\begin{footnotesize}
\begin{align} 
    &\mathbf{(P2)}: \quad M_{\rm{ATK}}= \max_{A_0,\ldots,A_{G-1}} \; \sum_{x_j \in \mathcal{D}_{test}} \mathbb{I} \left\{ \ovo_{x_j}(\hat{y}_j) < \right. \nonumber\\ 
    & \left. \qquad \qquad \qquad \qquad \max_{y \neq \hat{y}_j} \left[\ovo_{x_j}(y)+ \frac{1}{2}\mathbb{I}\{y <\hat{y}_j\}\right] \right\} \label{eq:p2obj}\\
     & s.t. \quad [A_0, A_1, \ldots, A_{G-1}] \in \{0,1\}^G \label{eq:p2c0}\\
     & \sum_{g= (l-1) \hat{G}}^{\min\left(l \hat{G} -1, G\right)} A_g \leq r_{\rm{ins}}+r_{\rm{del}}+2r_{\rm{mod}} \nonumber \\
     & \qquad \qquad l=1,\ldots, \lceil G/\hat{G} \rceil \label{eq:p2c1}\\
     & \ovo_{x_j}(\hat{y}_j) = \underbrace{\vo_{x_j}(\hat{y}_j)}_{\textbf{Original votes}}- \underbrace{\sum_{g=1}^G A_g \mathbb{I}\{ f_g(x_j)=\hat{y}_j \}}_{\textbf{Influenced votes}}  \nonumber \\
     & \qquad \qquad \forall x_j \in \mathcal{D}_{test} \label{eq:p2c2}\\
     & \ovo_{x_j}(y)= \underbrace{\vo_{x_j}(y)}_{\textbf{Original votes}}+ \underbrace{\sum_{g=1}^G A_g \mathbb{I}\{ f_g(x_j) \neq y \}}_{\textbf{Influenced votes}} \nonumber \\
     & \qquad \qquad \forall x_j \in \mathcal{D}_{test},\; \forall y \neq \hat{y}_j \label{eq:p2c3}
\end{align}
\end{footnotesize}
The collective robustness is $M-M_{\rm{ATK}}$.
\end{proposition}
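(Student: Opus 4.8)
The plan is to recast the per-sample program $\mathbf{(P1)}$ of Proposition~\ref{pro:bagging} as a per-sub-classifier program by exploiting the deterministic partition structure of hash bagging. In $\mathbf{(P1)}$ the free variables are the sample flags $P_i$ and the compromised set $\bigcup_{i:P_i=1}\mathcal{S}_i$ is a derived quantity; here I would instead take the influence indicators $A_g=\mathbb{I}\{f_g\text{ is influenced}\}$ as the primitive variables. This is legitimate because, under the threat model of Section~\ref{sec:threat}, once a sub-trainset is altered the attacker controls the corresponding sub-classifier outright, so an attack is fully described by which sub-classifiers it compromises, together with a certificate that this compromise pattern is reachable within the budget $(r_{\rm{ins}},r_{\rm{del}},r_{\rm{mod}})$. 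I would then prove two inequalities between $\mathrm{opt}\,\mathbf{(P2)}$ and the true worst-case number of simultaneously changed predictions: (soundness) every admissible attack compromises a set of sub-classifiers whose indicator $A$ is feasible for $\mathbf{(P2)}$ and whose flip count is bounded by the objective at $A$, so the worst case is $\le\mathrm{opt}\,\mathbf{(P2)}$; and (achievability) $\mathrm{opt}\,\mathbf{(P2)}$ is attained by a concrete admissible attack, giving the reverse inequality and hence $M_{\rm{ATK}}=\mathrm{opt}\,\mathbf{(P2)}$.

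The crux, and the genuinely new ingredient relative to Proposition~\ref{pro:bagging}, is the budget constraint Eq.~(\ref{eq:p2c1}). I would derive it from the bounded-influence-scope property of Eq.~(\ref{eq:sub-trainset2}). Fix a trainset-hash pair $\hat h$; since $\mathrm{Hash}_{\hat h}(s_i)\bmod\hat G$ assigns each sample to exactly one of the $\hat G$ sub-trainsets of that pair, the pair induces a partition of $\mathcal{D}_{train}$. Consequently a single insertion or a single deletion perturbs exactly one block of this partition, whereas a modification, being a deletion of the old sample followed by an insertion of the new one, perturbs at most two blocks. Summing over the budget, the number of sub-trainsets influenced inside any one pair is at most $r_{\rm{ins}}+r_{\rm{del}}+2r_{\rm{mod}}$, which is precisely the group constraint indexed by $l=1,\dots,\lceil G/\hat G\rceil$ in Eq.~(\ref{eq:p2c1}). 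Because each pair contributes a separate constraint, the poison budget is converted, per pair, into this single scalar cap.

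The remaining pieces transfer verbatim from $\mathbf{(P1)}$ under the substitution $A_g\leftrightarrow\mathbb{I}\{g\in\bigcup_{i:P_i=1}\mathcal{S}_i\}$. For the defended class, the attacker minimizes $\ovo_{x_j}(\hat y_j)$ by forcing every influenced sub-classifier that originally voted $\hat y_j$ to abstain, which subtracts $\sum_g A_g\mathbb{I}\{f_g(x_j)=\hat y_j\}$ and gives Eq.~(\ref{eq:p2c2}); for any competitor $y\neq\hat y_j$ the attacker maximizes $\ovo_{x_j}(y)$ by redirecting every influenced sub-classifier not already voting $y$ onto $y$, which adds $\sum_g A_g\mathbb{I}\{f_g(x_j)\neq y\}$ and gives Eq.~(\ref{eq:p2c3}). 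With these worst-case tallies the flip criterion, including the $\tfrac12\mathbb{I}\{y<\hat y_j\}$ term that encodes smallest-index tie-breaking, is identical to the objective of Proposition~\ref{pro:bagging}, so Eq.~(\ref{eq:p2obj}) follows and the soundness inclusion is immediate once Eq.~(\ref{eq:p2c1}) is shown necessary.

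I expect the main obstacle to be the achievability direction, i.e.\ showing that an arbitrary $A$ satisfying Eq.~(\ref{eq:p2c1}) is induced by a concrete attack within $(r_{\rm{ins}},r_{\rm{del}},r_{\rm{mod}})$. The difficulty is that one operation acts on all trainset-hash pairs at once, since an inserted sample lands in block $\mathrm{Hash}_{\hat h}(s)\bmod\hat G$ of every pair simultaneously, so the per-pair influenced sets are coupled through the shared hash images rather than chosen independently. I would argue that, because the attacker is free to choose the content of inserted and modified samples and the hash functions act as independent allocators across pairs, any collection of per-pair target sets of size at most $r_{\rm{ins}}+r_{\rm{del}}+2r_{\rm{mod}}$ can be hit by a single budget-respecting attack; isolating the precise hash assumption under which this holds, and noting that even when it fails Eq.~(\ref{eq:p2c1}) still yields a sound upper bound on $M_{\rm{ATK}}$ and hence a valid robustness lower bound $M-M_{\rm{ATK}}$, is the delicate step.
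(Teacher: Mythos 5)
Your proposal matches the paper's own proof essentially step for step: the paper establishes exactly your two inequalities — soundness (any attack within budget $(r_{\rm{ins}},r_{\rm{del}},r_{\rm{mod}})$ influences at most $r_{\rm{ins}}+r_{\rm{del}}+2r_{\rm{mod}}$ sub-trainsets per trainset-hash pair, so its indicator vector $[A_0,\ldots,A_{G-1}]$ is feasible for $\mathbf{(P2)}$, giving the lower-bound direction) and achievability (for any feasible $A$ with per-pair active blocks $\beta_{l,o}$, insert $r$ crafted samples $\hat{s}_o$ with $\mathrm{Hash}_l(\hat{s}_o) \bmod \hat{G} = \beta_{l,o}$ for every pair $l$, realizing $M_{\rm{ATK}}$). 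The cross-pair coupling you flag as the delicate step is real but is glossed over in the paper, which simply asserts this insertion construction without justifying that samples with prescribed hash residues under all hash functions simultaneously exist; your fallback observation — that even if this fails, soundness alone still makes $M-M_{\rm{ATK}}$ a valid certified lower bound — is exactly the right safety net and is, if anything, more careful than the published argument.
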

We now explain each equation respectively. Eq. (\ref{eq:p2obj}): the objective function is same as $\mathbf{(P1)}$. Eq. (\ref{eq:p2c0}): $A_1, A_2, \ldots, A_G$ are the binary variables represent the attack, where $A_g=1$ means that the $g$-th classifier is influenced. Eq. (\ref{eq:p2c1}): in hash bagging, $r_{\rm{ins}}$ insertions, $r_{\rm{del}}$ deletions and $r_{\rm{mod}}$ modifications can influence at most $r_{\rm{ins}}+r_{\rm{del}}+2r_{\rm{mod}}$ within each trainset-hash pair. Eq. (\ref{eq:p2c2}) and Eq. (\ref{eq:p2c3}): count the minimum/maximum number of votes (after being attacked) for $\hat{y}_j$ and $y \neq \hat{y}_j$. The main advantage of $\mathbf{(P2)}$ over $\mathbf{(P1)}$ is that, the size of the feasible region is reduced from $2^N$ to $2^G$ by exploiting the property of hash bagging, which significantly accelerates the solving process.

\subsection{Remarks on Proposition~\ref{pro:hash}}
\textbf{1) Tightness.} The collective robustness by $\mathbf{(P2)}$ is tight.

\textbf{2) Simplification.} $\mathbf{(P2)}$ can be simplified by ignoring the unbreakable predictions within the given poison budget. $\sum_{x_j \in \mathcal{D}_{test}}$ in Eq. (\ref{eq:p2obj}) can be simplified as $\sum_{x_j \in \Omega}$, and $\Omega$:
    \begin{footnotesize}
    \begin{align} \label{eq:p2omega} 
       \Omega=&\{x_j \in  \mathcal{D}_{test}: \vo_{x_j}(\hat{y}_j)-\max_{y \neq \hat{y}_j} \left[ \vo_{x_j}(y)+\mathbb{I}\{y <\hat{y}_j\} \right] \nonumber\\
        \leq &2  \lceil G/ \hat{G} \rceil(r_{\rm{ins}}+r_{\rm{del}}+2r_{\rm{mod}})\} 
    \end{align}
    \end{footnotesize}

\textbf{3) NP-hardness.} $\mathbf{(P2)}$ is NP-hard. We can speedup the solution process by decomposition (see Section~\ref{sec:decomposition}).
 

\textbf{Implementation.} Alg.~\ref{alg:certify} shows our algorithm for certifying collective robustness. Specifically, we apply simplification and decomposition to accelerate solving $\mathbf{(P2)}$.

\textbf{Compare hash bagging to vanilla bagging.} In Fig.~\ref{fig:ensemble} and Fig.~\ref{fig:r_K}, we compare hash bagging to vanilla bagging on the ensemble accuracy and $\overline{r}$ (see Prop.~\ref{pro:upper}) respectively. We observe in Fig.~\ref{fig:ensemble} that the ensemble accuracy of hash bagging roughly equals vanilla bagging. Notably, the accuracy variance of hash bagging (over different hash functions) is much smaller than vanilla bagging. We observe in Fig.~\ref{fig:r_K} that $\overline{r}$ of hash bagging is consistently higher than vanilla bagging, especially when $K$ is small. The comparisons suggest that, hash bagging is much more robust than vanilla bagging without sacrificing the ensemble accuracy.

\begin{figure}[tb!]
\centering
  \begin{subfigure}[b]{0.45\textwidth}
  \centering
    \includegraphics[width=.8\linewidth]{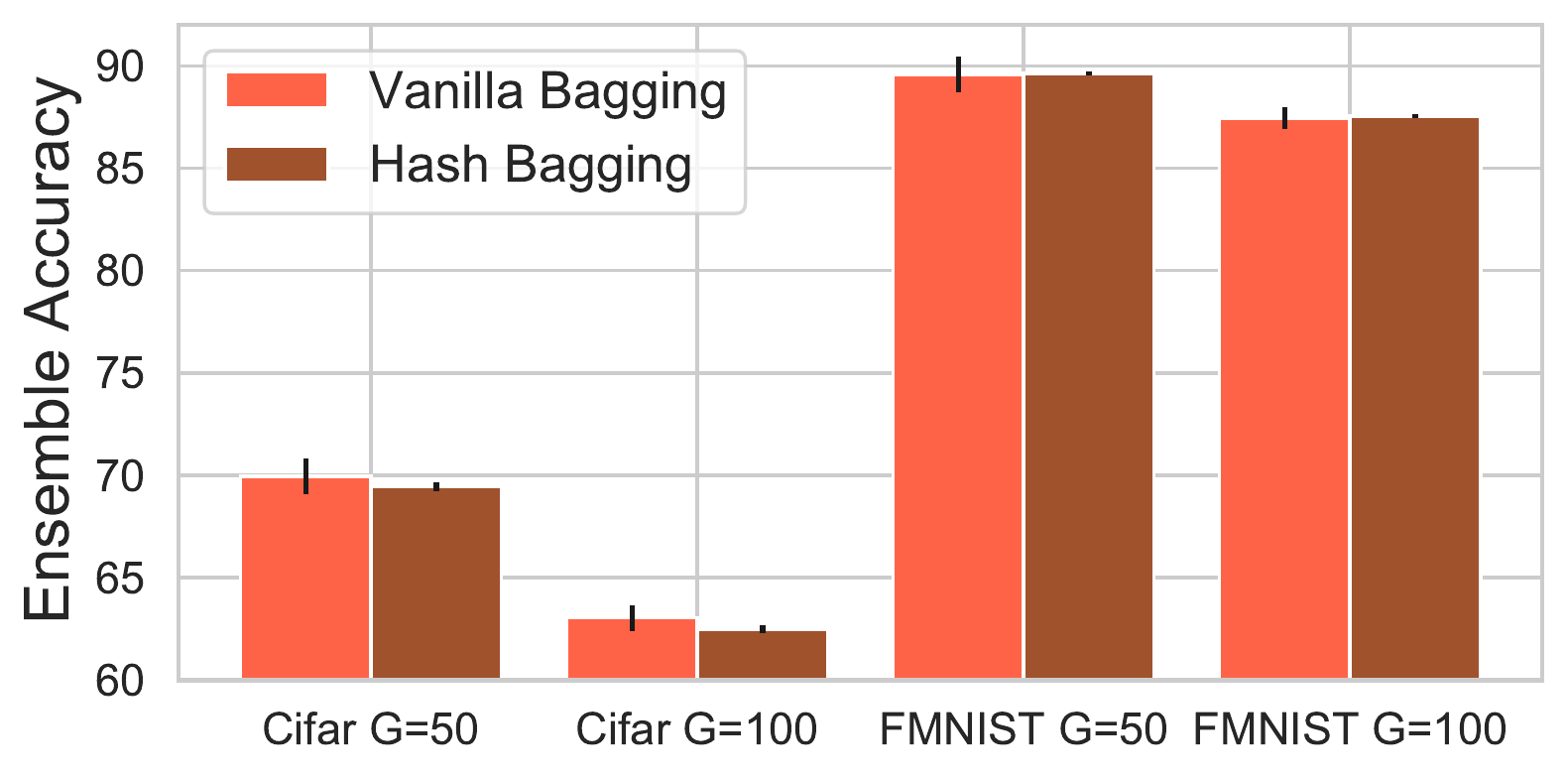}
    \caption{Comparison on ensemble accuracy ($K= N/G$).}
    \label{fig:ensemble}
      \end{subfigure}
      \\
  \begin{subfigure}[b]{0.45 \textwidth}
  \centering
    \includegraphics[width=.8\linewidth]{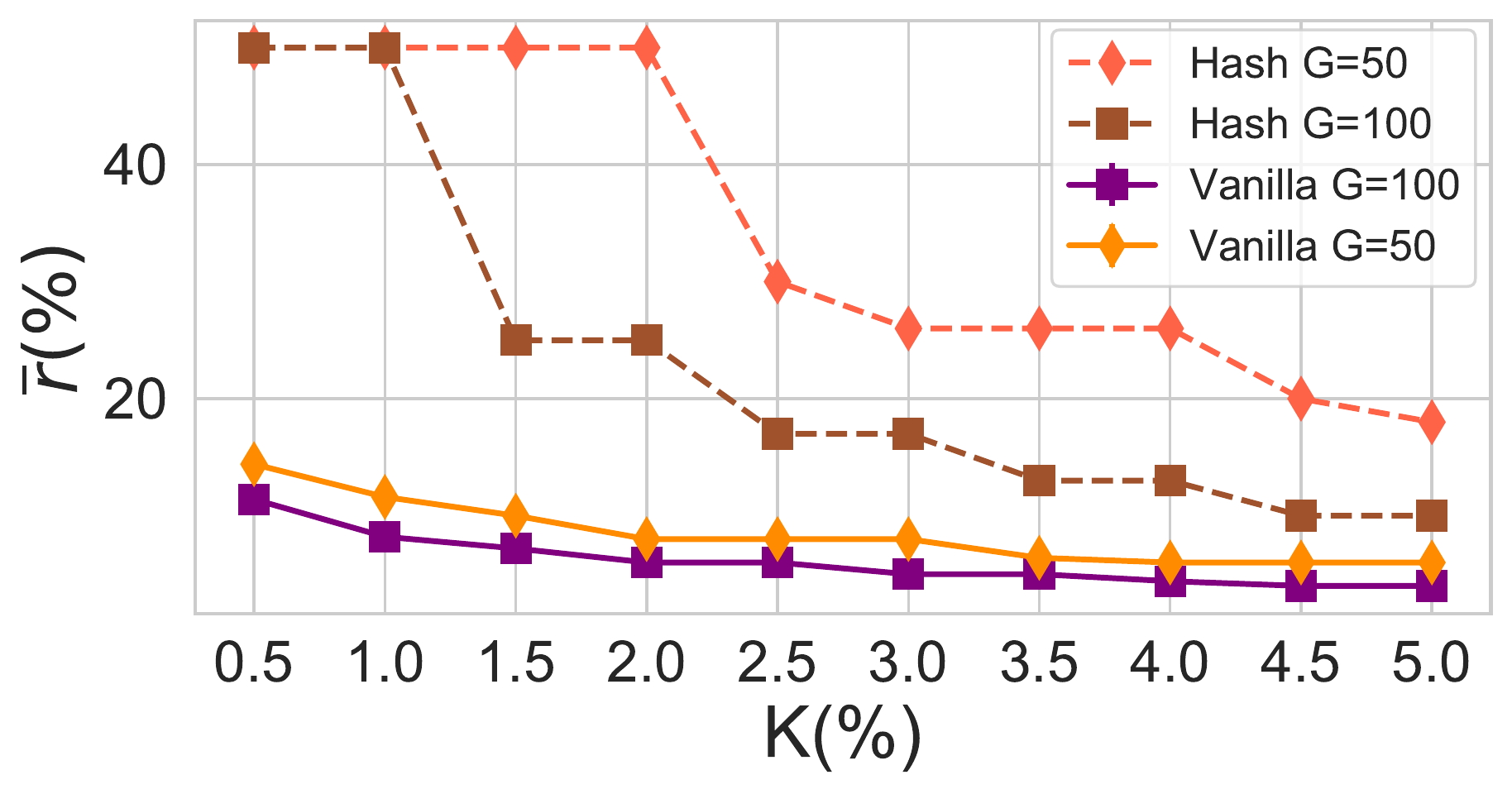}
    \caption{Comparison on $\overline{r}$ on FMNIST.}
    \label{fig:r_K}
  \end{subfigure}
\caption{Comparing hash bagging to vanilla bagging.}
\end{figure}

\section{Comparisons to Prior Works}
We compare to prior works that are tailored to the general data poisoning attack~\cite{ma2019data,levine2021deep,jia2021intrinsic,jia2022rnn}.

\textbf{Comparison to ~\cite{ma2019data}}
Compared to differential privacy based defense~\cite{ma2019data}, hash bagging is more practical for two reasons: I) hash bagging does not require the training algorithm to be differentially private. II) The differential privacy often harms the performance of the learnt model~\cite{duchi2013local}, which also limits the scalability of this type of defenses.

\textbf{Comparison to ~\cite{jia2022rnn}}
Compared to ~\cite{jia2022rnn} which derives the sample-wise/collective certificates for kNN/rNN, hash bagging is compatible with different model architectures. Note that the effectiveness of kNN/rNN relies on the assumption: close data are typically similar. Since this assumption might do not hold in some classification tasks, we believe hash bagging is much more practical.

\textbf{Comparison to ~\cite{jia2021intrinsic}}
~\cite{jia2021intrinsic} proposes a bagging variant as a certified defense, which predicts the majority class among the predictions of all the possible sub-classifiers (total $N^K$ sub-classifiers). In practice, training $N^K$ sub-classifiers is often unaffordable, ~\cite{jia2021intrinsic} approximately estimates the voting distribution by a confidence interval method, which needs to train hundreds of sub-classifiers for a close estimate ($G$ is required to be large). In comparison, hash bagging has no additional constraint. Moreover, unlike our deterministic robustness certificates, its robustness certificates are probabilistic, which have an inevitable failure probability.

\textbf{Comparison to~\cite{levine2021deep}}
~\cite{levine2021deep} propose a partition-based bagging as a certified defense, which is corresponding to \textbf{Hash subsampling when $GK=N$} (Section~\ref{sec:specific_hashsubsample}). In comparison, both our collective certification and hash bagging are more general than ~\cite{levine2021deep}. Specifically, hash bagging ablates the constraint that ~\cite{levine2021deep} places on the bagging hyper-parameters $G,K$. Our collective certification is able to certify both the tight collective robustness and sample-wise robustness, while ~\cite{levine2021deep} only considers the sample-wise certificate.

\begin{table}[tb!]
\centering
\caption{Experimental setups in line with literature.}
\label{tab:config}
    \begin{adjustbox}{width=0.78\linewidth}
    \begin{tabular}{ c | c | c | c | c }
    \toprule
     Dataset & Trainset & Testset & Class & Classifier \\
     \hline
     Bank & 35,211 & 10,000  & 2 & Bayes\\
     \hline
     Electricity & 35,312 & 10,000  & 2 & SVM \\
     \hline
     FMNIST & 60,000 & 10,000 & 10  & NIN\\
     \hline
     CIFAR-10 & 50,000 & 10,000 & 10 & NIN (Augmentation)\\
    \bottomrule
    \end{tabular}   
    \end{adjustbox}
\end{table}

\section{Experiments}
\subsection{Experimental Setups} 
\textbf{Datasets and models.} We evaluate hash bagging and collective certification on two classic machine learning datasets: Bank~\cite{moro2014data}, Electricity~\cite{harries1999splice}, and two image classification datasets: FMNIST~\cite{xiao2017fashion}, CIFAR-10~\cite{krizhevsky2009learning}. Specifically, for Bank and Electricity, we adapt vanilla bagging/hash bagging to the machine-learning models: Bayes and SVM. For FMNIST and CIFAR-10, we adapt vanilla bagging/hash bagging to the deep-learning model Network in Network (NiN) ~\cite{min2014nin}. The detailed experimental setups are shown in Table ~\ref{tab:config}. 

\textbf{Implementation details.} We use Gurobi 9.0~\cite{gurobi} to solve $\mathbf{(P1)}$ and $\mathbf{(P2)}$, which can return a lower/upper bound of the objective value within the pre-specific time period. Generally, a longer time can yield a tighter bound. For efficiency, we limit the time to be $2$s per sample\footnote{The solving time for $\mathbf{(P1)}$ is universally set to be $2 |\mathcal{D}_{test}|= 20,000$ seconds. The solving time for $\mathbf{(P2)}$ is set to be $2|\Omega|$ for $\mathbf{(P2)}$ where $\Omega$ is defined in Eq. (\ref{eq:p2omega}).}. More details are in Appendix (Section~\ref{sec:implement}).

\textbf{Evaluation metrics and peer methods.} Following~\cite{levine2021deep,jia2021intrinsic,jia2022rnn}, we evaluate the performance by two metrics: \emph{collective robustness} and \emph{certified accuracy}\footnote{We report the minimum number of accurate predictions as the certified accuracy, instead of a ratio, which is in line with the practice in the literature of collective robustness.}. We also report the relative gap (denoted by \down{\alpha}) between the maximum number of simultaneously changed (correct) predictions guaranteed by the collective certification (denoted by $M^{\rm{col}}_{\rm{ATK}}$) and that of the sample-wise certification (denoted by $M^{\rm{sam}}_{\rm{ATK}}$). Namely, \down{\alpha}$=( M^{\rm{sam}}_{\rm{ATK}}-M^{\rm{col}}_{\rm{ATK}}) / M^{\rm{sam}}_{\rm{ATK}}$. High $\alpha$ means that the sample-wise certification highly over-estimates the poisoning attack. All the experiments are conducted on the clean dataset without being attacked, which is a common experimental setting for certified defenses~\cite{levine2021deep,jia2021intrinsic,jia2022rnn}. We compare \emph{hash bagging} to \emph{vanilla bagging}, and compare \emph{collective certification} to sample-wise certification~\cite{levine2021deep}. We also compare to \emph{probabilistic certification} ~\cite{jia2021intrinsic} in Appendix (Section~\ref{sec:prob}).

\begin{table}[tb!]
\centering
\caption{(Bank: $M=10,000$; $K= 5\% N$) Certified collective robustness and certified accuracy at $r=5\%, \ldots, 25\%$ ($\times G$). $r$ refers to the poison budget $r= r_{\rm{ins}}+r_{\rm{del}}+2 r_{\rm{mod}}$. \textbf{Sample-wise}: sample-wise certification. \textbf{Collective}: collective certification. \textbf{CR} and \textbf{CA}: certified collective robustness and certified accuracy. \down{\alpha}: the relative gap between $M_{\rm{ATK}}$ guaranteed by collective certification and $M_{\rm{ATK}}$ of  sample-wise certification. NaN: division by zero. 
} \label{tab:bank}
\begin{adjustbox}{width=0.98\linewidth}
    \begin{tabular}{lll | llcccccccccc}
    \toprule
   G & Bagging & Certification & Metric & $5\%$ & $10\%$ & $15\%$ & $20\%$ & $25\%$\\
    \midrule
   \multirow{15}{*}{20} 
   & \multirow{6}{*}{Vanilla} &  \multirow{2}{*}{Sample-wise} & CR & 3917 & 0 & 0 & 0 & 0 \\
    \cmidrule(l){4-9}
    & & &  CA & 3230 & 0 & 0 & 0 & 0 \\
    \cmidrule(l){3-9}
    &  & \multirow{4}{*}{Collective} & CR & 4449 & 0 & 0 & 0 & 0 \\
    & & & $M_{\rm{ATK}}$ & \down{8.74} & NaN & NaN & NaN & NaN\\
    \cmidrule(l){4-9}
    & & &  CA & 3588 & 0 & 0 & 0 & 0 \\
    & & & $M_{\rm{ATK}}$ & \down{7.47} & NaN & NaN & NaN & NaN\\
    \cmidrule(l){2-9}
    & \multirow{6}{*}{Hash} &  \multirow{2}{*}{Sample-wise} & CR & 9599 & 9009 & 7076 & 5778 & 4686 \\
    \cmidrule(l){4-9}
    & & &  CA & 7788 & 7403 & 5755 & 4644 & 3817 \\
    \cmidrule(l){3-9}
    &  & \multirow{4}{*}{Collective} & CR &  \textbf{9718} & \textbf{9209} & \textbf{7270} & \textbf{5968} & \textbf{4930} \\
    & & & $M_{\rm{ATK}}$ & \down{29.7} & \down{20.2} & \down{6.63} & \down{4.50} & \down{4.59}\\
    \cmidrule(l){4-9}
    & & &  CA  & \textbf{7831} & \textbf{7464} & \textbf{5806} & \textbf{4685} & \textbf{3881}  \\
    & & & $M_{\rm{ATK}}$ & \down{18.5} & \down{9.89} & \down{2.25} & \down{1.21} & \down{1.52}\\
    \hline
    \multirow{15}{*}{40} 
    & \multirow{6}{*}{Vanilla} &  \multirow{2}{*}{Sample-wise} & CR & 5250 & 1870 & 0 & 0 & 0 \\
    \cmidrule(l){4-9}
    & & &  CA & 4160 & 1408 & 0 & 0 & 0 \\
    \cmidrule(l){3-9}
    &  & \multirow{4}{*}{Collective} & CR & 5385 & 2166 & 0 & 0 & 0 \\
    & & & $M_{\rm{ATK}}$ & \down{2.84} & \down{3.64} & NaN & NaN & NaN\\
    \cmidrule(l){4-9}
    & & &  CA & 4190 & 1647 & 0 & 0 & 0 \\
    & & & $M_{\rm{ATK}}$ & \down{0.77} & \down{3.58} & NaN & NaN & NaN\\
    \cmidrule(l){2-9}
    & \multirow{6}{*}{Hash} &  \multirow{2}{*}{Sample-wise} & CR & 9638 & 9301 & 6401 & 5376 & 4626 \\
    \cmidrule(l){4-9}
    & & &  CA & 7881 & 7679	& 5198 & 4354 & 3718 \\
    \cmidrule(l){3-9}
    &  & \multirow{4}{*}{Collective} & CR &  \textbf{9762} & \textbf{9475} & \textbf{6603} & \textbf{5572} & \textbf{4796} \\
    & & & $M_{\rm{ATK}}$ & \down{34.2} & \down{24.9} & \down{5.61} & \down{4.24} & \down{3.16}\\
    \cmidrule(l){4-9}
    & & &  CA & \textbf{7914} & \textbf{7718} & \textbf{5236} & \textbf{4396}	& \textbf{3751} \\
    & & & $M_{\rm{ATK}}$ & \down{17.2} & \down{9.90} & \down{1.32} & \down{1.13} & \down{0.76}\\
    \bottomrule
    \end{tabular}
    \end{adjustbox}
\end{table}

\subsection{Experimental Results}
\textbf{Bank and Electricity.} Table~\ref{tab:electricity} and Table~\ref{tab:bank} report the performances of sample-wise/collective certification on vanilla/hash bagging. There is no need to apply decomposition to these two binary-classification datasets since we can compute the tight certified collective robustness within $10^2$ seconds. In comparison, the collective robustness of vanilla bagging drops to zero at $r =15\% G$, while hash bagging is able to achieve a non-trivial collective robustness at $r = 25 \% G$. The values of \down{\alpha} demonstrate that the exact value of $M_{\rm{ATK}}$ is $5\% \sim 30\%$ less than the values derived from the sample-wise certification. There is an interesting phenomenon that \down{\alpha} generally decreases with $r$ for the number of the candidate poisoning attacks ${N \choose r}$ exponentially increases with $r$. When $r$ is large, there is a high probability to find an attack that can corrupt a high percent of the breakable predictions, thus $M_{\rm{ATK}}$ guaranteed by the collective certification is close to the sample-wise certification. As we can see, the collective robustness/certified accuracy at $G=20$ are roughly equal to that of $G=40$. This is because an insertion/deletion is considered to influence 1 ($5\%$) vote among total 20 votes when $G=20$, while it can influence 2 ($5\%$) votes among 40 votes for the sub-trainset overlapping. Since the voting distribution of $G=20$ and $G=40$ are similar, $G=20$ and $G=40$ own the similar collective robustness. 

\begin{table}[tb!]
\centering
\caption{(Electricity: $M=10,000$; $K= 5\% N$) Certified collective robustness and certified accuracy.} \label{tab:electricity}
\begin{adjustbox}{width=0.98\linewidth}
    \begin{tabular}{lll | llcccccccccc}
    \toprule
   G & Bagging & Certification & Metric & $5\%$ & $10\%$ & $15\%$ & $20\%$ & $25\%$\\
    \midrule
   \multirow{15}{*}{20} 
   & \multirow{6}{*}{Vanilla} &  \multirow{2}{*}{Sample-wise} & CR & 9230 & 0 & 0 & 0 & 0 \\
    \cmidrule(l){4-9}
    & & &  CA & 7321 & 0 & 0 & 0 & 0 \\
    \cmidrule(l){3-9}
    &  & \multirow{4}{*}{Collective} & CR & 9348 & 0 & 0 & 0 & 0 \\
    & & & $M_{\rm{ATK}}$ & \down{15.3} & NaN & NaN & NaN & NaN\\
    \cmidrule(l){4-9}
    & & &  CA & 7394 & 0 & 0 & 0 & 0 \\
    & & & $M_{\rm{ATK}}$ & \down{17.5} & NaN & NaN & NaN & NaN\\
    \cmidrule(l){2-9}
    & \multirow{6}{*}{Hash} &  \multirow{2}{*}{Sample-wise} & CR & 9858 & 9738 & 9602 & 9461 & 9293 \\
    \cmidrule(l){4-9}
    & & &  CA & 7681 & 7621	& 7538	& 7462	& 7362 \\
    \cmidrule(l){3-9}
    &  & \multirow{4}{*}{Collective} & CR &  \textbf{9915} & \textbf{9821} & \textbf{9726} & \textbf{9608} & \textbf{9402} \\
    & & & $M_{\rm{ATK}}$ & \down{40.1} & \down{31.7} & \down{31.1} & \down{27.3} & \down{23.9}\\
    \cmidrule(l){4-9}
    & & &  CA  & \textbf{7701} & \textbf{7663}	& \textbf{7608} & \textbf{7547} & \textbf{7458}  \\
    & & & $M_{\rm{ATK}}$ & \down{34.5} & \down{35.6} & \down{34.8} & \down{30.7} & \down{25.5}\\
    \hline
    \multirow{15}{*}{40} 
    & \multirow{6}{*}{Vanilla} &  \multirow{2}{*}{Sample-wise} & CR & 9482 & 8648 & 0 & 0 & 0 \\
    \cmidrule(l){4-9}
    & & &  CA & 7466 & 6986 & 0 & 0 & 0 \\
    \cmidrule(l){3-9}
    &  & \multirow{4}{*}{Collective} & CR & 9566 & 8817 & 0 & 0 & 0 \\
    & & & $M_{\rm{ATK}}$ & \down{16.2} & \down{12.5} & NaN & NaN & NaN\\
    \cmidrule(l){4-9}
    & & &  CA & 7513 & 7086 & 0 & 0 & 0 \\
    & & & $M_{\rm{ATK}}$ & \down{16.5} & \down{13.1} & NaN & NaN & NaN\\
    \cmidrule(l){2-9}
    & \multirow{6}{*}{Hash} &  \multirow{2}{*}{Sample-wise} & CR & 9873 & 9769 & 9636 & 9491 & 9366 \\
    \cmidrule(l){4-9}
    & & &  CA & 7681 & 7625	& 7546 & 7459 & 7399 \\
    \cmidrule(l){3-9}
    &  & \multirow{4}{*}{Collective} & CR &  \textbf{9919} & \textbf{9842} & \textbf{9755} & \textbf{9601} & \textbf{9461} \\
    & & & $M_{\rm{ATK}}$ &  \down{36.2} & \down{31.6} & \down{32.7} & \down{21.6} & \down{15.0}\\
    \cmidrule(l){4-9}
    & & &  CA & \textbf{7700} & \textbf{7661}	& \textbf{7613}	& \textbf{7536}	& \textbf{7457} \\
    & & & $M_{\rm{ATK}}$ &  \down{27.5} & \down{28.8} & \down{32.8} & \down{26.5} & \down{16.5}\\
    \bottomrule
    \end{tabular}
    \end{adjustbox}
\end{table}

\textbf{FMNIST and CIFAR-10.} Table~\ref{tab:fmnist} and Table~\ref{tab:cifar} report the performance of sample-wise/collective certification (with/without decomposition) on vanilla/hash bagging. We adapt decomposition for speedup, because $\mathbf{(P1)}$ and $\mathbf{(P2)}$ are not solvable over those two ten-classes classification datasets within the limited time. The $\Delta$ choices are reported in Appendix (Section~\ref{sec:delta}). We see that hash bagging consistently outperforms vanilla bagging across different poison budgets. The results demonstrate that: collective certification with decomposition $>$ collective certification $>$ sample-wise certification in terms of the certified collective robustness and the certified accuracy, which suggests collective certification with decomposition is an efficient way to compute the collective robustness certificate.

\begin{table}[tb!]
\centering
\caption{(FMNIST: $M=10,000$; $K= N/G$) Certified collective robustness and certified accuracy. \textbf{Decomposition}: collective certification with decomposition.} \label{tab:fmnist}
\begin{adjustbox}{width=0.98\linewidth}
    \begin{tabular}{lll | llcccccccccc}
    \toprule
   G & Bagging & Certification & Metric & $5\%$ & $10\%$ & $15\%$ & $20\%$ & $25\%$\\
    \midrule
   \multirow{15}{*}{50} 
   & \multirow{6}{*}{Vanilla} &  \multirow{2}{*}{Sample-wise} & CR & 7432 & 0 & 0 & 0 & 0 \\
    \cmidrule(l){4-9}
    & & &  CA & 7283 & 0 & 0 & 0 & 0 \\
    \cmidrule(l){3-9}
    &  & \multirow{4}{*}{Collective} & CR & 7727 & 0 & 0 & 0 & 0 \\
    & & & $M_{\rm{ATK}}$ & \down{11.5} & NaN & NaN & NaN & NaN\\
    \cmidrule(l){4-9}
    & & &  CA & 7515 & 0 & 0 & 0 & 0 \\
    & & & $M_{\rm{ATK}}$ & \down{13.8} & NaN & NaN & NaN & NaN\\
    \cmidrule(l){2-9}
    & \multirow{10}{*}{Hash} &  \multirow{2}{*}{Sample-wise} & CR & 9576 & 9307	& 8932 & 8671 & 8238 \\
    \cmidrule(l){4-9}
    & & &  CA & 8768 & 8635	& 8408 & 8246 & 7943 \\
    \cmidrule(l){3-9}
    &  & \multirow{4}{*}{Collective} & CR & \textbf{9726}	& 9410 & 9024 & 8761 & 8329 \\
    & & & $M_{\rm{ATK}}$ & \down{35.4} & \down{14.9} & \down{8.61} & \down{6.77} & \down{5.16}\\
    \cmidrule(l){4-9}
    & & &  CA & \textbf{8833} & \textbf{8719} & 8493 & 8327 & 8022 \\
    & & & $M_{\rm{ATK}}$ & \down{32.8} & \down{25.4} & \down{15.2} & \down{11.2} & \down{7.72}\\
    \cmidrule(l){3-9}
    &  & \multirow{4}{*}{Decomposition} & CR & 9666 & \textbf{9472} & \textbf{9124} & \textbf{8887} & \textbf{8491} \\
    & & & $M_{\rm{ATK}}$ & \down{21.2} & \down{23.8} & \down{18.0} & \down{16.2} & \down{14.4}\\
    \cmidrule(l){4-9}
    & & &  CA & 8812 & 8716 & \textbf{8527} & \textbf{8385} & \textbf{8119} \\
    & & & $M_{\rm{ATK}}$ & \down{22.2} & \down{24.5} & \down{21.3} & \down{19.3} & \down{17.2}\\
    \hline
    \multirow{15}{*}{100} 
    & \multirow{6}{*}{Vanilla} &  \multirow{2}{*}{Sample-wise} & CR & 7548 & 0 & 0 & 0 & 0 \\
    \cmidrule(l){4-9}
    & & &  CA & 7321 & 0 & 0 & 0 & 0 \\
    \cmidrule(l){3-9}
    &  & \multirow{4}{*}{Collective} & CR & 8053 & 0 & 0 & 0 & 0 \\
    & & & $M_{\rm{ATK}}$ & \down{20.6} & NaN & NaN & NaN & NaN\\
    \cmidrule(l){4-9}
    & & &  CA & 7746 & 0 & 0 & 0 & 0 \\
    & & & $M_{\rm{ATK}}$ & \down{29.4} & NaN & NaN & NaN & NaN\\
    \cmidrule(l){2-9}
    & \multirow{10}{*}{Hash} &  \multirow{2}{*}{Sample-wise} & CR & 9538 & 9080 & 8653 & 8249 & 7823 \\
    \cmidrule(l){4-9}
    & & &  CA & 8554 & 8316 & 8049 & 7797	& 7486 \\
    \cmidrule(l){3-9}
    &  & \multirow{4}{*}{Collective} & CR & 9611	& 9167 & 8754	& 8344 & 7912  \\
    & & & $M_{\rm{ATK}}$ & \down{15.8} & \down{9.46} & \down{7.50} & \down{5.42} & \down{4.09}\\
    \cmidrule(l){4-9}
    & & &  CA & \textbf{8610}	& 8375 & 8116 & 7857 & 7558 \\
    & & & $M_{\rm{ATK}}$ & \down{26.7} & \down{13.2} & \down{9.37} & \down{6.20} & \down{5.63}\\
    \cmidrule(l){3-9}
    &  & \multirow{4}{*}{Decomposition} & CR & \textbf{9631} & \textbf{9232} & \textbf{8837} & \textbf{8450} & \textbf{8036} \\
    & & & $M_{\rm{ATK}}$ & \down{20.1} & \down{16.5} & \down{13.6} & \down{11.5} & \down{9.78}\\
    \cmidrule(l){4-9}
    & & &  CA & 8595 & \textbf{8407} & \textbf{8152} & \textbf{7917} & \textbf{7639} \\
    & & & $M_{\rm{ATK}}$ & \down{19.5} & \down{20.3} & \down{14.4} & \down{12.4} & \down{12.0}\\
    \bottomrule
    \end{tabular}
    \end{adjustbox}
\end{table}

\subsection{Ablation Study}
\textbf{Impact of $G$.}
Fig.~\ref{fig:ablate_g} reports the impact of $G$ on the certified collective robustness of hash bagging. The figure illustrates that as $G$ increases, the collective robustness increases first and then decreases, which reaches the top at $GK=N$. The reason is, as $G$ increases to $N/K$, the total number of votes increases, thus the attacker needs to modify more votes (higher poison budget) to modify the majority class. As $G$ exceeds the threshold of $N/K$, despite the growing number of votes, the influence scope of a poisoned sample also increases, as an insertion can simultaneously influence two sub-trainsets when $KG>N$, which causes a slight decline on the certified collective robustness.

\begin{table}[tb!]
\centering
\caption{(CIFAR-10: $M=10,000$; $K= N/G$) Certified collective robustness and certified accuracy.} \label{tab:cifar}
\begin{adjustbox}{width=0.98\linewidth}
    \begin{tabular}{lll | llcccccccccc}
    \toprule
   G & Bagging & Certification & Metric & $5\%$ & $10\%$ & $15\%$ & $20\%$ & $25\%$\\
    \midrule
   \multirow{15}{*}{50} 
   & \multirow{6}{*}{Vanilla} &  \multirow{2}{*}{Sample-wise} & CR & 2737 & 0 & 0 & 0 & 0 \\
    \cmidrule(l){4-9}
    & & &  CA & 2621 & 0 & 0 & 0 & 0 \\
    \cmidrule(l){3-9}
    &  & \multirow{4}{*}{Collective} & CR & 3621 & 0 & 0 & 0 & 0 \\
    & & & $M_{\rm{ATK}}$ & \down{12.2} & NaN & NaN & NaN & NaN\\
    \cmidrule(l){4-9}
    & & &  CA & 3335 & 0 & 0 & 0 & 0 \\
    & & & $M_{\rm{ATK}}$ & \down{16.3} & NaN & NaN & NaN & NaN \\
    \cmidrule(l){2-9}
    & \multirow{10}{*}{Hash} &  \multirow{2}{*}{Sample-wise} & CR & 8221 & 7268 & 6067 & 5320 & 4229 \\
    \cmidrule(l){4-9}
    & & &  CA & 6305 & 5864 & 5186 & 4705 & 3884 \\
    \cmidrule(l){3-9}
    &  & \multirow{4}{*}{Collective} & CR & 8393	& 7428 & 6204	& 5435 & 4290  \\
    & & & $M_{\rm{ATK}}$ & \down{9.67} & \down{5.86} & \down{3.48} & \down{2.46} & \down{1.06}\\
    \cmidrule(l){4-9}
    & & &  CA & 6410	& 5985 & 5342	& 4848 & 4006 \\
    & & & $M_{\rm{ATK}}$ & \down{15.2} & \down{10.7} & \down{8.62} & \down{6.24} & \down{3.92}\\
    \cmidrule(l){3-9}
    &  & \multirow{4}{*}{Decomposition} & CR & \textbf{8694} & \textbf{7854} & \textbf{6686} & \textbf{5912} & \textbf{4826} \\
    & & & $M_{\rm{ATK}}$ & \down{26.6} & \down{21.4} & \down{15.7} & \down{12.6} & \down{10.3}\\
    \cmidrule(l){4-9}
    & & &  CA & \textbf{6490} & \textbf{6147} & \textbf{5553} & \textbf{5113} & \textbf{4341} \\
    & & & $M_{\rm{ATK}}$ & \down{26.8} & \down{25.0} & \down{20.2} & \down{17.8} & \down{14.7}\\
    \hline
    \multirow{15}{*}{100} 
    & \multirow{6}{*}{Vanilla} &  \multirow{2}{*}{Sample-wise} & CR & 2621 & 0 & 0 & 0 & 0 \\
    \cmidrule(l){4-9}
    & & &  CA & 1876 & 0 & 0 & 0 & 0 \\
    \cmidrule(l){3-9}
    &  & \multirow{4}{*}{Collective} & CR & 2657 & 0 & 0 & 0 & 0 \\
    & & & $M_{\rm{ATK}}$ & \down{7.93} & NaN & NaN & NaN & NaN\\
    \cmidrule(l){4-9}
    & & &  CA & 2394 & 0 & 0 & 0 & 0 \\
    & & & $M_{\rm{ATK}}$ & \down{11.8} & NaN & NaN & NaN & NaN\\
    \cmidrule(l){2-9}
    & \multirow{10}{*}{Hash} &  \multirow{2}{*}{Sample-wise} & CR & 7685 & 5962 & 4612 & 3504 & 2593 \\
    \cmidrule(l){4-9}
    & & &  CA & 5396	& 4571 & 3787 & 3008 & 2315 \\
    \cmidrule(l){3-9}
    &  & \multirow{4}{*}{Collective} & CR & 7744	& 5974 & 4618	& 3509 & 2598 \\
    & & & $M_{\rm{ATK}}$ & \down{2.54} & \down{0.30} & \down{0.11} & \down{0.08} & \down{0.07} \\
    \cmidrule(l){4-9}
    & & &  CA & 5475	& 4650	& 3825	& 3030	& 2330 \\
    & & & $M_{\rm{ATK}}$ & \down{9.21} & \down{4.69} & \down{1.54} & \down{0.68} & \down{0.38}\\
    \cmidrule(l){3-9}
    &  & \multirow{4}{*}{Decomposition} & CR & \textbf{8137} & \textbf{6469} & \textbf{5061} & \textbf{4035} & \textbf{2987} \\
    & & & $M_{\rm{ATK}}$ & \down{19.5} & \down{12.5} & \down{8.33} & \down{8.17} & \down{5.32}\\
    \cmidrule(l){4-9}
    & & &  CA & \textbf{5570} & \textbf{4841} & \textbf{4098} & \textbf{3338} & \textbf{2635} \\
    & & & $M_{\rm{ATK}}$ & \down{20.3} & \down{16.0} & \down{12.6} & \down{10.2} & \down{8.12}\\
    \bottomrule
    \end{tabular}
    \end{adjustbox}
\end{table}

\textbf{Impact of $K$.}
Fig.~\ref{fig:ablate_k} reports the impact of $K$ on the certified collective robustness of hash bagging. Similar to $G$, as $K$ increases, the collective robustness increases first till $K=N/G$ and then decreases. The insight is, as $K$ rises to $N/G$, the collective robustness first increases for the improved prediction accuracy of each sub-classifier, because all the sub-classifiers have a higher probability to predict the correct class, as validated in Fig.~\ref{fig:votes_k}. As $K$ exceeds the threshold of $N/G$, the collective robustness decreases for the overlapping between the sub-trainsets, with the same reason of $G$. 

\textbf{Impact of sub-testset scale $\Delta$.}
 Fig.~\ref{fig:ablate_d1} and Fig.~\ref{fig:ablate_d2} report the impact of $\Delta$ on the certified collective robustness of hash bagging at $r=15\% G$. Specifically, Fig.~\ref{fig:ablate_d1} reports the impact of $\Delta$ at no time limit, where we can compute the tight collective robustness for each $\Delta$-size sub-testset. As shown in the figure, the certified collective robustness grows with $\Delta$, but higher $\Delta$ also enlarges the computation cost. Thus, $\Delta$ controls the trade-off between the collective robustness and the computation cost. Fig.~\ref{fig:ablate_d1} shows the impact of $\Delta$ when the time is limited by 2s per sample. We observe that the robustness first increases with $\Delta$ and then decreases. The increase is for that we can compute the optimal objective value when $\Delta$ is low, and the computed collective robustness lower bound increases with $\Delta$ as validated in Fig.~\ref{fig:ablate_d1}. The decrease is because that the required time for solving $\mathbf{(P2)}$ is exponential to $\Delta$. Consequently, we can only obtain a loose bound that is far from the optimal value within the limited time, which causes the decline on the certified collective robustness.

\begin{figure}[tb!]
\centering
  \begin{subfigure}[b]{0.25 \textwidth}
  \centering
    \includegraphics[width=1.\linewidth]{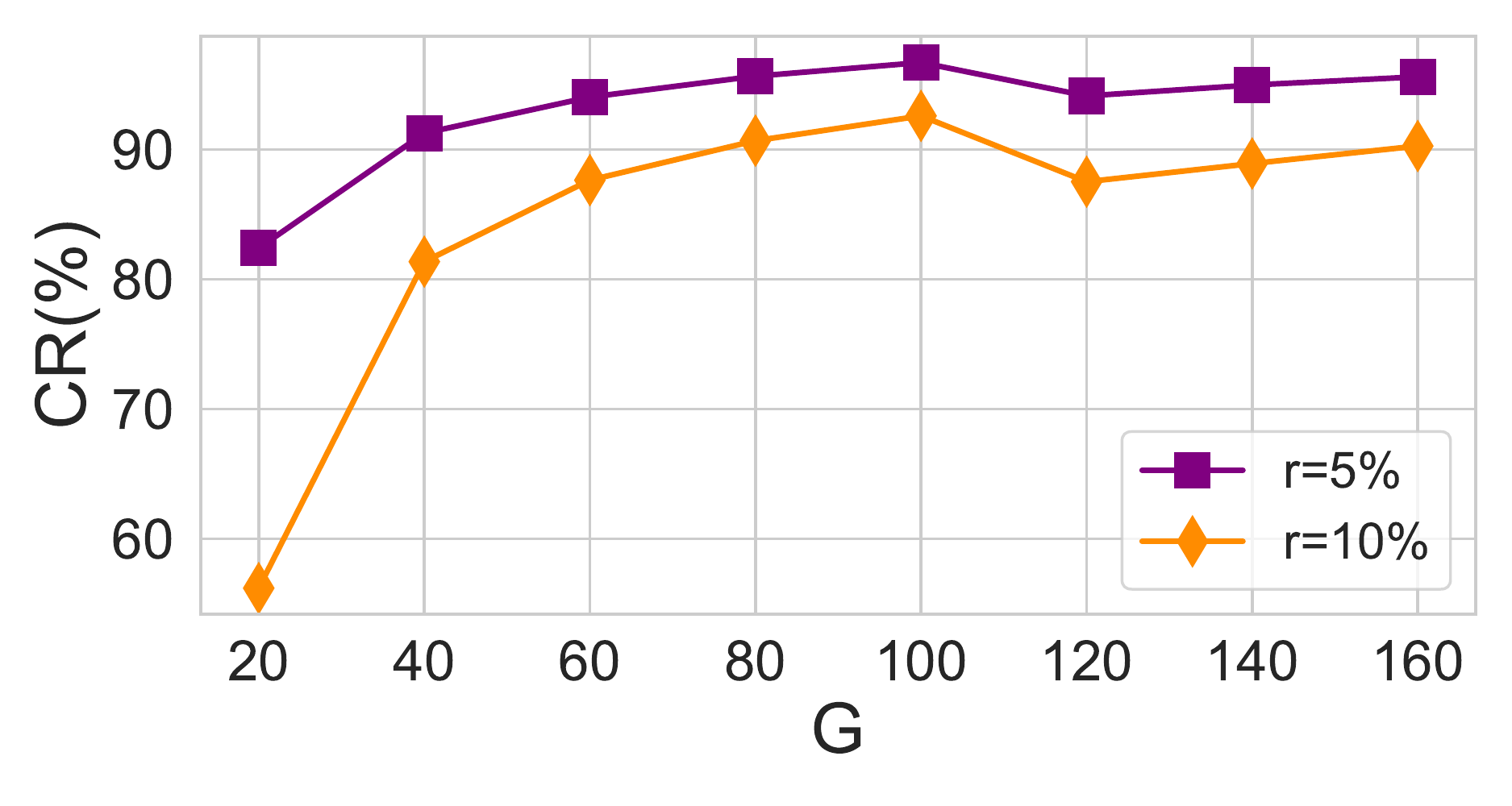}
    \caption{CR v.s. $G$}
    \label{fig:ablate_g}
  \end{subfigure}%
  \begin{subfigure}[b]{0.25\textwidth}
  \centering
    \includegraphics[width=1.\linewidth]{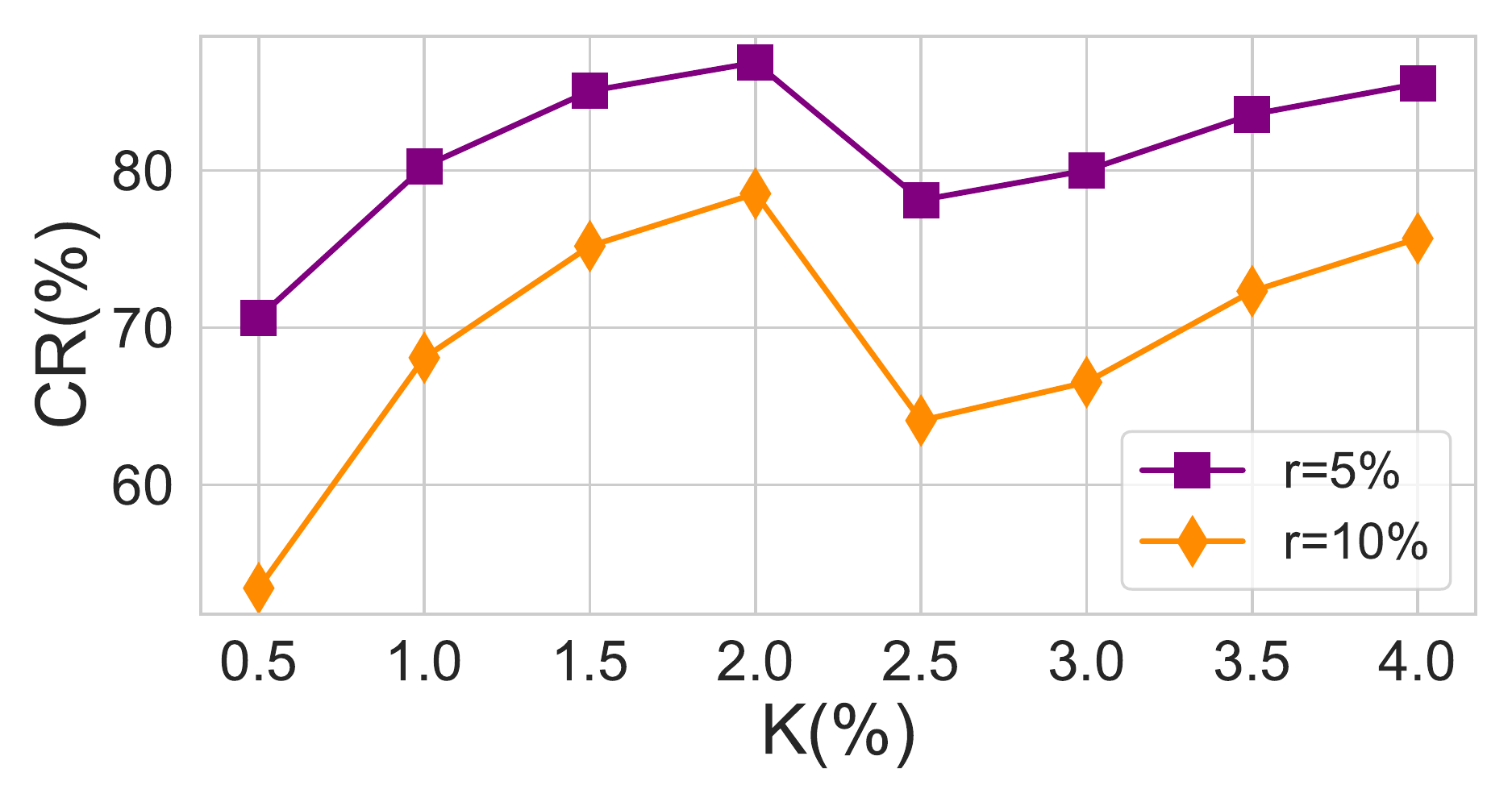}
    \caption{CR v.s. $K$}
    \label{fig:ablate_k}
  \end{subfigure}
  \\
  \begin{subfigure}[b]{0.4\textwidth}
  \centering
    \includegraphics[width=.85\linewidth]{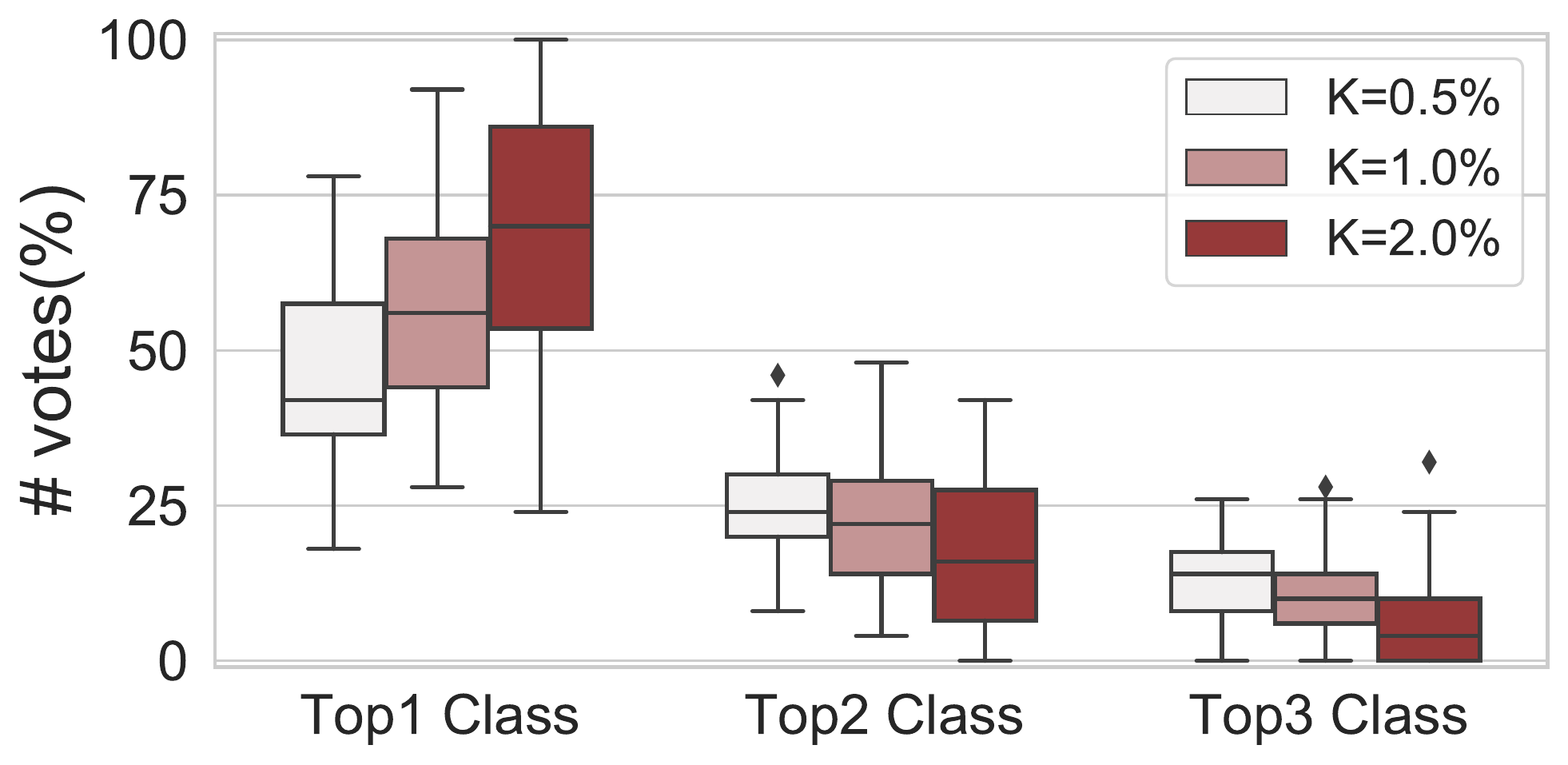}
    \caption{Sub-classifier voting distribution}
    \label{fig:votes_k}
  \end{subfigure}%
  \\
  \begin{subfigure}[b]{0.25 \textwidth}
  \centering
    \includegraphics[width=1.\linewidth]{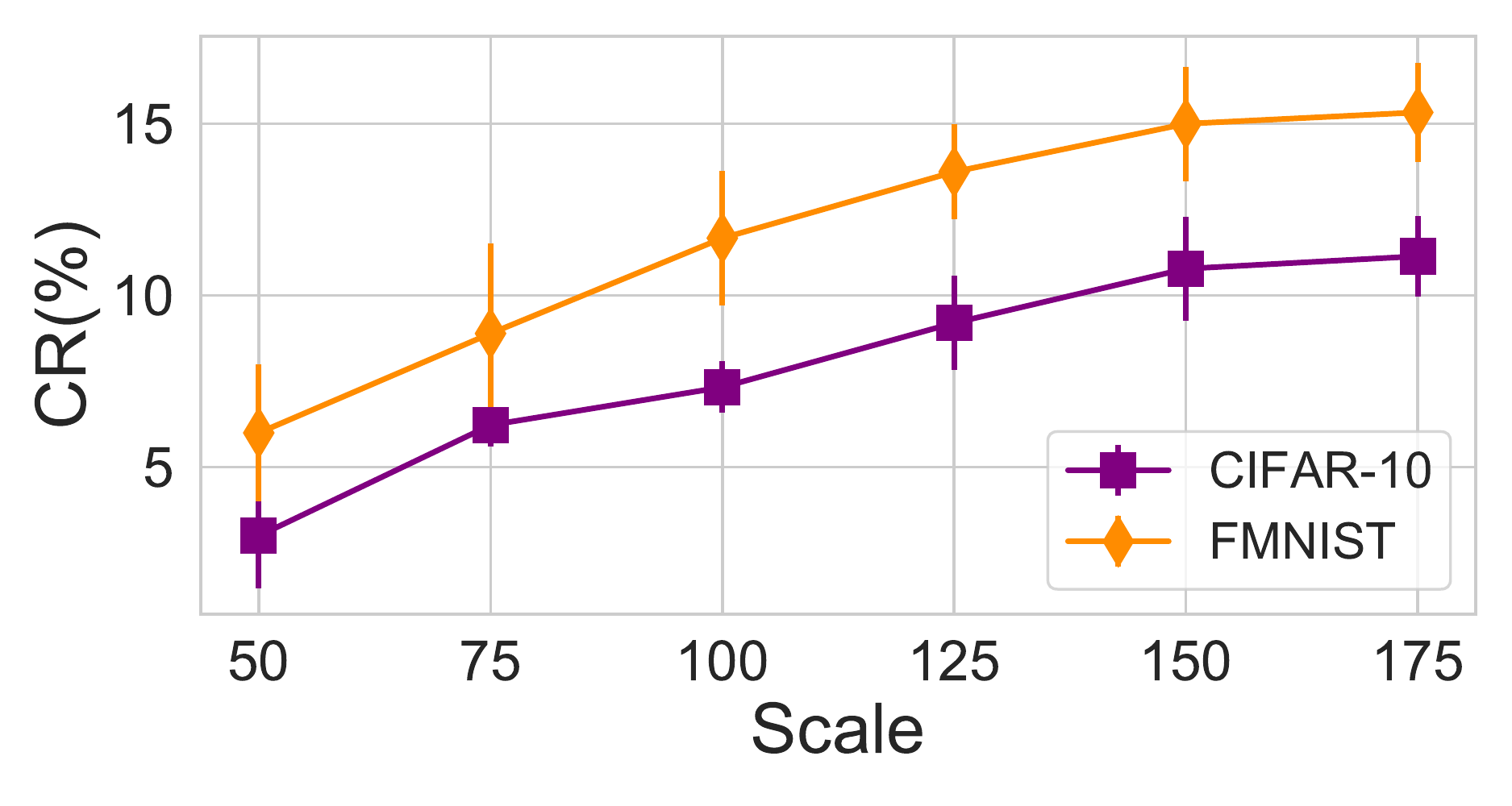}
    \caption{CR v.s. $\Delta$ (no time limit)}
    \label{fig:ablate_d1}
  \end{subfigure}%
  \begin{subfigure}[b]{0.25\textwidth}
  \centering
    \includegraphics[width=1.\linewidth]{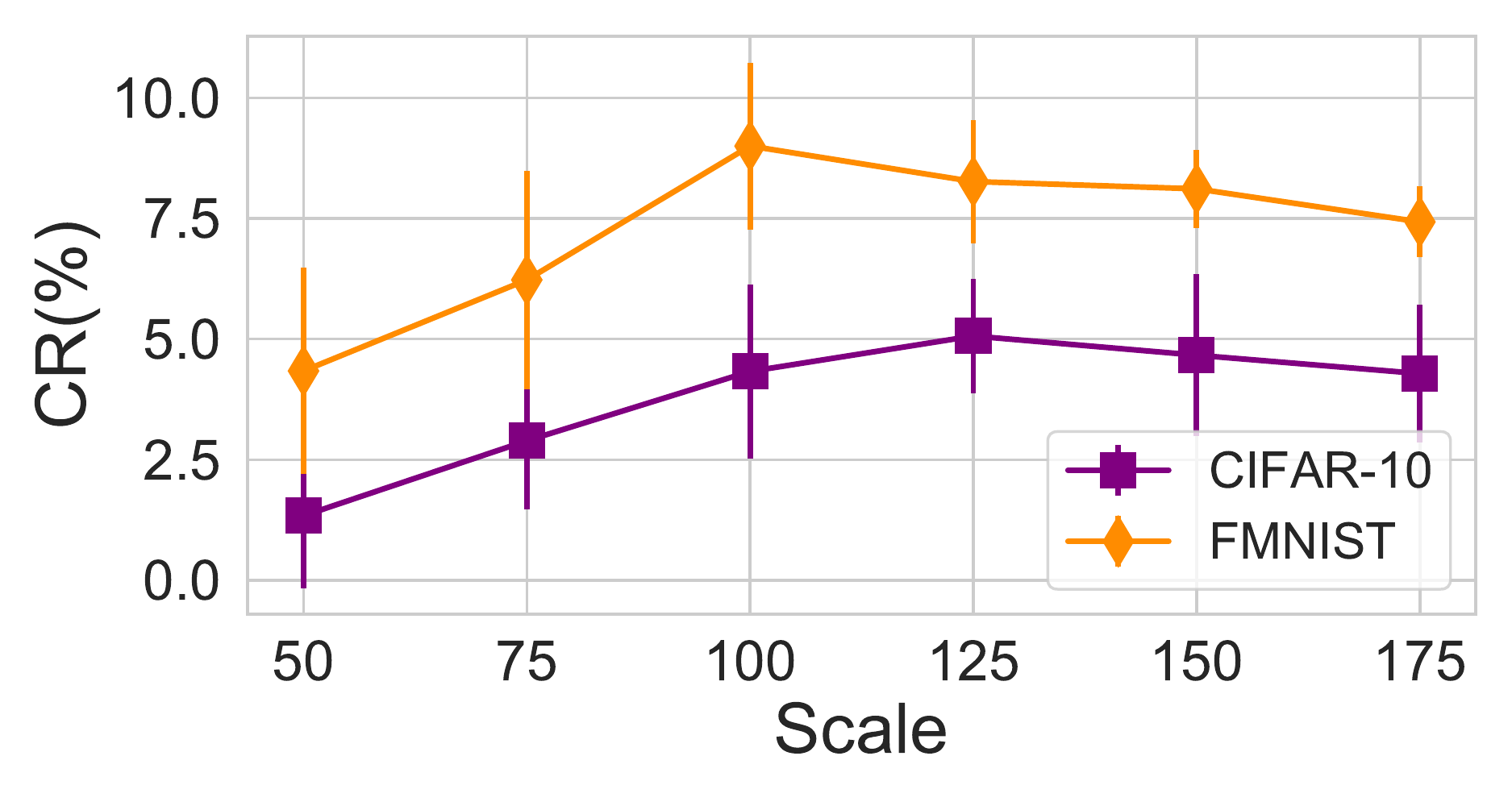}
    \caption{CR v.s. $\Delta$ (2s per sample)}
    \label{fig:ablate_d2}
  \end{subfigure}
\caption{Ablation study results on CV datasets. (a): $K=1\%N$ on FMNIST. (b): $G=50$ on CIFAR-10. (c): $G=50$ on CIFAR-10. (d) (e): $G=50$, $K=2\% N$, $r=30\% G$.} 
\end{figure}

\section{Conclusion}
Bagging, as a widely-used ensemble learning protocol, owns the certified robustness against data poisoning. In this paper, we derive the tight collective robustness certificate against the global poisoning attack for bagging. Current sample-wise certification is a specific variant of our collective certification. We also propose decomposition to accelerate the solving process. We analyze the upper bound of tolerable poison budget for vanilla bagging. Based on the analysis, we propose hash bagging to improve the certified robustness almost for free. Empirical results show the effectiveness of both our devised collective certification as well as the hash bagging. Our empirical results validate that: i) hash bagging is much robuster; ii) collective certification can yield a stronger collective robustness certificate.

\section*{Acknowledgements}
This work has been partially supported by the National Key R\&D Program of China No. 2020YFB1806700, NSFC Grant 61932014, NSFC Grant 61972246, Project BE2020026 supported by the Key R\&D Program of Jiangsu, China, and Shanghai Municipal Science and Technology Major Project (2021SHZDZX0102).

\bibliography{ref}
\bibliographystyle{icml2022}

\clearpage

\appendix

\section{Significance of Collective Robustness}
The fundamental difference between collective robustness and sample-wise robustness lies in \emph{the setting about the attacker objective}. For sample-wise robustness, the attacker aims to change a single prediction, while for collective robustness, the attacker aims to degrade the overall accuracy of a collection of predictions. Most data poisoning works ~\cite{wang2018data,goldblum2022dataset,geiping2020witches,huang2020metapoison,shafahi2018poison,wang2022improved} adopt the latter setting, which aim to maximize the attack success rate (the only metric in Poisoning Benchmark ~\cite{schwarzschild2021just}), hinting that collective robustness is more practical. In fact, sample-wise robustness is a special case of collective robustness when the collection size $M$=1, meaning that collective robustness is more general. In practice, if the model predicts a large collection of images at once, $M$ can be the collection size. If the model intermittently predicts a few images, $M$ can be the total number of the history predictions.

\section{Proofs}
\subsection{Proof of Prop.~\ref{pro:bagging}}

\begin{proposition}[Collective robustness of vanilla bagging]
For testset $\mathcal{D}_{test}=\{x_j\}_{j=0}^{M-1}$, we denote $\hat{y}_j= g(x_j)$ ($j=0,\ldots,M-1$) the original ensemble prediction, and $\mathcal{S}_i=\{g \mid s_i \in \mathcal{D}_g\}$ the set of the indices of the sub-trainsets that contain $s_i$. Then, the maximum number of simultaneously changed predictions (denoted by $M_{\rm{ATK}}$) under $r_{\rm{mod}}$ adversarial modifications, is computed by $\mathbf{(P1)}$:
\begin{align} 
    &\mathbf{(P1):} \quad M_{\rm{ATK}}= \max_{P_0,\ldots,P_{N-1}} \; \sum_{x_j \in \mathcal{D}_{test}} \mathbb{I} \left\{ \ovo_{x_j}(\hat{y}_j) < \right. \nonumber\\ 
    & \left. \qquad \qquad \qquad \qquad \max_{y \neq \hat{y}_j} \left[\ovo_{x_j}(y)+ \frac{1}{2}\mathbb{I}\{y <\hat{y}_j\}\right] \right\} \label{eq1:obj}\\
     & s.t. \quad [P_0, P_1, \ldots, P_{N-1}] \in \{0,1\}^N \label{eq1:c0}\\
     & \qquad \sum_{i=0}^{N-1} P_i \leq r_{\rm{mod}} \label{eq1:c1}\\
     & \ovo_{x_j}(\hat{y}_j) = \underbrace{\vo_{x_j}(\hat{y}_j)}_{\textbf{Original votes}}- \underbrace{\sum_{g=0}^{G-1} \mathbb{I}\{g \in \bigcup_{\forall i,P_i=1}\mathcal{S}_i\} \mathbb{I}\{ f_g(x_j)=\hat{y}_j \}  }_{\textbf{Influenced votes}} \nonumber \\
     & \qquad  \qquad \forall x_j \in \mathcal{D}_{test}, \; \hat{y}_j=g(x_j) \; \label{eq1:c2}\\
     & \ovo_{x_j}(y)= \underbrace{\vo_{x_j}(y)}_{\textbf{Original votes}}+ \underbrace{\sum_{g=0}^{G-1} \mathbb{I}\{ g \in \bigcup_{\forall i, P_i=1}\mathcal{S}_i\} \mathbb{I}\{ f_g(x_j) \neq y \} }_{\textbf{Influenced votes}} \nonumber \\
     & \qquad \qquad \forall x_j \in \mathcal{D}_{test},\; \forall y \in \Y, y\neq \hat{y}_j \label{eq1:c3} 
\end{align}
The collective robustness of vanilla bagging is $M-M_{\rm{ATK}}$.
\end{proposition}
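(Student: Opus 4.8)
The plan is to establish the claimed identity $M_{\rm{ATK}} = $ (maximum number of simultaneously changed predictions) by proving two matching inequalities, after first pinning down the correspondence between a concrete poisoning attack and the binary vector $[P_0,\ldots,P_{N-1}]$. By the threat model, modifying sample $s_i$ changes every sub-trainset that contains $s_i$, and hence compromises exactly the sub-classifiers indexed by $\mathcal{S}_i$; so choosing to modify the set $\{i : P_i = 1\}$ (which is legal precisely when $\sum_i P_i \le r_{\rm{mod}}$, i.e.\ Eq.(\ref{eq1:c0})--(\ref{eq1:c1})) compromises the set $C = \bigcup_{i : P_i = 1}\mathcal{S}_i$. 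This is the set appearing in the indicators $\mathbb{I}\{g \in \bigcup_{i,P_i=1}\mathcal{S}_i\}$ of Eq.(\ref{eq1:c2})--(\ref{eq1:c3}). The crucial structural observation, which I would isolate as a lemma, is that once $f_g$ is compromised the attacker controls its output on every test input \emph{independently}; consequently, for a fixed choice of $C$ the attacks on distinct test points do not interact.

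Next I would analyze a single prediction $x_j$ given a fixed compromised set $C$. For any target class $y \neq \hat{y}_j$, the single manipulation ``make every compromised sub-classifier output $y$ on input $x_j$'' simultaneously (i) removes from $\hat{y}_j$ all of its compromised voters, yielding exactly $\ovo_{x_j}(\hat{y}_j)$ of Eq.(\ref{eq1:c2}), and (ii) adds to $y$ every compromised sub-classifier that did not already vote $y$, yielding exactly $\ovo_{x_j}(y)$ of Eq.(\ref{eq1:c3}); moreover no other class gains votes, so these are the best achievable counts for $\hat{y}_j$ and $y$ at once. I would then verify that the tie-breaking rule ``smallest index among majority classes'' makes $\hat{y}_j$ cease to be the prediction exactly when some $y \neq \hat{y}_j$ satisfies $\ovo_{x_j}(y) > \ovo_{x_j}(\hat{y}_j)$, or $\ovo_{x_j}(y) = \ovo_{x_j}(\hat{y}_j)$ with $y < \hat{y}_j$. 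Since all vote counts are integers, this is equivalent to $\ovo_{x_j}(\hat{y}_j) < \ovo_{x_j}(y) + \tfrac12\mathbb{I}\{y < \hat{y}_j\}$, which after maximizing over $y$ is precisely the indicator in the objective Eq.(\ref{eq1:obj}); this is a short case check on $y < \hat{y}_j$ versus $y > \hat{y}_j$.

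With these two pieces, the first inequality $M_{\rm{ATK}} \ge$ (true maximum) is direct: any concrete attack modifying at most $r_{\rm{mod}}$ samples induces a feasible $P$, and because the BILP lower-bounds $\hat{y}_j$'s votes and upper-bounds each rival's votes against the most adversarial manipulation of $C$, any prediction the concrete attack actually flips must also trigger the objective indicator; hence the objective value at this $P$ dominates the number of flips. For the reverse inequality I would take an optimizer $P^\star$ of $\mathbf{(P1)}$, realize the physical attack that modifies $\{i : P_i^\star = 1\}$, and on each test point counted by the objective let the compromised classifiers output (on that input) the class $y$ attaining the per-point maximum. By the decoupling lemma these per-point choices never conflict, so a single attack within budget simultaneously flips all counted predictions, giving (true maximum) $\ge M_{\rm{ATK}}$. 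The collective robustness statement $M - M_{\rm{ATK}}$ then follows immediately.

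The main obstacle is exactly this reverse direction, i.e.\ the simultaneous realizability: the BILP optimizes the per-point worst cases $\ovo_{x_j}$ independently, and a priori it is not obvious that one fixed poisoning choice can attain all of them at once. The argument rests entirely on the threat-model assumption that a poisoned sub-classifier is fully and input-wise controllable, which is what lets the attacks on different $x_j$ decouple once $C$ is fixed; I would state this assumption explicitly and flag that it is where tightness (Remark~1) is bought. A secondary but routine point is checking that choosing a common target class $y$ per point does not create a hidden conflict between minimizing $\ovo_{x_j}(\hat{y}_j)$ and maximizing $\ovo_{x_j}(y)$ --- it does not, since redirecting a compromised vote to $y$ serves both goals at once.
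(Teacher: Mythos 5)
Your proof is correct and follows essentially the same route as the paper's: model the attack as the binary vector $[P_0,\ldots,P_{N-1}]$, invoke the threat-model assumption that compromised sub-classifiers are fully controllable to justify the vote bounds in Eq.~(\ref{eq1:c2})--(\ref{eq1:c3}) and the tie-breaking indicator, and establish equality via the two matching inequalities, which the paper phrases as a two-case contradiction in its tightness argument. Your explicit decoupling lemma and the per-point ``redirect all compromised votes to $y$'' observation simply make rigorous the simultaneous-realizability step that the paper's proof asserts implicitly when it claims an attack achieving $M_{\rm{ATK}}$ exists.
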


\begin{proof}
The collective robustness is defined as the minimum number of simultaneously unchanged predictions, which is equal to the total number of predictions $M$ minus the maximum number of simultaneously changed predictions (denoted as $M_{\rm{ATK}}$). To compute the collective robustness, we only need to compute $M_{\rm{ATK}}$. $M_{\rm{ATK}}$ equals the objective value of:
\begin{align}
    & \max_{P_0,\ldots,P_{N-1}} \; \sum_{x_j \in \mathcal{D}_{test}} \mathbb{I} \{ \ovo_{x_j}(\hat{y}_j) \nonumber\\
    & < \max_{y \neq \hat{y}_j} \left[\ovo_{x_j}(y)+ \frac{1}{2}\mathbb{I}\{y <\hat{y}_j\}\right] \}
\end{align}
where $\ovo_{x_j}(y)$ denotes the number of votes for class $y$ when predicting $x_j$, after being attacked. We now explain each equation. Eq.~\ref{eq1:obj}: for the prediction of $x_j$, the prediction is changed only if \emph{there exists a class that obtains more votes than $y_j$ or the same number of votes but with a smaller index}. We consider three cases for the prediction of $x_j$:\\
\textbf{Case I:} $\ovo_{x_j}(\hat{y}_j) < \max_{y \neq \hat{y}_j}\ovo_{x_j}(y)$: we have $\ovo_{x_j}(\hat{y}_j) < \max_{y \neq \hat{y}_j}\ovo_{x_j}(y)+ \frac{1}{2}\mathbb{I}\{y <\hat{y}_j\}$, and the prediction of $x_j$ is changed.\\
\textbf{Case II:} $\ovo_{x_j}(\hat{y}_j) = \max_{y \neq \hat{y}_j}\ovo_{x_j}(y)$: whether the prediction is changed is determined by $\mathbb{I}\{y <\hat{y}_j\}$. If $\mathbb{I}\{y <\hat{y}_j\}=0$, meaning that there is no majority class with the smaller index than $\hat{y}_j$, then the prediction $\hat{y}_j$ is unchanged. Otherwise the prediction is changed.\\
\textbf{Case III:} $\ovo_{x_j}(\hat{y}_j) > \max_{y \neq \hat{y}_j}\ovo_{x_j}(y)$: we have $\ovo_{x_j}(\hat{y}_j) > \max_{y \neq \hat{y}_j}\ovo_{x_j}(y)+ \frac{1}{2}\mathbb{I}\{y <\hat{y}_j\}$, and the prediction of $x_j$ is unchanged.\\
We model the attack as $[P_0, P_1, \ldots, P_{N-1}] \in \{0,1\}^N$ where $P_i=1$ means that the attacker modifies the $i$-th training sample $s_i$. Since the attacker is only allowed to modify $r_{\rm{mod}}$ samples, we bound $\sum_{i=0}^{N-1} P_i \leq r_{\rm{mod}}$. We consider the predictions from the sub-classifiers whose sub-trainsets are changed, as the influenced predictions. Those influenced predictions are considered to be fully controlled by the attacker under our threat model. For the fixed $[P_0, P_1, \ldots, P_{N-1}]$, to maximize the number of simultaneously changed predictions, the optimal strategy is to change all the influenced predictions that equals $\hat{y}_j$ to other classes. Thus we have 
\begin{equation} \footnotesize
    \ovo_{x_j}(\hat{y}_j) = \underbrace{\vo_{x_j}(\hat{y}_j)}_{\textbf{Original votes}}- \underbrace{\sum_{g=0}^{G-1} \mathbb{I}\{g \in \bigcup_{\forall i,P_i=1}\mathcal{S}_i\} \mathbb{I}\{ f_g(x_j)=\hat{y}_j \}  }_{\textbf{Influenced votes}}
\end{equation}
Note that the attacker can arbitrarily manipulate the influenced predictions, so the number of votes for $y \neq y_j$ is 
\begin{equation} \footnotesize
    \ovo_{x_j}(y)= \underbrace{\vo_{x_j}(y)}_{\textbf{Original votes}}+ \underbrace{\sum_{g=0}^{G-1} \mathbb{I}\{ g \in \bigcup_{\forall i,P_i=1}\mathcal{S}_i\} \mathbb{I}\{ f_g(x_j) \neq y \} }_{\textbf{Influenced votes}}
\end{equation}
\textbf{Tightness.} The collective robustness $M- M_{\rm{ATK}}$ is tight for: 1) if the computed collective robustness $M- M_{\rm{ATK}}$ is lower than the actual collective robustness, meaning that our computed $M_{\rm{ATK}}$ is higher than the maximum number of simultaneously changed predictions, which contradicts the fact that we have find an attack that can achieve $M_{\rm{ATK}}$ under our threat model. 2) if the computed collective robustness $M- M_{\rm{ATK}}$ is higher than the actual collective robustness, meaning that our computed $M_{\rm{ATK}}$ is lower than the maximum number of simultaneously changed predictions, which contradicts the fact that $M_{\rm{ATK}}$ is the optimal objective value under our threat model.
\end{proof}

\subsection{Proof of NP-hardness}
\label{sec:NPhard}
We reformulate $\mathbf{(P1)}$ into the standard form of a BILP problem, which has been shown to be an NP-Complete problem~\cite{BILP}, to prove its NP-hardness. 
\begin{proof}
First of all, we introduce four sets of binary variables:
\begin{equation}
    \begin{aligned}
    \mathbf A &= [A_0,A_1,\dots,A_i,\dots,A_{G-1}] \in \{0,1\}^G, \\
    \mathbf Y &= [Y_0,Y_1,\dots,Y_j,\dots,Y_{M-1}] \in \{0,1\}^M, \\
    \mathbf Z &= [Z_{0,0},Z_{0,1},\dots,Z_{j,l},\dots,Z_{M-1,C-1}] \in \{0,1\}^{M \times C}, \\
    \mathbf W &= [W_0,W_1,\dots,W_k,\dots,W_{N-1}] \in \{0,1\}^N,
    \end{aligned}
\end{equation}
where $\mathbf A$ denotes the selected sub-classifiers to attack, $\mathbf Y$ denotes the attacked test samples, $\mathbf Z$ is an auxiliary set of binary variables for the prediction classes, $\mathbf W$ represents the poisoned training samples. In according with the main text, $G$ is the number of sub-classifiers, $M$ denotes the number of test samples, $C$ is the number of prediction classes, $N$ represents the number of training samples. 

With the notations defined above, we can reformulate $\mathbf{(P1)}$ as follows:
\begin{align}
    \mbox{Maximize}\qquad\qquad\qquad\quad M_{ATK} &= \sum_{j=0}^{M-1} Y_j \label{eq:newobj} \\
    s.t.\qquad\qquad\qquad\qquad\quad \sum_{k=0}^{N-1} &W_k \leq r_{mod}\label{eq:newcon1} \\
    \forall i, \;  A_i \leq \sum_{k=1}^{N-1} &W_k\mathbb I\{i \in \mathcal{S}_k\}\label{eq:newcon2} \\
    \forall j, l\neq \hat{y}_j,i,\ \mbox{either}\ Z_{j,l} \leq 0\ \mbox{or}\ &V_{x_j}(\hat{y}_j) - V_{x_j}(l) \leq \nonumber \\  
    \sum_{i=0}^{G-1} A_i (\mathbb I\{f_i(x_j) \neq l\} + \mathbb I &\{f_i(x_j)=\hat{y}_j\})\label{eq:newcon3} \\
    \forall j,\ \mbox{either}\ Y_j \leq 0\ &\mbox{or}\ \sum_{l=0}^{C-1}Z_{j,l}\geq 2 \label{eq:newcon4}
\end{align}

We now explain each equation respectively. Eq.~(\ref{eq:newobj}) is the variant of Eq.~(\ref{eq1:obj}), denoting that our objective is to maximize the number of attacked test samples. Eq.~(\ref{eq:newcon1}) shares the same meaning as Eq.~(\ref{eq1:c1}), which restricts the number of poisoned training samples to be less than $r_{mod}$. Eq.~(\ref{eq:newcon2}) restricts the selected sub-classifiers should be in $\bigcup_{\forall k,P_k=1}\mathcal S_k$. Eq.~(\ref{eq:newcon3}) shows that $Z_{j,l}$ could be 1 only when the ensemble prediction of the test sample $j$ can be changed from $\hat{y}_j$ to $l$ (we ignore the minimum index constraint for simplicity). Eq.~(\ref{eq:newcon4}) shows that $Y_j$ could be 1 (the test sample $j$ is attacked successfully) only when there exists some classes that the ensemble prediction can be changed to. We use the equation $\sum_{l=0}^{C-1}Z_{i,l}\geq 2$ since we always have $Z_{j,\hat{y}_j}=1$. 

The formulation above has been in the standard form of a BILP problem, except the ``either...or...'' clause. Using the transformation trick in~\cite{BILP}, e.g.
\begin{equation*}
    \mbox{either}\quad  x_1 + x_2 \leq 4 \quad  \mbox{or}\quad \ x_1 + 1.5x_2 \leq 6 
\end{equation*}
is equal to
\begin{align*}
    x_1 + x_2 &\leq 4 + My \\
    x_1 + 1.5x_2 &\leq 6 + M(1-y) 
\end{align*}
where $M$ is a large number, $y$ is an auxiliary introduced binary variable. 

Thus, we can transform Eq.~(\ref{eq:newcon3}) and Eq.~(\ref{eq:newcon4}) into the standard form of constraints by introducing additionally number and binary variables, which means that $\mathbf{(P1)}$ can be transformed into the standard form of a BILP problem. Now we can tell that $\mathbf{(P1)}$ is an NP-hard problem.
\end{proof}

\subsection{Proof of Prop.~\ref{pro:upper}}
\begin{proposition}[Upper bound of tolerable poison budget]
Given $\mathcal{S}_i$ ($i=0,\ldots,N-1$), \emph{the upper bound of the tolerable poisoned samples} (denoted by $\overline{r}$) is \vspace{-5pt}
\begin{equation} \vspace{-5pt}
    \overline{r}=\min |\Pi| \; s.t.\; |\bigcup_{i \in \Pi} \mathcal{S}_i| > G/2
\end{equation}
which equals the minimum number of training samples that can influence more than a half of sub-classifiers.
\end{proposition}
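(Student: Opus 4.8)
The plan is to show that control over \emph{more than half} of the sub-classifiers lets the attacker overturn every ensemble vote simultaneously, and that seizing such control costs exactly $\overline{r}$ poisoned samples; combining these two facts yields the stated claim that the collective robustness vanishes once the budget reaches $\overline{r}$.

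First I would translate a poisoning attack into the variables of $\mathbf{(P1)}$. Writing $\Pi=\{i:P_i=1\}$ for the poisoned indices, the influenced sub-classifiers form the set $\mathcal{A}=\bigcup_{i\in\Pi}\mathcal{S}_i$, which the threat model lets the attacker drive arbitrarily --- this is precisely what Eq.~(\ref{eq1:c2}) and Eq.~(\ref{eq1:c3}) encode. For a fixed $x_j$, only the uninfluenced sub-classifiers can be relied on to keep voting for $\hat{y}_j$, so Eq.~(\ref{eq1:c2}) gives $\ovo_{x_j}(\hat{y}_j)\le G-|\mathcal{A}|$; meanwhile, steering every influenced sub-classifier onto one competing class $y\neq\hat{y}_j$ makes Eq.~(\ref{eq1:c3}) yield $\ovo_{x_j}(y)\ge|\mathcal{A}|$.

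The core step is then a one-line counting inequality: as soon as $|\mathcal{A}|>G/2$ we have $|\mathcal{A}|>G-|\mathcal{A}|$, hence
\[
\ovo_{x_j}(y)\ \ge\ |\mathcal{A}|\ >\ G-|\mathcal{A}|\ \ge\ \ovo_{x_j}(\hat{y}_j),
\]
so the strict condition inside the indicator of Eq.~(\ref{eq1:obj}) (Case~I in the proof of Prop.~\ref{pro:bagging}) is triggered and the prediction flips. Since this bound is uniform in $j$, a single majority-controlling $\mathcal{A}$ changes all $M$ predictions, forcing $M_{\rm{ATK}}=M$ and collective robustness $M-M_{\rm{ATK}}=0$. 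To finish, I would observe that poisoning $\Pi$ influences exactly $\bigcup_{i\in\Pi}\mathcal{S}_i$, so the cheapest way to make the influenced set exceed $G/2$ is, by definition, $\overline{r}=\min|\Pi|$ subject to $|\bigcup_{i\in\Pi}\mathcal{S}_i|>G/2$; hence every budget $r\ge\overline{r}$ realizes such an $\mathcal{A}$ and drives robustness to zero.

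The hard (and essentially only delicate) part will be the boundary $|\mathcal{A}|=G/2$, which is exactly why the threshold is the strict ``more than half'' and why $\overline{r}$ is tight rather than merely an upper estimate. Here the tie-break term $\tfrac12\mathbb{I}\{y<\hat{y}_j\}$ in Eq.~(\ref{eq1:obj}) must be handled together with the parity of $G$: when the attacker controls only $G/2$ sub-classifiers the two vote counts can at best tie, and on a test point whose honest votes are unanimous for the smallest-index class the tie-break preserves the original prediction. This exhibits, for every budget below $\overline{r}$ (where minimality of $\overline{r}$ forbids $|\mathcal{A}|>G/2$), a surviving prediction, confirming that $\overline{r}$ is the exact tolerable threshold given only the sub-trainsets and sub-classifier outputs.
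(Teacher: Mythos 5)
Your proposal is correct and takes essentially the same route as the paper's proof: poisoning the minimizing set $\Pi$ makes the influenced set $\mathcal{A}=\bigcup_{i\in\Pi}\mathcal{S}_i$ exceed $G/2$, whence $\ovo_{x_j}(\hat{y}_j)\le G-|\mathcal{A}|<G/2<|\mathcal{A}|\le \ovo_{x_j}(y)$ flips every prediction strictly (no tie-break needed), so the collective robustness is zero for any budget $\ge\overline{r}$. Your closing sketch of the converse direction (a unanimous vote for the smallest-index class survives whenever $|\mathcal{A}|\le G/2$) goes slightly beyond the paper, which asserts this tightness only in a main-text remark without proof.
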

\begin{proof}
We prove that, $\forall r_{\rm{mod}} \geq \overline{r}$, the collective robustness computed from $\mathbf{(P1)}$ is $0$. Specifically, when $r_{\rm{mod}} \geq \overline{r}$, if we choose to poison the training samples whose indices are within $\Pi$, for all $\hat{y}_j$, the number of votes for the original ensemble prediction $\hat{y}_j$ is
\begin{footnotesize}
\begin{align}
    & \ovo_{x_j}(\hat{y}_j) =\vo_{x_j}(\hat{y}_j)- \sum_{g=0}^{G-1} \mathbb{I}\{g \in \bigcup_{\forall i,P_i=1}\mathcal{S}_i\} \mathbb{I}\{ f_g(x_j)=\hat{y}_j \} \\
    &=\vo_{x_j}(\hat{y}_j)-\sum_{g=0}^{G-1} \mathbb{I}\{g \in \bigcup_{i \in \Pi}\mathcal{S}_i\} \mathbb{I}\{ f_g(x_j)=\hat{y}_j \}\\
    & = \sum_{g=0}^{G-1} \mathbb{I}\{ f_g(x_j)=\hat{y}_j \}- \sum_{g=0}^{G-1} \mathbb{I}\{g \in \bigcup_{i \in \Pi}\mathcal{S}_i\} \mathbb{I}\{ f_g(x_j)=\hat{y}_j \} \\
    & \leq \sum_{g=0}^{G-1} \mathbb{I}\{g \not\in \bigcup_{i \in \Pi}\mathcal{S}_i\} \\
    & < \frac{G}{2}
\end{align}
\end{footnotesize}
The number of votes for other classes $y \neq \hat{y}_j$ is
\begin{footnotesize}
\begin{align}
    & \ovo_{x_j}(y)= \vo_{x_j}(y)+ \sum_{g=0}^{G-1} \mathbb{I}\{ g \in \bigcup_{\forall i,P_i=1}\mathcal{S}_i\} \mathbb{I}\{ f_g(x_j) \neq y \} \\
    &= \vo_{x_j}(y) + \sum_{g=0}^{G-1} \mathbb{I}\{g \in \bigcup_{i \in \Pi}\mathcal{S}_i\} \mathbb{I}\{ f_g(x_j)  \neq y \}\\
    & \geq \sum_{g=0}^{G-1} \mathbb{I}\{g \in \bigcup_{i \in \Pi}\mathcal{S}_i\}\\
    & > \frac{G}{2}
\end{align}
\end{footnotesize}
We have
\begin{footnotesize}
\begin{align}
    & \ovo_{x_j}(\hat{y}_j) - \max_{y \neq \hat{y}_j} \left[\ovo_{x_j}(y)+ \frac{1}{2}\mathbb{I}\{y <\hat{y}_j\} \right] \\
    & \leq \frac{G}{2}-\frac{G}{2}+1 -\frac{1}{2}\\
    & <0
\end{align}
\end{footnotesize}
Therefore, $\forall x_j$, the prediction $\hat{y}_j$ is considered to be corrupted. The certified collective robustness is $0$.
\end{proof}

\subsection{Proof of Prop.~\ref{pro:hash}}
\begin{proposition}[Certified collective robustness of hash bagging]
For testset $\mathcal{D}_{test}=\{x_j\}_{j=0}^{M-1}$, we denote $\hat{y}_j= g(x_j)$ ($j=0,\ldots,M-1$) the ensemble prediction. The maximum number of simultaneously changed predictions (denoted by $M_{\rm{ATK}}$) under $r_{\rm{ins}}$ insertions, $r_{\rm{del}}$ deletions and $r_{\rm{mod}}$ modifications, is computed by $\mathbf{(P2)}$:
\begin{footnotesize}
\begin{align} 
    &\mathbf{(P2)}: \quad M_{\rm{ATK}}= \max_{A_0,\ldots,A_{G-1}} \; \sum_{x_j \in \mathcal{D}_{test}} \mathbb{I} \left\{ \ovo_{x_j}(\hat{y}_j) < \right. \nonumber\\ 
    & \left. \qquad \qquad \qquad \qquad \max_{y \neq \hat{y}_j} \left[\ovo_{x_j}(y)+ \frac{1}{2}\mathbb{I}\{y <\hat{y}_j\}\right] \right\} \label{eq1:p2obj}\\
     & s.t. \quad [A_0, A_1, \ldots, A_{G-1}] \in \{0,1\}^G \label{eq1:p2c0}\\
     & \sum_{g= (l-1) \hat{G}}^{l \hat{G} -1} A_g \leq r_{\rm{ins}}+r_{\rm{del}}+2r_{\rm{mod}} \nonumber \\
     & \qquad \qquad l=1,\ldots, \lceil G/\hat{G} \rceil \label{eq1:p2c1}\\
     & \ovo_{x_j}(\hat{y}_j) = \underbrace{\vo_{x_j}(\hat{y}_j)}_{\textbf{Original votes}}- \underbrace{\sum_{g=1}^G A_g \mathbb{I}\{ f_g(x_j)=\hat{y}_j \}}_{\textbf{Influenced votes}}  \nonumber \\
     & \qquad \qquad \forall x_j \in \mathcal{D}_{test} \label{eq1:p2c2}\\
     & \ovo_{x_j}(y)= \underbrace{\vo_{x_j}(y)}_{\textbf{Original votes}}+ \underbrace{\sum_{g=1}^G A_g \mathbb{I}\{ f_g(x_j) \neq y \}}_{\textbf{Influenced votes}} \nonumber \\
     & \qquad \qquad \forall x_j \in \mathcal{D}_{test},\; \forall y \neq \hat{y}_j \label{eq1:p2c3}
\end{align}
\end{footnotesize}
The collective robustness is $M-M_{\rm{ATK}}$.
\end{proposition}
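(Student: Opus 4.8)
The plan is to mirror the proof of Proposition~\ref{pro:bagging} almost line for line, reusing every part that is independent of how the attack budget enters, and then to replace the sample-level budget by the sub-classifier-level budget that the hash construction enables. First I would reuse verbatim the objective-function argument: the collective robustness is $M-M_{\rm{ATK}}$, and the prediction on $x_j$ is changed precisely when $\ovo_{x_j}(\hat{y}_j) < \max_{y\neq \hat{y}_j}[\ovo_{x_j}(y)+\frac{1}{2}\mathbb{I}\{y<\hat{y}_j\}]$. The identical three-case analysis (strict loss, tie broken by smaller index, strict win) justifies Eq.~(\ref{eq1:p2obj}) with no change.

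The conceptual shift is in the attack variables. In $\mathbf{(P1)}$ the variables $P_i$ mark poisoned training samples and the influenced sub-classifiers are recovered as $\bigcup_{i:P_i=1}\mathcal{S}_i$. Under the threat model, any sub-classifier whose sub-trainset is altered is fully controlled, so only the \emph{set} of influenced sub-classifiers affects the vote counts. I would therefore introduce $A_g$ directly as the indicator that $f_g$ is influenced (Eq.~(\ref{eq1:p2c0})). Given this set, the worst-case vote counts follow exactly as in Proposition~\ref{pro:bagging}: to minimize $\ovo_{x_j}(\hat{y}_j)$ the adversary flips every influenced vote originally reading $\hat{y}_j$, giving Eq.~(\ref{eq1:p2c2}); to maximize $\ovo_{x_j}(y)$ it converts every influenced sub-classifier not already voting $y$ into a vote for $y$, giving Eq.~(\ref{eq1:p2c3}).

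The crux is Eq.~(\ref{eq1:p2c1}), and this is where the hash construction is used. The $G$ sub-classifier indices partition into $\lceil G/\hat{G}\rceil$ blocks of size $\hat{G}$, one per trainset-hash pair, and Eq.~(\ref{eq1:p2c1}) bounds the number of influenced indices inside each block by $r_{\rm{ins}}+r_{\rm{del}}+2r_{\rm{mod}}$. I would justify this by the bounded-influence-scope property established for hash bagging: within one trainset-hash pair each sample is routed to the single bucket $\mathrm{Hash}_{\hat{h}}(s_i)\bmod \hat{G}$, so an insertion or a deletion alters at most one sub-trainset of that pair while a modification alters at most two (the bucket it leaves and the bucket it enters); summing over the budget gives the per-block bound. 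Since the same physical operations act in every pair, the bound holds simultaneously for all $l$. This shows every realizable attack yields a feasible $A$, so $M_{\rm{ATK}}$ upper-bounds the true number of changeable predictions and $M-M_{\rm{ATK}}$ is a sound certificate.

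For the tightness claim I would argue the converse, and this is where I expect the main difficulty. The per-block bound is clearly necessary, but to be tight a single coherent choice of insert/delete/modify operations must simultaneously realize the prescribed influence in \emph{every} block. The argument I would pursue relies on the $\lceil G/\hat{G}\rceil$ pairs using \emph{distinct} hash functions, so that the bucket a given operation lands in is decoupled across blocks; with enough free samples (and free feature values for modifications) one can then route the operations so that each block's targeted sub-trainsets are exactly the ones marked by $A$. Making this routing precise — in particular ruling out forced collisions in which one operation must miss its target in some block — is the delicate step; the remainder is a direct transcription of Proposition~\ref{pro:bagging}.
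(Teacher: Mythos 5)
Your proposal follows the paper's proof essentially step for step: \textbf{(P2)} is treated as \textbf{(P1)} with the attack re-parameterized by the influenced sub-classifiers $A_g$, soundness (the lower-bound direction) follows from the bounded-influence-scope property giving the per-pair budget constraint, and tightness is shown by realizing any feasible $A$ with an actual attack. The ``delicate routing step'' you flag is handled in the paper exactly as you anticipate: it constructs an insertion attack with $r$ samples $\hat{s}_o$ chosen so that $\mathrm{Hash}_l(\hat{s}_o) \bmod \hat{G} = \beta_{l,o}$ for every trainset-hash pair $l$, relying on the distinct hash functions to decouple bucket assignments across pairs (the existence of such samples is asserted rather than proved, so the paper is no more rigorous on this point than you are).
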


\begin{proof}
In fact, $\mathbf{(P2)}$ is a simplified version of $\mathbf{(P1)}$ which exploits the properties of hash bagging. $\mathbf{(P2)}$ is mainly different from $\mathbf{(P1)}$ in Eq. (\ref{eq1:c0}) and Eq. (\ref{eq1:c1}). Specifically, in $\mathbf{(P2)}$, the poisoning attack is expressed as $[A_0, A_1, \ldots, A_{G-1}]$, where $A_g$ denotes whether the $g$-th sub-classifier is influenced, instead of whether the $g$-th sample is modified in $\mathbf{(P1)}$. Based on the property of hash bagging, each trainset-hash pair $(\mathcal{D}_{train}, \textrm{Hash}(\cdot))$ is partitioned into $\lfloor N/K \rfloor$ disjoint sub-trainsets. Therefore, $r_{\rm{ins}}$ insertions, $r_{\rm{del}}$ deletions and $r_{\rm{mod}}$ modifications can influence at most $r_{\rm{ins}}+r_{\rm{del}}+2r_{\rm{mod}}$ sub-trainsets within each trainset-hash pair, as shown in Eq. (\ref{eq1:p2c1}).

\textbf{Tightness.} When $N \leq GK$, the proof of tightness is the same as that for $\mathbf{(P1)}$. Next, we prove that our robustness is tight. In particular, we prove: i) the collective robustness computed from $\mathbf{(P2)}$ is a lower bound. ii) the collective robustness $M-M_{\rm{ATK}}$ by $\mathbf{(P2)}$ is an upper bound.

\textbf{i)} For arbitrary $r_{\rm{ins}}$ insertions, $r_{\rm{del}}$ deletions and $r_{\rm{mod}}$ modifications can influence at most $r_{\rm{ins}}+r_{\rm{del}}+2r_{\rm{mod}}$ sub-trainsets within each trainset-hash pair. Therefore, for any poisoning attack ($r_{\rm{ins}}$ insertions, $r_{\rm{del}}$ deletions and $r_{\rm{mod}}$ modifications), we can denote it by $[A_0, A_1, \ldots, A_{G-1}]$:
\begin{align}
    & [A_0, A_1, \ldots, A_{G-1}] \in \{0,1\}^G \nonumber \\
     & \sum_{g= (l-1) \hat{G}}^{l \hat{G} -1} A_g \leq r_{\rm{ins}}+r_{\rm{del}}+2r_{\rm{mod}} \nonumber
\end{align}
The poisoning attacks denoted by Eq. (\ref{eq1:p2c0}), Eq. (\ref{eq1:p2c1}) are stronger than the practical poisoning attacks. Therefore, the collective robustness computed from $\mathbf{(P2)}$ is a lower bound.

\textbf{ii)} First we denote $\{A_{(l-1)\hat{G} + \beta_{l,o}} \mid o=0,\ldots,r-1; \; l=1,\ldots, \lceil G/\hat{G} \rceil;\; \beta_{l,o} \in [0, \hat{G}-1] \}$ the influenced sub-classifiers ($A_{(l-1)\hat{G} + \beta_{l,o}}=1$). We construct an insertion attack as follow: we insert $r$ new samples (denoted by $\hat{s}_o: \; o=0,\dots,r-1$), where the hash value of $\hat{s}_o$ computed by the $l$-th hash function mod $\hat{G}$ is  $\beta_{l,o}$. We can achieve $M_{\rm{ATK}}$ within poison budget $r$. Therefore, the collective robustness $M-M_{\rm{ATK}}$ is an upper bound.
\end{proof}

\section{Certification Gap}
We intuitively show the gap between the collective robustness guaranteed by our collective certification and that of the sample-wise certification in Fig.~\ref{fig:gap}. 

\section{Comparison Overview}
Table~\ref{tab:compare} presents an overview of the theoretical comparisons to other certified defenses that are tailored to the general data poisoning attack.

\begin{table}[tb!]
	\centering
	\caption{Method comparison. \textbf{Model, Training, Bagging} denote whether the defense is compatible with various classifier models, training algorithms and general forms of bagging. \textbf{Sample-wise, Collective, Deterministic} denote whether the method can provide sample-wise robustness certificates, collective robustness certificates and deterministic robustness certificates. }
	\label{tab:compare}
	\begin{adjustbox}{width=0.98\linewidth}{
	\begin{threeparttable}
    {{	\begin{tabular}{c|lcccccccc}
			\toprule[0.8pt]
			& \multirow{2}{*}{Methods}  & \multicolumn{3}{|c|}{Certified Defense} & \multicolumn{3}{c}{Robustness Certification}\\
			\cline{3-8}
			& & \multicolumn{1}{|c}{Model} & Training & Bagging & \multicolumn{1}{|c}{Sample-wise} & Collective & Deterministic\\
			\midrule[0.6pt]
          \multirow{6}{*}{} 
            &~\cite{levine2021deep}  & \checkmark & \checkmark & \XSolid  & \checkmark & \XSolid & \checkmark\\
            &~\cite{jia2021intrinsic}   & \checkmark & \checkmark & \XSolid & \checkmark & \XSolid & \XSolid\\
            &~\cite{ma2019data} & \checkmark & \XSolid & -- & \checkmark & \XSolid & \XSolid \\
            &~\cite{jia2022rnn} & \XSolid & \XSolid & -- & \checkmark & \checkmark & \checkmark\\
            \midrule[0.5pt]
            & Ours  & \checkmark & \checkmark & \checkmark & \checkmark & \checkmark & \checkmark \\
			\bottomrule[0.8pt]
		\end{tabular}
		}}
	\end{threeparttable}
}
	\end{adjustbox}
\end{table}

\begin{figure}[t!]
  \centering
    \includegraphics[width=0.68\linewidth]{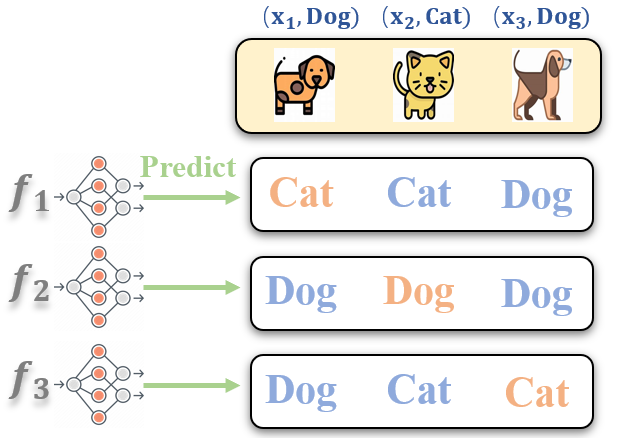}
    \caption{An example to illustrate the gap between the sample-wise certificate and the collective certificate. Suppose the sub-classifiers are $f_1(x), f_2(x), f_3(x)$, and the testing samples are $x_1,x_2,x_3$. The predictions \textcolor{lightblue}{Cat/Dog} are correct, and \textcolor{lightred}{Cat/Dog} are wrong. Consider an attacker (poison budget is 1) can control an arbitrary sub-classifier. \textbf{Sample-wise certificate}: we consider $g(x_1), g(x_2), g(x_3)$ independently. To change $g(x_1)$/$g(x_2)$/$g(x_3)$, the attacker can flip $f_2(x_1)$/$f_3(x_2)$/$f_1(x_3)$ respectively. Therefore, all the three predictions are not robust and the sample-wise robustness is $0$. \textbf{Collective certificate}: we consider $g(x_1), g(x_2), g(x_3)$ collectively. If the attacker poisons $f_1$/$f_2$/$f_3$, the prediction $g(x_1)$/$g(x_2)$/$g(x_3)$ is unchangeable respectively. Thus the collective robustness is $1$.}
  \label{fig:gap}
\end{figure}

\section{Implementation Details}
\label{sec:implement}
All the experiments are conducted on CPU (16 Intel(R) Xeon(R) Gold 5222 CPU @ 3.80GHz) and GPU (one NVIDIA RTX 2080 Ti).

\subsection{Training Algorithm}
Alg.~\ref{alg:train} summarizes our training process for hash bagging. It needs to set the random seed for reproducible training and train the sub-classifiers on the hash-based sub-trainsets.

\begin{algorithm2e}[t]
  \caption{Train the sub-classifiers.}
  \label{alg:train}
  \KwIn{trainset $\mathcal{D}_{train}$, number of sub-trainsets $G$, sub-trainset size $K$, hash functions $\rm{Hash}_{h}(\cdot): h=0,1,\ldots$.}
   Construct $G$ sub-trainsets $\mathcal{D}_g$ ($g=0,\ldots,G-1$) based on Eq. (\ref{eq:sub-trainset2}); \algcomment{Hash-based subsampling.} \\
   Set the random seed for training; \algcomment{Reproducible training.} \\
   Train the sub-classifiers $f_g$ on $\mathcal{D}_g$ ($g=0,\ldots,G-1$); \\
  \KwOut{The trained sub-classifiers $\{f_g\}_{g=1}^G$.}
\end{algorithm2e}

\begin{table*}[tb!]
\centering
\caption{Impact of $\Delta$ ($K=N/G$). The numerical results record the mean and variance of the certified robustness ratio. NaN: The number of breakable test samples $M \leq 6|\Delta|$ so we cannot calculate valid variance for CR ratios.} \label{tab:delta}
\begin{adjustbox}{width=\linewidth}
    \begin{tabular}{lll | cccccccccccc}
    \toprule
   Dataset & G & $\Delta$ & $5\%$ & $10\%$ & $15\%$ & $20\%$ & $25\%$ & $30\%$ & $35\%$ & $40\%$ & $45\%$ & $50\%$\\
    \midrule
   \multirow{14}{*}{FMNIST} & \multirow{6}{*}{50} 
    & 50 & \textbf{13.00$\pm$ 2.76} &
  15.00$\pm$ 5.86 &
  15.00$\pm$ 5.98 &
  11.66$\pm$ 3.54 &
  6.34$\pm$ 3.54 &
  4.34 $\pm$ 2.14 &
  1.00$\pm$ 1.00 &
  0.66$\pm$ 0.94 &
  0.00$\pm$ 0.00 &
  0.00$\pm$ 0.00 \\
    & & 75 & NaN &
  \textbf{19.56$\pm$  3.97} &
  \textbf{18.22$\pm$  5.59} &
  \textbf{16.22$\pm$ 2.92} &
  10.89$\pm$ 3.88 &
  6.22$\pm$ 2.27 &
  4.67$\pm$ 1.84 &
  1.11$\pm$ 0.92 &
  0.00$\pm$ 0.00 &
  0.00$\pm$ 0.00 \\
    & & 100 & NaN &
  18.17$\pm$ 0.74 &
  15.50$\pm$ 1.71 &
  13.17$\pm$ 3.02 &
  \textbf{12.47$\pm$ 1.34} &
  9.00$\pm$ 1.73 &
  6.5$\pm$ 1.61 &
  3.17$\pm$ 1.34 &
  0.00$\pm$ 0.00 &
  0.00$\pm$ 0.00 \\
    & & 125 & NaN &
  NaN &
  12.00$\pm$ 1.37 &
  11.33$\pm$ 0.72 &
  10.8$\pm$ 1.10 &
  \textbf{8.26$\pm$ 1.28} &
  \textbf{7.2$\pm$ 1.53} &
  4.67$\pm$ 1.07 &
  0.00$\pm$ 0.00 &
  0.00$\pm$ 0.00\\
    & & 175 & NaN &
  NaN &
  NaN &
  9.61$\pm$ 1.01 &
  8.38$\pm$ 0.63 &
  7.43$\pm$ 0.74 &
  5.81$\pm$ 0.95 &
  \textbf{5.62$\pm$ 1.21} &
  0.38$\pm$ 0.42 &
  0.00$\pm$ 0.00 \\
    & & 200 & NaN &
  NaN &
  NaN &
  8.66$\pm$ 1.25 &
  8.08$\pm$ 0.67 &
  7.08$\pm$ 1.06 &
  5.66$\pm$ 1.18 &
  5.25$\pm$ 0.75 &
  \textbf{0.84$\pm$ 0.75} &
  0.00$\pm$ 0.00\\
    \cmidrule(l){2-13}
    & \multirow{6}{*}{100} 
    & 50 & \textbf{13.34$\pm$2.74} &
  \textbf{13.34$\pm$3.40} &
  8.00$\pm$5.04 &
  8.66$\pm$4.42 &
  4.00$\pm$3.26 &
  1.66$\pm$1.38 &
  2.00$\pm$2.30 &
  0.00$\pm$0.00 &
  0.00$\pm$0.00 &
  0.00$\pm$0.00 \\
    & & 100 & NaN & 11.50$\pm$1.71 &
  \textbf{10.34$\pm$1.70} &
  \textbf{10.00$\pm$1.41} &
  \textbf{7.84$\pm$2.03} &
  \textbf{5.50$\pm$3.0} &
  4.33$\pm$1.97 &
  1.00$\pm$1.15 &
  0.00$\pm$0.00 &
  0.00$\pm$0.00 \\
    & & 150 & NaN &
  NaN &
  7.89$\pm$1.46 &
  7.45$\pm$1.51 &
  5.45$\pm$1.18 &
  4.78$\pm$0.25 &
  \textbf{4.78$\pm$0.6} &
  2.45$\pm$0.99 &
  0.00$\pm$0.00 &
  0.00$\pm$0.00 \\
    & & 200 & NaN &
  NaN &
  6.25$\pm$0.56 &
  5.25$\pm$0.75 &
  4.50$\pm$1.08 &
  4.42$\pm$0.78 &
  3.50$\pm$0.81 &
  2.34$\pm$0.98 &
  0.42$\pm$0.34 &
  0.00$\pm$0.00 \\
    & & 250 & NaN &
  NaN &
  NaN &
  5.20$\pm$0.86 &
  4.27$\pm$0.72 &
  3.53$\pm$0.71 &
  3.47$\pm$0.79 &
  \textbf{2.47$\pm$1.07} &
  0.60$\pm$0.24 &
  0.00$\pm$0.00 \\
    & & 300 & NaN &
  NaN &
  NaN &
  NaN &
  4.00$\pm$0.58 &
  3.50$\pm$0.37 &
  2.44$\pm$0.85 &
  2.44$\pm$0.85 &
  \textbf{0.89$\pm$0.25} &
  0.00$\pm$0.00 \\
    \hline
    \multirow{14}{*}{CIFAR-10} & \multirow{6}{*}{50} 
    & 50 & 15.33$\pm$ 5.73 &
  10.33$\pm$ 2.43 &
  9.00$\pm$ 4.73 &
  7.67$\pm$ 2.13 &
  5.33$\pm$ 3.94 &
  1.33$\pm$ 1.49 &
  0.33$\pm$ 0.75 &
  0.00$\pm$ 0.00 &
  0.00$\pm$ 0.00 &
  0.00$\pm$ 0.00\\
    & & 75 & \textbf{17.56$\pm$ 0.92} &
  \textbf{11.56$\pm$ 2.73} &
  12.00$\pm$ 2.88 &
  \textbf{10.67$\pm$ 1.53} &
  7.78$\pm$ 2.23 &
  2.89$\pm$ 1.43 &
  0.22$\pm$ 0.49 &
  0.00$\pm$ 0.00 &
  0.00$\pm$ 0.00 &
  0.00$\pm$ 0.00\\
    & & 100 & 14.50$\pm$ 3.69 &
  10.33$\pm$ 0.74 &
  \textbf{12.00$\pm$ 1.41} &
  9.50$\pm$ 2.06 &
  \textbf{8.50$\pm$ 0.96} &
  4.33$\pm$ 1.80 &
  1.16$\pm$ 1.46 &
  0.00$\pm$ 0.00 &
  0.00$\pm$ 0.00 &
  0.00$\pm$ 0.00\\
    & & 125 & 11.87$\pm$ 1.56 &
  9.33$\pm$ 1.64 &
  10.00$\pm$ 1.37 &
  8.00$\pm$ 0.92 &
  7.73$\pm$ 0.88 &
  \textbf{5.07$\pm$ 1.19} &
  2.00$\pm$ 1.44 &
  0.80$\pm$ 0.80 &
  0.00$\pm$ 0.00 &
  0.00$\pm$ 0.00\\
    & & 175 & 10.00$\pm$ 1.83 &
  9.33$\pm$ 0.63 &
  7.24$\pm$ 1.13 &
  6.67$\pm$ 1.03 &
  5.9$\pm$ 0.63 &
  4.29$\pm$ 1.43 &
  \textbf{3.05$\pm$ 1.17} &
  1.14$\pm$ 0.74 &
  0.00$\pm$ 0.00 &
  0.00$\pm$ 0.00\\
    & & 200 & 8.17$\pm$ 3.41 &
  8.33$\pm$ 0.63 &
  7.17$\pm$ 0.94 &
  5.83$\pm$ 0.69 &
  5.33$\pm$ 0.47 &
  4.25$\pm$ 0.95 &
  \textbf{2.67$\pm$ 0.95} &
  \textbf{2.00$\pm$ 0.87} &
  0.00$\pm$ 0.00 &
  0.00$\pm$ 0.00\\
    \cmidrule(l){2-13}
    & \multirow{6}{*}{100} 
    & 50 & \textbf{11.00$\pm$ 3.42} &
  \textbf{9.66$\pm$3.54} &
  \textbf{5.66$\pm$4.82} &
  \textbf{3.66$\pm$2.42} &
  2.00$\pm$1.64 &
  0.66$\pm$0.94 &
  0.00$\pm$0.00 &
  0.00$\pm$0.00 &
  0.00$\pm$0.00 & 0.00 $\pm$ 0.00\\
    & & 100 & 7.67$\pm$ 2.56 &
  5.50$\pm$1.89 &
  5.33$\pm$2.21 &
  \textbf{5.00$\pm$1.82} &
  \textbf{4.50$\pm$2.14} &
  \textbf{2.50$\pm$0.96} &
  0.17$\pm$0.37 &
  0.00$\pm$0.00 & 0.00$\pm$0.00 & 0.00$\pm$0.00\\
    & & 150 & 7.11$\pm$ 1.25 &
  5.55$\pm$0.63 &
  4.22$\pm$0.49 &
  3.55$\pm$0.83 &
  2.11$\pm$0.46 &
  1.78$\pm$0.31 &
  0.89$\pm$0.49 &
  0.00$\pm$0.00 &
  0.00$\pm$0.00 & 
  0.00$\pm$0.00\\
    & & 200 &  5.34$\pm$ 2.32 &
  5.58$\pm$0.34 &
  4.34$\pm$0.80 &
  2.92$\pm$0.34 &
  2.75$\pm$0.48 &
  1.58$\pm$0.18 &
  1.00$\pm$0.50 &
  0.00$\pm$0.00 &
  0.00$\pm$0.00 & 
  0.00$\pm$0.00\\
    & & 250 & 3.93$\pm$ 2.51 &
  4.53$\pm$1.32 &
  4.13$\pm$0.72 &
  2.87$\pm$0.43 &
  2.20$\pm$0.30 &
  1.67$\pm$0.36 &
  \textbf{1.06$\pm$0.30} &
  \textbf{0.13$\pm$0.19} &
  0.00$\pm$0.00 &
  0.00$\pm$0.00\\
    & & 300 & 5.44$\pm$ 0.46 &
  4.61$\pm$0.65 &
  3.67$\pm$0.54 &
  2.78$\pm$0.31 &
  2.17$\pm$0.17 &
  1.56$\pm$0.16 &
  1.00$\pm$0.35 &
  0.06$\pm$0.12 &
  0.00$\pm$0.00 &
  0.00$\pm$0.00\\
    \bottomrule
    \end{tabular}
    \end{adjustbox}
\end{table*}

\subsection{Dataset Information}
Table~\ref{tab:config} shows our experimental setups in details.

\textbf{Bank\footnote{\url{https://archive.ics.uci.edu/ml/datasets/Bank+Marketing}.}} dataset consists of 45,211 instances of 17 attributes (including both numeric attributes and categorical attributes) in total. Each of the instances is labeled to two classes, ``yes'' or ``no''. We partition the dataset to 35,211 for training and 10,000 for testing. We use SVM as the sub-classifier architecture. 

\textbf{Electricity\footnote{\url{https://datahub.io/machine-learning/electricity}.}} has 45,312 instances of 8 numeric attributes. Each of the instances is labeled to two classes, ``up'' or ``down''. We partition the dataset to 35,312 for training and 10,000 for testing. Following ~\cite{newensemble}, we use Bayes as the sub-classifier architecture for ensemble. 

\textbf{Fashion-MNIST\footnote{\url{https://github.com/zalandoresearch/fashion-mnist}.}}(FMNIST) consists of 60,000 training instances and 10,000 testing instances. Each is a 28×28 grayscale image, which is labeled to one of ten classes.  We follow the model architecture, Network in Network (NiN) ~\cite{min2014nin} used in~\cite{levine2021deep} as the sub-classifier architecture for ensemble. 

\textbf{CIFAR-10\footnote{\url{https://www.cs.toronto.edu/~kriz/cifar.html}.}} contains 60,000 images of size 32×32×3 pixels, 50,000 for training and 10,000 for testing. Each of the instances is labeled to one of ten classes. We follow ~\cite{levine2021deep} to use NiN with full data augmentation as the sub-classifier architecture for ensemble.

\section{More Experimental Results}

\subsection{More Ablation Studies}

\paragraph{Impact of Sub-Problem Scale $\Delta$}
\label{sec:delta}
Table~\ref{tab:delta} reports the impact of $\Delta$ on the collective robustness of hash bagging when the time is limited to $2$s per sample. The collective robustness is reported in the form of a percentage. Namely, $13.00 \pm 2.76$ means that, there are $13\%$ predictions are certifiably simultaneously robust in average, with the variance $2.76$, which is to compute over $6$ randomly selected $\Delta$-size sub-problems. We can empirically tell that when the poison budget $r$ is low, a large $\Delta$ might prevent us from computing the optimal objective value. When the poison budget $r$ is high, we can easily find an attack to corrupt a large portion of predictions for the small $\Delta$-size sub-testset, while finding a better solution for the large $\Delta$-size sub-problem at the meantime. As a result, the optimal $\Delta$ increases with the poison budget $r$ as shown in Table~\ref{tab:delta}.

\paragraph{Impact of Solving Time $t$}
\label{sec:t}
Fig.~\ref{fig:ablate_t} reports the impact of solving time $t$ on the certified collective robustness of hash bagging if we do not apply decomposition, on CIFAR-10. We observe that the collective robustness roughly increases linearly with $\log(t)$, which suggests that directly increasing the solving time is not an effective way to improve the certified collective robustness. 

\begin{figure}[tb!]
    \centering
    \includegraphics[width=0.8\linewidth]{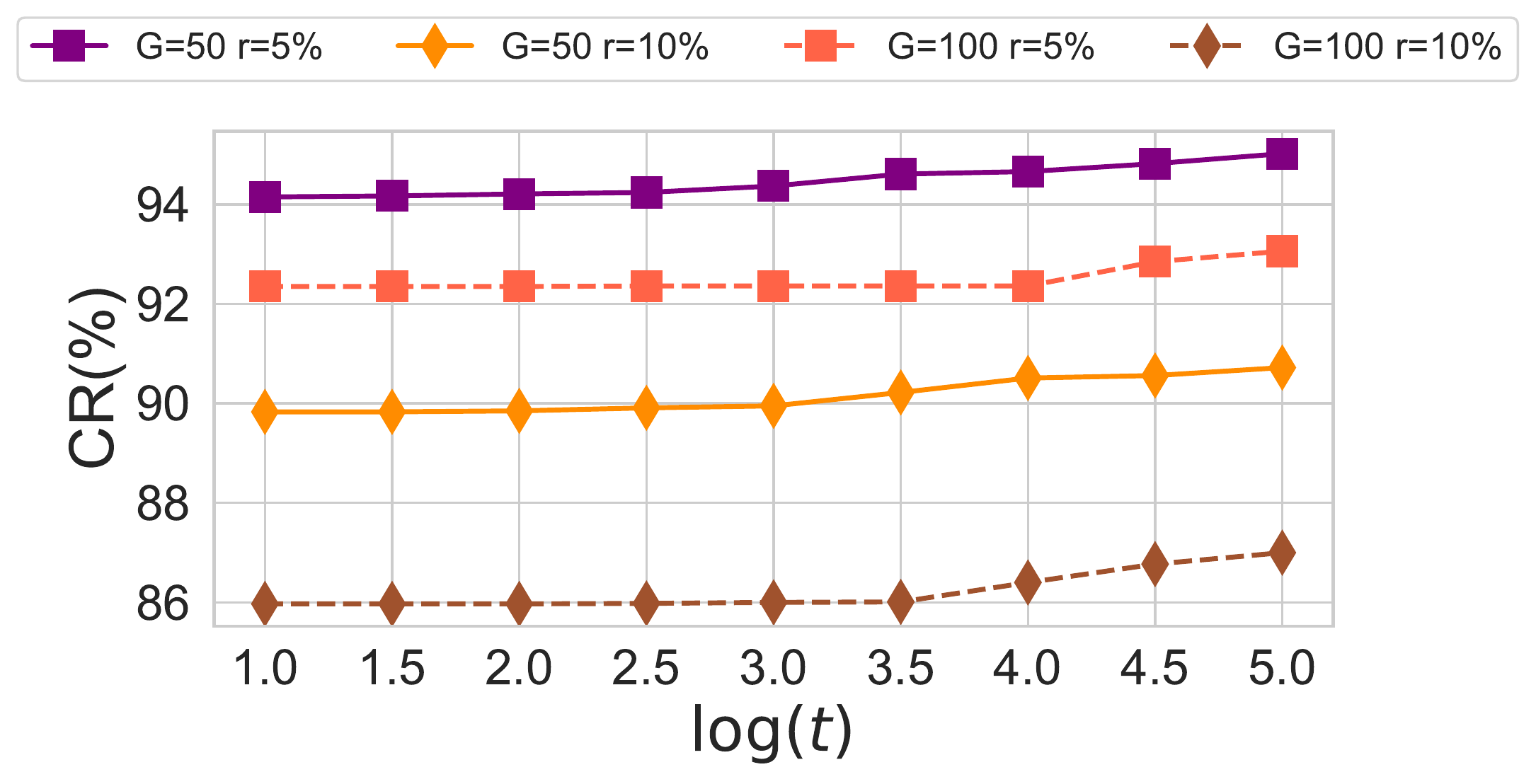}
    \caption{Impact of $t$ on CIFAR-10 ($K=N/G$).}
    \label{fig:ablate_t}
\end{figure}

\subsection{More Evaluation Results}
\label{sec:prob}
Table~\ref{tab:bank_more}, Table~\ref{tab:electricity_more}, Table~\ref{tab:fmnist_more}, Table~\ref{tab:cifar_more} report the detailed empirical results on Bank, Electricity, FMNIST, CIFAR-10, respectively. Specifically, we also compare to the probabilistic certification method~\cite{jia2021intrinsic}, where the confidence is set to be $0.999$ (the official implementation), and the number of sub-classifiers is set to be the same number used in the other certifications for the computational fairness. Note that the probabilistic certification cannot be applied to hash bagging, because it assumes that the sub-trainsets are randomly subsampled (with replacement) from the trainset. The empirical results demonstrate that, collective certification $>$ sample-wise certification $>$ probabilistic certification in terms of the certified collective robustness and the certified accuracy, on vanilla bagging. We observe that probabilistic certification performs poorly when $G$ is small, because the confidence interval estimation in probabilistic certification highly relies on the number of sub-classifiers.

\begin{table*}
\centering
\caption{(Bank: $M=10,000$; $K=5\%N$) Comparison on the certified collective robustness and the certified accuracy at $r=5\%, \ldots, 50\%$ ($\times G$), where $r= r_{\rm{ins}}+r_{\rm{del}}+2 r_{\rm{mod}}$ refers to the poison budget. \textbf{Sample-wise} and \textbf{Collective} refer to sample-wise and collective certification respectively. \textbf{Probabilistic} refers to the probabilistic certification proposed in~\cite{jia2021intrinsic}. \textbf{CR} and \textbf{CA} refer to the certified collective robustness and the certified accuracy respectively. \down{\alpha} denotes the relative gap between $M_{\rm{ATK}}$ guaranteed by the collective certification and $M_{\rm{ATK}}$ of the sample-wise certification. NaN: division by zero.} \label{tab:bank_more}
\begin{adjustbox}{width=\linewidth}
    \begin{tabular}{lll | lccccccccccc}
    \toprule
   G & Bagging & Certification & Metric & $5\%$ & $10\%$ & $15\%$ & $20\%$ & $25\%$ & $30\%$ & $35\%$ & $40\%$ & $45\%$ & $50\%$\\
    \midrule
   \multirow{24}{*}{20} 
   & \multirow{12}{*}{Vanilla} &  \multirow{4}{*}{Sample-wise} & CR & 3917 & 0 & 0 & 0 & 0 & 0 & 0 & 0 & 0 & 0 \\
    & & & $M_{\rm{ATK}}$ & 6083 & 10000 & 10000 & 10000 & 10000 & 10000 & 10000 & 10000 & 10000 & 10000 \\
    \cmidrule(l){4-14}
    & & &  CA & 3230 & 0 & 0 & 0 & 0 & 0 & 0 & 0 & 0 & 0 \\
    & & & $M_{\rm{ATK}}$ & 4790 & 8020 & 8020 & 8020 & 8020 & 8020 & 8020 & 8020 & 8020 & 8020\\
    \cmidrule(l){3-14}
    &  & \multirow{2}{*}{Probabilistic} & CR & 0 & 0 & 0 & 0 & 0 & 0 & 0 & 0 & 0 & 0\\
    \cmidrule(l){4-14}
    & & &  CA & 0 & 0 & 0 & 0 & 0 & 0 & 0 & 0 & 0 & 0\\
    \cmidrule(l){3-14}
    &  & \multirow{4}{*}{Collective} & CR & 4449 & 0 & 0 & 0 & 0 & 0 & 0 & 0 & 0 & 0\\
    & & & $M_{\rm{ATK}}$ & \down{8.74} & NaN & NaN & NaN & NaN & NaN & NaN & NaN & NaN & NaN \\
    \cmidrule(l){4-14}
    & & &  CA & 3588 & 0 & 0 & 0 & 0 & 0 & 0 & 0 & 0 & 0 \\
    & & & $M_{\rm{ATK}}$ & \down{7.47} & NaN & NaN & NaN & NaN & NaN & NaN & NaN & NaN & NaN\\
    \cmidrule(l){2-14}
    & \multirow{10}{*}{Hash} &  \multirow{4}{*}{Sample-wise} & CR & 9599 & 9009 & 7076 & 5778 & 4686 & 3772 & 2880 & 2157 & 1485 & 289 \\
    & & & $M_{\rm{ATK}}$ & 401 & 991 & 2924 & 4222 & 5314 & 6228 & 7120 & 7843 & 8515 & 9711\\
    \cmidrule(l){4-14}
    & & &  CA & 7788 & 7403 & 5755 & 4644 & 3817 & 3036 & 2283 & 1659 & 1106 & 284\\
    & & & $M_{\rm{ATK}}$ & 232 & 617 & 2265 & 3376 & 4203 & 4984 & 5737 & 6361 & 6914 & 7736 \\
    \cmidrule(l){3-14}
    &  & \multirow{4}{*}{Collective} & CR &  \textbf{9718} & \textbf{9209} & \textbf{7270} & \textbf{5968} & \textbf{4930} & \textbf{3915} & \textbf{3076} & \textbf{2294} & \textbf{1503} & \textbf{289}\\
    & & & $M_{\rm{ATK}}$ & \down{29.7} & \down{20.2} & \down{6.63} & \down{4.50} & \down{4.59} & \down{2.30} & \down{2.75} & \down{1.75} & \down{0.21} & \down{0.00}\\
    \cmidrule(l){4-14}
    & & &  CA & \textbf{7831} & \textbf{7464} & \textbf{5806} & \textbf{4685} & \textbf{3881} & \textbf{3091} & \textbf{2349} & \textbf{1689} & \textbf{1112} & \textbf{284}\\
    & & & $M_{\rm{ATK}}$ & \down{18.5} & \down{9.89} & \down{2.25} & \down{1.21} & \down{1.52} & \down{1.10} & \down{1.15} & \down{0.47} & \down{0.09} & \down{0.00}\\
    \hline
    \multirow{24}{*}{40} 
    & \multirow{14}{*}{Vanilla} &  \multirow{4}{*}{Sample-wise} & CR & 5250 & 1870 & 0 & 0 & 0 & 0 & 0 & 0 & 0 & 0\\
    & & & $M_{\rm{ATK}}$ & 4750 & 8130 & 10000 & 10000 & 10000 & 10000 & 10000 & 10000 & 10000 & 10000 \\
    \cmidrule(l){4-14}
    & & &  CA & 4160 & 1408 & 0 & 0 & 0 & 0 & 0 & 0 & 0 & 0 \\
    & & & $M_{\rm{ATK}}$ & 3913 & 6665 & 8073 & 8073 & 8073 & 8073 & 8073 & 8073 & 8073 & 8073\\
    \cmidrule(l){3-14}
    &  & \multirow{2}{*}{Probabilistic} & CR & 1509 & 1095 & 751 & 0 & 0 & 0 & 0 & 0 & 0 & 0\\
    \cmidrule(l){4-14}
    & & &  CA & 1049 & 705 & 407 & 0 & 0 & 0 & 0 & 0 & 0 & 0 \\
    \cmidrule(l){3-14}
    &  & \multirow{4}{*}{Collective} & CR & 5385 & 2166 & 0 & 0 & 0 & 0 & 0 & 0 & 0 & 0  \\
    & & & $M_{\rm{ATK}}$ & \down{2.84} & \down{3.64} & NaN & NaN & NaN & NaN & NaN & NaN & NaN & NaN \\
    \cmidrule(l){4-14}
    & & &  CA & 4190 & 1647 & 0 & 0 & 0 & 0 & 0 & 0 & 0 & 0 \\
    & & & $M_{\rm{ATK}}$ & \down{0.77} & \down{3.58} & NaN & NaN & NaN & NaN & NaN & NaN & NaN & NaN \\
    \cmidrule(l){2-14}
    & \multirow{10}{*}{Hash} &  \multirow{4}{*}{Sample-wise} & CR & 9638 & 9301 & 6401 & 5376 & 4626 & 4061 & 3398 & 2551 & 1497 & 115\\
    & & & $M_{\rm{ATK}}$ & 362 & 699 & 3599 & 4624 & 5374 & 5939 & 6602 & 7449 & 8503 & 9885\\
    \cmidrule(l){4-14}
    & & &  CA & 7881 & 7679	& 5198 & 4354 & 3718 & 3229 & 2693 & 1976 & 1037 & 114\\
    & & & $M_{\rm{ATK}}$ & 192 & 394 & 2875 & 3719 & 4355 & 4844 & 5380 & 6097 & 7036 & 7959\\
    \cmidrule(l){3-14}
    &  & \multirow{4}{*}{Collective} & CR &  \textbf{9762} & \textbf{9475} & \textbf{6603} & \textbf{5572} & \textbf{4796} & \textbf{4209} & \textbf{3562} & \textbf{2665} & \textbf{1523} & \textbf{115}\\
    & & & $M_{\rm{ATK}}$ & \down{34.2} & \down{24.9} & \down{5.61} & \down{4.24} & \down{3.16} &\down{2.49} & \down{2.48} & \down{1.53} & \down{0.30} & \down{0.00}\\
    \cmidrule(l){4-14}
    & & &  CA & \textbf{7914} & \textbf{7718} & \textbf{5236} & \textbf{4396}	& \textbf{3751} & \textbf{3257} &\textbf{2720} &\textbf{2010} &\textbf{1049} &\textbf{114}\\
    & & & $M_{\rm{ATK}}$ & \down{17.2} & \down{9.90} & \down{1.32} & \down{1.13} & \down{0.76} &\down{0.58} &\down{0.50} &\down{0.56} &\down{0.17} &\down{0.00}\\
    \bottomrule
    \end{tabular}
    \end{adjustbox}
\end{table*}

\begin{table*}
\centering
\caption{(Electricity: $M=10,000$; $K=5\%N$) Certified collective robustness and certified accuracy.} \label{tab:electricity_more}
\begin{adjustbox}{width=\linewidth}
    \begin{tabular}{lll | lccccccccccc}
    \toprule
   G & Bagging & Certification & Metric & $5\%$ & $10\%$ & $15\%$ & $20\%$ & $25\%$ & $30\%$ & $35\%$ & $40\%$ & $45\%$ & $50\%$\\
    \midrule
   \multirow{24}{*}{20} 
   & \multirow{12}{*}{Vanilla} &  \multirow{4}{*}{Sample-wise} & CR & 9230 & 0 & 0 & 0 & 0 & 0 & 0 & 0 & 0 & 0 \\
    & & & $M_{\rm{ATK}}$ & 770 & 10000 & 10000 & 10000 & 10000 & 10000 & 10000 & 10000 & 10000 & 10000 \\
    \cmidrule(l){4-14}
    & & &  CA & 7321 & 0 & 0 & 0 & 0 & 0 & 0 & 0 & 0 & 0 \\
    & & & $M_{\rm{ATK}}$ & 418 & 7739 & 7739 & 7739 & 7739 & 7739 & 7739 & 7739 & 7739 & 7739\\
    \cmidrule(l){3-14}
    &  & \multirow{2}{*}{Probabilistic} & CR & 0 & 0 & 0 & 0 & 0 & 0 & 0 & 0 & 0 & 0\\
    \cmidrule(l){4-14}
    & & &  CA & 0 & 0 & 0 & 0 & 0 & 0 & 0 & 0 & 0 & 0\\
    \cmidrule(l){3-14}
    &  & \multirow{4}{*}{Collective} & CR & 9348 & 0 & 0 & 0 & 0 & 0 & 0 & 0 & 0 & 0\\
    & & & $M_{\rm{ATK}}$ & \down{15.3} & NaN & NaN & NaN & NaN & NaN & NaN & NaN & NaN & NaN\\
    \cmidrule(l){4-14}
    & & &  CA & 7394 & 0 & 0 & 0 & 0 & 0 & 0 & 0 & 0 & 0\\
    & & & $M_{\rm{ATK}}$ & \down{17.5} & NaN & NaN & NaN & NaN & NaN & NaN & NaN & NaN & NaN\\
    \cmidrule(l){2-14}
    & \multirow{10}{*}{Hash} &  \multirow{4}{*}{Sample-wise} & CR & 9858 & 9738 & 9602 & 9461 & 9293 & 9121 & 8928 & 8656 & 8294 & 2597 \\
    & & & $M_{\rm{ATK}}$ & 142 & 262 & 398 & 539 & 707 & 879 & 1072 & 1344 & 1706 & 7403\\
    \cmidrule(l){4-14}
    & & &  CA & 7681 & 7621	& 7538	& 7462	& 7362 & 7266 & 7157 & 6998 & 6767 & 2198 \\
    & & & $M_{\rm{ATK}}$ & 58 & 118 & 201 & 277 & 377 & 473 & 582 & 741 & 972 & 5541\\
    \cmidrule(l){3-14}
    &  & \multirow{4}{*}{Collective} & CR &  \textbf{9915} & \textbf{9821} & \textbf{9726} & \textbf{9608} & \textbf{9402} & \textbf{9302} & \textbf{9122} & \textbf{8829} & \textbf{8449} & \textbf{2605}\\
    & & & $M_{\rm{ATK}}$ & \down{40.1} & \down{31.7} & \down{31.1} & \down{27.3} & \down{23.9} & \down{20.6} & \down{18.1} & \down{12.9} & \down{9.08} & \down{0.11}\\
    \cmidrule(l){4-14}
    & & &  CA & \textbf{7701} & \textbf{7663}	& \textbf{7608} & \textbf{7547} & \textbf{7458} & \textbf{7366} & \textbf{7265} & \textbf{7102} & \textbf{6856} & \textbf{2200}\\
    & & & $M_{\rm{ATK}}$ & \down{34.5} & \down{35.6} & \down{34.8} & \down{30.7} & \down{25.5} & \down{21.1} & \down{18.6} & \down{14.0} & \down{9.16} & \down{0.04}\\
    \hline
    \multirow{24}{*}{40} 
    & \multirow{14}{*}{Vanilla} &  \multirow{4}{*}{Sample-wise} & CR & 9482 & 8648 & 0 & 0 & 0 & 0 & 0 & 0 & 0 & 0  \\
    & & & $M_{\rm{ATK}}$ & 518 & 1352 & 10000 & 10000 & 10000 & 10000 & 10000 & 10000 & 10000 & 10000\\
    \cmidrule(l){4-14}
    & & &  CA & 7466 & 6986 & 0 & 0 & 0 & 0 & 0 & 0 & 0 & 0 \\
    & & & $M_{\rm{ATK}}$ & 284 & 764 & 7750 & 7750 & 7750 & 7750 & 7750 & 7750 & 7750 & 7750\\
    \cmidrule(l){3-14}
    &  & \multirow{2}{*}{Probabilistic} & CR & 8489 & 8248 & 7848 & 0 & 0 & 0 & 0 & 0 & 0 & 0\\
    \cmidrule(l){4-14}
    & & &  CA & 6892 & 6742 & 6506 & 0 & 0 & 0 & 0 & 0 & 0 & 0\\
    \cmidrule(l){3-14}
    &  & \multirow{4}{*}{Collective} & CR & 9566 & 8817 & 0 & 0 & 0 & 0 & 0 & 0 & 0 & 0 \\
    & & & $M_{\rm{ATK}}$ & \down{16.2} & \down{12.5} & NaN & NaN & NaN & NaN & NaN & NaN & NaN & NaN\\
    \cmidrule(l){4-14}
    & & &  CA & 7513 & 7086 & 0 & 0 & 0 & 0 & 0 & 0 & 0 & 0\\
    & & & $M_{\rm{ATK}}$ & \down{16.5} & \down{13.1} & NaN & NaN & NaN & NaN & NaN & NaN & NaN & NaN\\
    \cmidrule(l){2-14}
    & \multirow{10}{*}{Hash} &  \multirow{4}{*}{Sample-wise} & CR & 9873 & 9769 & 9636 & 9491 & 9366 & 9213 & 9022 & 8774 & 8434 & 2516 \\
    & & & $M_{\rm{ATK}}$ & 127 & 231 & 364 & 509 & 634 & 787 & 978 & 1226 & 1566 & 7484\\
    \cmidrule(l){4-14}
    & & &  CA & 7681 & 7625	& 7546 & 7459 & 7399 & 7316 & 7204 & 7065 & 6860 & 2142\\
    & & & $M_{\rm{ATK}}$ & 69 & 125 & 204 & 291 & 351 & 434 & 546 & 685 & 890 & 5608\\
    \cmidrule(l){3-14}
    &  & \multirow{4}{*}{Collective} & CR &  \textbf{9919} & \textbf{9842} & \textbf{9755} & \textbf{9601} & \textbf{9461} & \textbf{9312} & \textbf{9127} & \textbf{8883} & \textbf{8537} & \textbf{2524}\\
    & & & $M_{\rm{ATK}}$ & \down{36.2} & \down{31.6} & \down{32.7} & \down{21.6} & \down{15.0} &\down{12.6} & \down{10.7} & \down{8.89} & \down{6.58} & \down{0.11}\\
    \cmidrule(l){4-14}
    & & &  CA & \textbf{7700} & \textbf{7661}	& \textbf{7613}	& \textbf{7536}	& \textbf{7457} & \textbf{7378} & \textbf{7274} & \textbf{7140} & \textbf{6918} & \textbf{2145}\\
    & & & $M_{\rm{ATK}}$ & \down{27.5} & \down{28.8} & \down{32.8} & \down{26.5} & \down{16.5} & \down{14.3} & \down{12.8} & \down{10.9} & \down{6.52} & \down{0.05}\\
    \bottomrule
    \end{tabular}
    \end{adjustbox}
\end{table*}

\begin{table*}
\centering
\caption{(FMNIST: $M=10,000$; $K=N/G$) Certified collective robustness and certified accuracy. \textbf{Decomposition}: collective certification with decomposition.} \label{tab:fmnist_more}
\begin{adjustbox}{width=\linewidth}
    \begin{tabular}{lll | lccccccccccc}
    \toprule
   G & Bagging & Certification & Metric & $5\%$ & $10\%$ & $15\%$ & $20\%$ & $25\%$ & $30\%$ & $35\%$ & $40\%$ & $45\%$ & $50\%$\\
    \midrule
   \multirow{25}{*}{50} 
   & \multirow{12}{*}{Vanilla} &  \multirow{4}{*}{Sample-wise} & CR & 7432 & 0 & 0 & 0 & 0 & 0 & 0 & 0 & 0 & 0\\
    & & & $M_{\rm{ATK}}$ & 2568 & 10000 & 10000 & 10000 & 10000 & 10000 & 10000 & 10000 & 10000 & 10000 \\
    \cmidrule(l){4-14}
    & & &  CA & 7283 & 0 & 0 & 0 & 0 & 0 & 0 & 0 & 0 & 0\\
    & & & $M_{\rm{ATK}}$ & 1683 & 8966 & 8966 & 8966 & 8966 & 8966 & 8966 & 8966 & 8966 & 8966\\
    \cmidrule(l){3-14}
    &  & \multirow{2}{*}{Probabilistic} & CR & 6897 & 6633 & 5918 & 5214 & 0 & 0 & 0 & 0 & 0 & 0\\
    \cmidrule(l){4-14}
    & & &  CA & 6799 & 6557 & 5891 & 5201 & 0 & 0 & 0 & 0 & 0 & 0 \\
    \cmidrule(l){3-14}
    &  & \multirow{4}{*}{Collective} & CR & 7727 & 0 & 0 & 0 & 0 & 0 & 0 & 0 & 0 & 0 \\
    & & & $M_{\rm{ATK}}$ & \down{11.5} & NaN & NaN & NaN & NaN & NaN & NaN & NaN & NaN & NaN \\
    \cmidrule(l){4-14}
    & & &  CA & 7515 & 0 & 0 & 0 & 0 & 0 & 0 & 0 & 0 & 0\\
    & & & $M_{\rm{ATK}}$ & \down{13.8} & NaN & NaN & NaN & NaN & NaN & NaN & NaN & NaN & NaN\\
    \cmidrule(l){2-14}
    & \multirow{14}{*}{Hash} &  \multirow{4}{*}{Sample-wise} & CR & 9576 & 9307	& 8932 & 8671 & 8238 & 7929 & 7456 & 7051 & 6146 & 308 \\
    & & & $M_{\rm{ATK}}$ & 424 & 693 & 1068 & 1329 & 1762 & 2071 & 2544 & 2949 & 3854 & 9692\\
    \cmidrule(l){4-14}
    & & &  CA & 8768 & 8635	& 8408 & 8246 & 7943 & 7700 & 7295 & 6943 & 6107 & 308\\
    & & & $M_{\rm{ATK}}$ & 198 & 331 & 558 & 720 & 1023 & 1266 & 1671 & 2023 & 2859 & 8658\\
    \cmidrule(l){3-14}
    &  & \multirow{4}{*}{Collective} & CR & \textbf{9726}	& 9410 & 9024 & 8761 & 8329 & 8024 & 7525 & 7126 & 6277 & 329\\
    & & & $M_{\rm{ATK}}$ & \down{35.4} & \down{14.9} & \down{8.61} & \down{6.77} & \down{5.16} & \down{4.59} & \down{2.71} & \down{2.54} & \down{3.40} & \down{0.22}\\
    \cmidrule(l){4-14}
    & & &  CA & \textbf{8833} & \textbf{8719} & 8493 & 8327 & 8022 & 7780 & 7370 & 7020 & 6247 & 327\\
    & & & $M_{\rm{ATK}}$ & \down{32.8} & \down{25.4} & \down{15.2} & \down{11.2} & \down{7.72} & \down{6.32} & \down{4.49} & \down{3.81} & \down{4.90} & \down{0.22}\\
    \cmidrule(l){3-14}
    &  & \multirow{4}{*}{Decomposition} & CR & 9666 & \textbf{9472} & \textbf{9124} & \textbf{8887} & \textbf{8491} & \textbf{8196} & \textbf{7672} & \textbf{7287} &\textbf{6300} & \textbf{308}\\
    & & & $M_{\rm{ATK}}$ & \down{21.2} & \down{23.8} & \down{18.0} & \down{16.2} & \down{14.4} & \down{12.9} & \down{8.49} & \down{8.00} & \down{4.00} & \down{0.00}\\
    \cmidrule(l){4-14}
    & & &  CA & 8812 & 8716 & \textbf{8527} & \textbf{8385} & \textbf{8119} & \textbf{7892} & \textbf{7491} & \textbf{7150} & \textbf{6271} & \textbf{308}\\
    & & & $M_{\rm{ATK}}$ & \down{22.2} & \down{24.5} & \down{21.3} & \down{19.3} & \down{17.2} & \down{15.2} & \down{11.7} & \down{10.2} & \down{5.74} & \down{0.00}\\
    \hline
    \multirow{25}{*}{100} 
    & \multirow{14}{*}{Vanilla} &  \multirow{4}{*}{Sample-wise} & CR & 7548 & 0 & 0 & 0 & 0 & 0 & 0 & 0 & 0 & 0\\
    & & & $M_{\rm{ATK}}$ & 2452 & 10000 & 10000 & 10000 & 10000 & 10000 & 10000 & 10000 & 10000 & 10000\\
    \cmidrule(l){4-14}
    & & &  CA & 7321 & 0 & 0 & 0 & 0 & 0 & 0 & 0 & 0 & 0\\
    & & & $M_{\rm{ATK}}$ & 1443 & 8764 & 8764 & 8764 & 8764 & 8764 & 8764 & 8764 & 8764 & 8764\\
    \cmidrule(l){3-14}
    &  & \multirow{2}{*}{Probabilistic} & CR & 7169 & 6808 & 6518 & 6187 & 5805 & 5395 & 4876 & 3791 & 0 & 0\\
    \cmidrule(l){4-14}
    & & &  CA & 6958 & 6660 & 6405 & 6103 & 5746 & 5363 & 4855 & 3787 & 0 & 0\\
    \cmidrule(l){3-14}
    &  & \multirow{4}{*}{Collective} & CR & 8053 & 0 & 0 & 0 & 0 & 0 & 0 & 0 & 0 & 0 \\
    & & & $M_{\rm{ATK}}$ & \down{20.6} & NaN & NaN & NaN & NaN & NaN & NaN & NaN & NaN & NaN\\
    \cmidrule(l){4-14}
    & & &  CA & 7746 & 0 & 0 & 0 & 0 & 0 & 0 & 0 & 0 & 0\\
    & & & $M_{\rm{ATK}}$ & \down{29.4} & NaN & NaN & NaN & NaN & NaN & NaN & NaN & NaN & NaN\\
    \cmidrule(l){2-14}
    & \multirow{14}{*}{Hash} &  \multirow{4}{*}{Sample-wise} & CR & 9538 & 9080 & 8653 & 8249 & 7823 & 7419 & 6928 & 6377 & 5611 & 147\\
    & & & $M_{\rm{ATK}}$ & 462 & 920 & 1347 & 1751 & 2177 & 2581 & 3072 & 3623 & 4389 & 9853\\
    \cmidrule(l){4-14}
    & & &  CA & 8554 & 8316 & 8049 & 7797 & 7486 & 7173 & 6759 & 6279 & 5568 & 147\\
    & & & $M_{\rm{ATK}}$ & 210 & 448 & 715 & 967 & 1278 & 1591 & 2005 & 2485 & 3196 & 8617 \\
    \cmidrule(l){3-14}
    &  & \multirow{4}{*}{Collective} & CR & 9611 & 9167 & 8754	& 8344 & 7912 & 7483 & 6980 & 6405 & 5631 & 147\\
    & & & $M_{\rm{ATK}}$ & \down{15.8} & \down{9.46} & \down{7.50} & \down{5.42} & \down{4.09} &\down{2.48} & \down{1.69} & \down{0.77} & \down{0.46} & \down{0.00}\\
    \cmidrule(l){4-14}
    & & &  CA & \textbf{8610}	& 8375 & 8116 & 7857 & 7558 & 7242 & 6830 & 6323 & 5628 & 147\\
    & & & $M_{\rm{ATK}}$ & \down{26.7} & \down{13.2} & \down{9.37} & \down{6.20} & \down{5.63} & \down{4.34} & \down{3.54} & \down{1.77} & \down{1.88} & \down{0.00}\\
    \cmidrule(l){3-14}
    &  & \multirow{4}{*}{Decomposition} & CR & \textbf{9631} & \textbf{9232} & \textbf{8837} & \textbf{8450} & \textbf{8036} & \textbf{7617} & \textbf{7104} & \textbf{6513} & \textbf{5726} & \textbf{147}\\
    & & & $M_{\rm{ATK}}$ & \down{20.1} & \down{16.5} & \down{13.6} & \down{11.5} & \down{9.78} & \down{7.67} & \down{5.73} & \down{3.75} & \down{2.62} & \down{0.00}\\
    \cmidrule(l){4-14}
    & & &  CA & 8595 & \textbf{8407} & \textbf{8152} & \textbf{7917} & \textbf{7639} & \textbf{7334} & \textbf{6897} & \textbf{6404} & \textbf{5676} & \textbf{147}\\
    & & & $M_{\rm{ATK}}$ & \down{19.5} & \down{20.3} & \down{14.4} & \down{12.4} & \down{12.0} & \down{10.1} & \down{6.88} & \down{5.03} & \down{3.38} & \down{0.00}\\
    \bottomrule
    \end{tabular}
    \end{adjustbox}
\end{table*}

\begin{table*}
\centering
\caption{(CIFAR-10: $M=10,000$; $K=N/G$) Certified collective robustness and certified accuracy.} \label{tab:cifar_more}
\begin{adjustbox}{width=\linewidth}
    \begin{tabular}{lll | lccccccccccc}
    \toprule
   G & Bagging & Certification & Metric & $5\%$ & $10\%$ & $15\%$ & $20\%$ & $25\%$ & $30\%$ & $35\%$ & $40\%$ & $45\%$ & $50\%$\\
    \midrule
   \multirow{25}{*}{50} 
   & \multirow{12}{*}{Vanilla} &  \multirow{4}{*}{Sample-wise} & CR & 2737 & 0 & 0 & 0 & 0 & 0 & 0 & 0 & 0 & 0\\
    & & & $M_{\rm{ATK}}$ & 7263 & 10000 & 10000 & 10000 & 10000 & 10000 & 10000 & 10000 & 10000 & 10000 \\
    \cmidrule(l){4-14}
    & & &  CA & 2621 & 0 & 0 & 0 & 0 & 0 & 0 & 0 & 0 & 0\\
    & & & $M_{\rm{ATK}}$ & 4375 & 6996 & 6996 & 6996 & 6996 & 6996 & 6996 & 6996 & 6996 & 6996\\
    \cmidrule(l){3-14}
    &  & \multirow{2}{*}{Probabilistic} & CR & 1820 & 1529 & 876 & 490 & 0 & 0 & 0 & 0 & 0 & 0\\
    \cmidrule(l){4-14}
    & & &  CA & 1781 & 1501 & 867 & 488 & 0 & 0 & 0 & 0 & 0 & 0 \\
    \cmidrule(l){3-14}
    &  & \multirow{4}{*}{Collective} & CR & 3621 & 0 & 0 & 0 & 0 & 0 & 0 & 0 & 0 & 0 \\
    & & & $M_{\rm{ATK}}$ & \down{12.2} & NaN & NaN & NaN & NaN & NaN & NaN & NaN & NaN & NaN \\
    \cmidrule(l){4-14}
    & & &  CA & 3335 & 0 & 0 & 0 & 0 & 0 & 0 & 0 & 0 & 0\\
    & & & $M_{\rm{ATK}}$ & \down{16.3} & NaN & NaN & NaN & NaN & NaN & NaN & NaN & NaN & NaN\\
    \cmidrule(l){2-14}
    & \multirow{14}{*}{Hash} &  \multirow{4}{*}{Sample-wise} & CR & 8221 & 7268 & 6067 & 5320 & 4229 & 3573 & 2635 & 2019 & 978 & 39 \\
    & & & $M_{\rm{ATK}}$ & 1779 & 2732 & 3933 & 4680 & 5771 & 6427 & 7365 & 7981 & 9022 & 9961 \\
    \cmidrule(l){4-14}
    & & &  CA & 6305 & 5864 & 5186 & 4705 & 3884 & 3339 & 2520 & 1961 & 962 & 39\\
    & & & $M_{\rm{ATK}}$ & 691 & 1132 & 1810 & 2291 & 3112 & 3657 & 4476 & 5035 & 6034 & 6957 \\
    \cmidrule(l){3-14}
    &  & \multirow{4}{*}{Collective} & CR & 8393 & 7428 & 6204	& 5435 & 4290 & 3624 & 2664 & 2043 & 1034 & 40\\
    & & & $M_{\rm{ATK}}$ & \down{9.67} & \down{5.86} & \down{3.48} & \down{2.46} & \down{1.06} & \down{0.79} & \down{0.39} & \down{0.30} & \down{0.62} & \down{0.01}\\
    \cmidrule(l){4-14}
    & & &  CA & 6410	& 5985 & 5342	& 4848 & 4006 & 3434 & 2582 & 2007 & 1037 & 39\\
    & & & $M_{\rm{ATK}}$& \down{15.2} & \down{10.7} & \down{8.62} & \down{6.24} & \down{3.92} & \down{2.60} & \down{1.38} & \down{0.91} & \down{1.24} & \down{0.00}\\
    \cmidrule(l){3-14}
    &  & \multirow{4}{*}{Decomposition} & CR & \textbf{8694} & \textbf{7854} & \textbf{6686} & \textbf{5912} & \textbf{4826} & \textbf{4067} & \textbf{2995} & \textbf{2277} & \textbf{996} & \textbf{39}\\
    & & & $M_{\rm{ATK}}$ & \down{26.6} & \down{21.4} & \down{15.7} & \down{12.6} & \down{10.3} & \down{7.69} & \down{4.89} & \down{3.23} & \down{0.20} & \down{0.00}\\
    \cmidrule(l){4-14}
    & & &  CA & \textbf{6490} & \textbf{6147} & \textbf{5553} & \textbf{5113} & \textbf{4341} & \textbf{3733} & \textbf{2841} & \textbf{2234} & \textbf{1016} & \textbf{39}\\
    & & & $M_{\rm{ATK}}$ & \down{26.8} & \down{25.0} & \down{20.2} & \down{17.8} & \down{14.7} & \down{10.8} & \down{7.17} & \down{5.42} & \down{0.90} & \down{0.00}\\
    \hline
    \multirow{25}{*}{100} 
    & \multirow{14}{*}{Vanilla} &  \multirow{4}{*}{Sample-wise} & CR & 2621 & 0 & 0 & 0 & 0 & 0 & 0 & 0 & 0 & 0\\
    & & & $M_{\rm{ATK}}$ & 7379 & 10000 & 10000 & 10000 & 10000 & 10000 & 10000 & 10000 & 10000 & 10000\\
    \cmidrule(l){4-14}
    & & &  CA & 1876 & 0 & 0 & 0 & 0 & 0 & 0 & 0 & 0 & 0\\
    & & & $M_{\rm{ATK}}$ & 4378 & 6254 & 6254 & 6254 & 6254 & 6254 & 6254 & 6254 & 6254 & 6254\\
    \cmidrule(l){3-14}
    &  & \multirow{2}{*}{Probabilistic} & CR & 1473 & 1092 & 815 & 581 & 368 & 236 & 128 & 29 & 0 & 0\\
    \cmidrule(l){4-14}
    & & &  CA & 1395 & 1050 & 794 & 567 & 364 & 233 & 127 & 29 & 0 & 0\\
    \cmidrule(l){3-14}
    &  & \multirow{4}{*}{Collective} & CR & 2657 & 0 & 0 & 0 & 0 & 0 & 0 & 0 & 0 & 0 \\
    & & & $M_{\rm{ATK}}$ & \down{7.93} & NaN & NaN & NaN & NaN & NaN & NaN & NaN & NaN & NaN\\
    \cmidrule(l){4-14}
    & & &  CA & 2394 & 0 & 0 & 0 & 0 & 0 & 0 & 0 & 0 & 0\\
    & & & $M_{\rm{ATK}}$ & \down{11.8} & NaN & NaN & NaN & NaN & NaN & NaN & NaN & NaN & NaN\\
    \cmidrule(l){2-14}
    & \multirow{14}{*}{Hash} &  \multirow{4}{*}{Sample-wise} & CR & 7685 & 5962 & 4612 & 3504 & 2593 & 1833 & 1217 & 658 & 222 & 1\\
    & & & $M_{\rm{ATK}}$ & 2315 & 4038 & 5388 & 6496 & 7407 & 8167 & 8783 & 9342 & 9778 & 9999 \\
    \cmidrule(l){4-14}
    & & &  CA & 5396	& 4571 & 3787 & 3008 & 2315 & 1694 & 1166 & 634 & 218 & 1\\
    & & & $M_{\rm{ATK}}$ & 858 & 1683 & 2467 & 3246 & 3939 & 4560 & 5088 & 5620 & 6036 & 6253 \\
    \cmidrule(l){3-14}
    &  & \multirow{4}{*}{Collective} & CR & 7744 & 5974 & 4618	& 3509 & 2598 & 1838 & 1221 & 660 & 224 & 1 \\
    & & & $M_{\rm{ATK}}$ & \down{2.54} & \down{0.30} & \down{0.11} & \down{0.08} & \down{0.07} & \down{0.06} & \down{0.05} & \down{0.02} & \down{0.02} & \down{0.00}\\
    \cmidrule(l){4-14}
    & & &  CA & 5475 & 4650	& 3825	& 3030	& 2330 & 1710 & 1174 & 638 & 224 & 1\\
    & & & $M_{\rm{ATK}}$ & \down{9.21} & \down{4.69} & \down{1.54} & \down{0.68} & \down{0.38} & \down{0.35} & \down{0.16} & \down{0.07} & \down{0.10} & \down{0.00}\\
    \cmidrule(l){3-14}
    &  & \multirow{4}{*}{Decomposition} & CR & \textbf{8137} & \textbf{6469} & \textbf{5061} & \textbf{4035} & \textbf{2987} & \textbf{2032} & \textbf{1341} & \textbf{691} & \textbf{222} & \textbf{1}\\
    & & & $M_{\rm{ATK}}$ & \down{19.5} & \down{12.5} & \down{8.33} & \down{8.17} & \down{5.32} & \down{2.44} & \down{1.41} & \down{0.35} & \down{0.00} & \down{0.00}\\
    \cmidrule(l){4-14}
    & & &  CA & \textbf{5570} & \textbf{4841} & \textbf{4098} & \textbf{3338} & \textbf{2635} & \textbf{1928} & \textbf{1273} & \textbf{704} & \textbf{218} & \textbf{1}\\
    & & & $M_{\rm{ATK}}$ & \down{20.3} & \down{16.0} & \down{12.6} & \down{10.2} & \down{8.12} & \down{5.13} & \down{2.10} & \down{1.25} & \down{0.00} & \down{0.00}\\
    \bottomrule
    \end{tabular}
    \end{adjustbox}
\end{table*}

\section{Limitations}
As a defense against data poisoning, the main limitation of bagging is that we need to train multiple sub-classifiers to achieve a high certified robustness, because bagging actually exploits the majority voting based redundancy to trade for the robustness. Moreover, our collective certification does not take into account any property of the sub-classifiers, because our certification is agnostic towards the classifier architectures. Therefore, if we can specify the model architecture, we can further improve the certified robustness by exploiting the intrinsic property of the base model. Our collective certification needs to solve a costly NP-hard problem. A future direction is to find a collective robustness lower bound in a more effective way. 


\end{document}